\let\footnote=\endnote
 \def\BIBand{and}%
\newcommand{\textpar}[1]{\textup(#1\textup)}
\newcommand{\argdot}{\,\cdot\,}
\newcommand{\eqdef}{\xlongequal{\!\!\text{def}\!\!}}
\DeclareMathOperator{\Bernoulli}{\mathsf{Bernoulli}}
\DeclareMathOperator{\Binomial}{\mathsf{Binomial}}
\DeclareMathOperator{\CoaPPoA}{\mathsf{PPoA}^{\mathsf{coa}}}
\DeclareMathOperator{\CorPPoA}{\mathsf{PPoA}^{\mathsf{cor}}}
\DeclareMathOperator{\Expect}{\mathsf{E}}
\DeclareMathOperator{\expo}{e}
\DeclareMathOperator{\MPPoA}{\mathsf{PPoA}^{\mathsf{mix}}}
\DeclareMathOperator{\OPoA}{\mathsf{OPoA}}
\DeclareMathOperator{\PoA}{\mathsf{PoA}}
\DeclareMathOperator{\Poisson}{\mathsf{Poisson}}
\DeclareMathOperator{\PPoA}{\mathsf{PPoA}}
\DeclareMathOperator{\Prob}{\mathsf{P}}
\DeclareRobustCommand{\stirling}{\genfrac\{\}{0pt}{}}
\DeclarePairedDelimiter{\braces}{\{}{\}}
\DeclarePairedDelimiter{\bracks}{[}{]}
\DeclarePairedDelimiter{\parens}{(}{)}
\DeclarePairedDelimiter{\abs}{\lvert}{\rvert}
\DeclarePairedDelimiter{\ceil}{\lceil}{\rceil}
\DeclarePairedDelimiter{\floor}{\lfloor}{\rfloor}
\DeclarePairedDelimiterX{\braket}[2]{\langle}{\rangle}{#1,#2}
\DeclarePairedDelimiterX{\inner}[2]{\langle}{\rangle}{#1,#2}
\DeclarePairedDelimiterX{\setdef}[2]{\{}{\}}{#1:#2}
\DeclarePairedDelimiterXPP{\probof}[1]{\Prob}{(}{)}{}{%

#1}
\DeclarePairedDelimiterXPP{\exof}[1]{\Expect}{[}{]}{}{%

#1}
\theoremstyle{EX}
\newtheorem{case}{Case}
\newcommand{\acdef}[1]{\emph{\acl{#1}} \textpar{\acs{#1}}\acused{#1}}
\newacro{ACG}{atomic congestion game}
\newacro{BCG}{Bernoulli congestion game}
\newacro{PoA}{price of anarchy}
\newacro{PoS}{price of stability}
\newacro{OPoA}{ordinary price of anarchy}
\newacro{PPoA}{prophet price of anarchy}
\newacro{SC}{social cost}
\newacro{ESC}{expected social cost}
\newacro{SO}{social optimum}
\newacro{SOC}{socially optimal cost}
\newacro{OSO}{ordinary social optimum}
\newacro{OSOC}{ordinary socially optimal cost}
\newacro{PSO}{prophet social optimum}
\newacro{PSOC}{prophet socially optimal cost}
\newacro{NE}{Nash equilibrium}
\newacro{BNE}{Bayes-Nash equilibrium}
\newacro{PNE}{pure Nash equilibrium}
\newacro{WE}{Wardrop equilibrium}
\newacro{KKT}{Karush\textendash Kuhn\textendash Tucker}
\newacro{OD}[O/D]{origin-destination}
\newacro{BPR}{Bureau of Public Roads}
\begin{document}

\RUNAUTHOR{Cominetti et al.}

\RUNTITLE{Ordinary and Prophet Planning under Uncertainty in Bernoulli Congestion Games}

\TITLE{Ordinary and Prophet Planning under Uncertainty in Bernoulli Congestion Games}

\ARTICLEAUTHORS{%
\AUTHOR{Roberto Cominetti}

\AFF{Facultad de Ingenier\'ia y Ciencias, Universidad Adolfo Ib\'a\~nez, Santiago, Chile, \EMAIL{roberto.cominetti@uai.cl}}

\AUTHOR{Marco Scarsini}

\AFF{Dipartimento di Economia e Finanza, Luiss University, Roma, Italy, 
\EMAIL{marco.scarsini@luiss.it}}

\AUTHOR{Marc Schr\"oder}
\AFF{School of Business and Economics, Maastricht University, Maastricht, The Netherlands, 
\EMAIL{m.schroder@maastrichtuniversity.nl}}

\AUTHOR{Nicolas E. Stier-Moses}
\AFF{Central Applied Science, Meta, Menlo Park, USA, 
\EMAIL{nicostier@yahoo.com}}
}

\ABSTRACT{
We consider an atomic congestion game in which each player $i$  participates in the game with an exogenous and known probability $p_{i}\in(0,1]$, independently of everybody else, or stays out and incurs no cost.
We compute the parameterized \acl{PoA} to characterize the impact of demand uncertainty on the efficiency of selfish behavior, considering two different notions of a social planner. 
A prophet planner knows the realization of the random participation in the game; 
the ordinary planner does not.
As a consequence, a prophet planner can compute an adaptive social optimum that selects different solutions depending on the players that turn out to be active, whereas an ordinary planner faces the same uncertainty as the players and can only minimize the expected social cost according to the player participation distribution. 
For both type of planners we obtain tight bounds for the \acl{PoA},   by solving 
suitable optimization problems parameterized by the maximum participation probability $q=\max_{i} p_{i}$. 
In the case of affine costs, we find an analytic expression for the corresponding bounds.
}

\KEYWORDS{social planner; 
stochastic demands;
incomplete information game; 
routing game;
atomic congestion games; 
price of anarchy}

\MSCCLASS{Primary: 91A14; secondary: 91A06, 91A10, 91A43, 91B70}

\ORMSCLASS{Games/group decisions Noncooperative}




\FUNDING{
Roberto Cominetti's research was supported by FONDECYT [ Grants 1171501 and ANID/PIA/ACT192094].
Marco Scarsini gratefully acknowledges the support and hospitality of FONDECYT 1130564 and N\'ucleo Milenio ``Informaci\'on y Coordinaci\'on en Redes.''
His research was supported by the Istituto Nazionale di Alta Matematica [Grant GNAMPA CUP\_E53C22001930001], the Ministero dell'Universit\`a e della Ricerca [Grants PRIN 2017  ALGADIMAR and PRIN 2022 2022EKNE5K], and the European Union–Next Generation EU, component M4C2, investment 1.1 [Grant PRIN PNRR P2022XT8C8]. 
The views and opinions expressed, however, are solely those of the authors and do not necessarily reflect those of the European Union or the European Commission. 
Neither the European Union nor the European Commission can be held responsible for them.}

\maketitle

\section{Introduction}
\label{se:intro}

Atomic congestion games, introduced by \citet{Ros:IJGT1973}, have been extensively studied as a prominent class of potential games and have been the starting point of numerous modeling efforts that capture interactions mediated through marketplaces and networks. 
They are motivated by real-life situations in which individuals make decisions with the goal of optimizing their cost, latency, power, or other relevant metrics, and outcomes arise from players' choice of the various resources. 
The epitome of these congestion models has been road traffic routing. 
In this example, the players of the game represent commuters who choose a route that minimizes their traveling time. 
Because one commuter's realized time depends on choices made by other commuters, their behavior has been typically modeled using equilibrium concepts in the corresponding congestion game. We refer to \citet{Rou:AGT2007} and references therein.

Equilibria may be suboptimal, as already observed by \citet{Pig:Macmillan1920}, and planners have had to manage that. Suboptimality means that, compared to the equilibrium, there may be a set of different routing choices that reduces congestion and generates a lower total travel time for all commuters collectively. 
Starting with the seminal work of \citet{KouPap:STACS1999}, there has been an abundance of work that contrasts the equilibrium view to what a central planner could achieve under similar traffic conditions. 
For this thought exercise, planners are supposed to have the power to route traffic to improve conditions, even if the resulting traffic pattern is not at equilibrium. 
Although this routing solution may not be implementable in practice, it provides a benchmark to judge the efficiency of equilibria.

One important aspect in the study of congestion games is uncertainty. 
See, e.g.,  \citet{AshMonTen:AAMAS2006}  and the literature review in Section~\ref{suse:comparison-work}.
In practice, commuters need to make routing decisions under incomplete information about the traffic conditions in roads, the amount of traffic, the presence of accidents, etc. Uncertainty not only challenges commuters but also traffic planners, although the tools and information available to either of them could be different. 
Traditionally, planners have used manually collected data (e.g., traffic counts) and surveys (e.g., travel census) to gather information that is subsequently used to fit models and project current and future traffic conditions. 
More recently, technology has enabled access to real-time information which introduces further asymmetries between the planner and commuters. High-tech firms and telecommunication companies can more easily pinpoint where commuters are at any moment through GPS signals in phones and cars. 
The abundance of location information can be used by government planners, or by traffic routing platforms such as Apple Maps, Google Maps, and Waze, to estimate the number of commuters on the road and the routes they chose. 
This leads to a central understanding of current traffic conditions at any point in time. Going back to how planners can benchmark traffic conditions, a platform that possesses detailed traffic information, and particularly how many commuters are on the road on a particular day -- instead of just on an average day -- could use that information to compute an optimal traffic pattern customized to that particular day.

Focusing on whether planners may possess real-time information or not, in this paper we lay out a framework that captures the difference between less and more powerful planners that have foresight in the traffic conditions. 
We associate players, which represent commuters, to a random type that represents that they are either present or not, and we consider two different social planners whose goal is to route commuters optimally. 
The first, referred to as \emph{ordinary planner}, only has access to the distribution of commuters. 
Using the terminology of incomplete information games, the ordinary planner assigns an action -- i.e.,  a route -- to every player, before observing the realized type vector, i.e.,   which  commuters are actually present in the network at the moment. 
The second social planner, referred to as \emph{prophet planner}, knows who is present in the network at the time of routing and plans sets of routes for each contingency. 
In other words, a prophet planner can adapt the routing pattern to the set of present commuters in the system. 
Because the planner's decision is the solution of an optimization problem, the additional information available to a prophet planner positively contributes to optimizing the system and therefore it is a tighter benchmark for an equilibrium than that of an ordinary planner.
The term \emph{prophet} is used here to indicate a planner who can anticipate the future and react optimally. 
Similar concepts and terminology are used in on-line algorithms and prophet inequalities.

\label{page-prophet}

In this paper, we specialize this framework to congestion games with uncertain demand. 
Although our analysis applies to the whole class of congestion games, we will often use the language of routing games to provide a more vivid representation of the model. 
Starting from a congestion game with atomic players, we assume that each player may be present in the game with a given probability, or not participate otherwise.   
Similar ideas were used in \citet{meir2012congestion} and \citet{AngFotLia:AGT2013}.
Connecting this idea to traffic congestion in cities, commuters who travel often know by experience how many other commuters are typically on the road. 
However, the actual number of commuters to be encountered is uncertain and is likely to vary around its typical value. 
Such variability implies that for choosing the optimal route, players must anticipate the consequences of the uncertainty.

We model this situation as a game of incomplete information. The key assumption of our work is that each player $i\in\{1,\ldots,n\}$ participates in the game with probability $p_{i}$, independently of other players, and otherwise stays out of the game. 
The population of players (prior to the random entry) and their probabilities $p_{i}$ are common knowledge, but the actual realization of the uncertainty is unknown to players at the time when they make their strategic choices in the game.
If all players happen to have the same probabilities $p_{i}= q$, the number of active players is a binomial random variable with parameters $n$ and $q$. 
However, we allow for heterogeneous probabilities across players, which, given our motivation, is a natural setting to consider. 
We call games of this form \emph{Bernoulli congestion games}.    

An instance of the game can be thought as the representation of the network conditions at a particular (small) time interval of the day, so that the participation of each player at that time interval is stochastic and thus only a random subset of the population of players is actually present. 
The participation probabilities at any given time interval of the day are affected by factors such as weather, road conditions, failures of traffic equipment such as traffic lights, or events that influence the traffic patterns such as shows or sports.
Whereas such external random factors provide coordination signals we assume that, conditional on the realizations of those factors, the individual participation events are  stochastically independent.

\subsection{Our Contribution}
\label{suse:our-contribution}

We study the general framework of Bernoulli congestion games and their equilibria.  
Our goal is to shed light on how the uncertainty in player's participation impacts the efficiency of the system
 as measured by the \ac{PoA}, defined as the worst-case ratio between the expected social cost of an equilibrium and the minimum expected cost achievable by a central planner
\citep[see, e.g., ][]{KouPap:STACS1999,KouPap:CSR2009}.
We distinguish two types of central planners: ordinary and prophet. 
The former faces the same uncertainty as the players  and assign strategies before knowing which players will be present:
the players who are present use the assigned strategies whereas the strategies of the absent players are discarded. 
In contrast, the prophet planner can adapt the strategies to be used to the specific set of players participating in the game. 
These scenarios yield two corresponding measures of inefficiency: the \acfi{OPoA} and the \acfi{PPoA}.

The main conceptual contribution of this paper is the distinction between ordinary and prophet \ac{PoA}. 
Although both cases have been considered earlier, by \citet{meir2012congestion} and \citet{CorHoeSch:TRB2019} for the \ac{OPoA}
and by \citet{Rou:ACMTEC2015} for the \ac{PPoA}, a direct comparison was not explicit. 
These previous studies considered more general stochastic player participation including correlated cases, for
which the worst case bounds occur in the deterministic case $p_{i}\equiv 1$. 
Here, we restrict to the independent case -- and hence uncorrelated -- with heterogeneous Bernoulli players, and perform 
a finer analysis that investigates how the inefficiency varies as a function of the maximal participation probability $q=\max_{i}p_{i}$.

Concretely, for any given class of cost functions $\mathcal{C}$ we define $\OPoA(\mathcal{C},q)$ and $\PPoA(\mathcal{C},q)$ respectively, as the maximum values for the ordinary and prophet \ac{PoA} across all instances with resource costs in $\mathcal{C}$ and $p_{i}\le q$. We parameterize our computations  in terms of the participation probabilities $p_{i}$ of the various players, and we show that, for both measures, the largest values across all instances with $p_{i}\leq q$ occur when $p_{i}\equiv q$.
Thus, the worst-case analysis for Bernoulli congestion games can be reduced to the case of homogeneous probabilities. 
Moreover, because both measures consider the worst-case ratio across all instances with $p_{i}\le q$, the \ac{OPoA} and \ac{PPoA} turn out to be nondecreasing in $q$. 
In particular they are maximal at  $q=1$, where they coincide with the  deterministic bounds in the  previous studies mentioned earlier.  
However,  Example~\ref{ex:monotonicity} and  Remark~\ref{re:monotonicity} show that \emph{for a fixed game}  the \ac{OPoA} and \ac{PPoA} may decrease with respect to some specific $p_{i}$ and even with respect to the maximal probability $q=\max_{i}p_{i}$.

Corollary~\ref{co:util2} shows that the homogeneous case can be further reduced  to a deterministic game with adjusted expected costs. This allows us to exploit the tools of $(\lambda,\mu)$-smoothness (see Section~\ref{sususe:comparison-work-1} for the background and details about this concept): Theorem~\ref{th:different-p-i} provides tight bounds for the \ac{OPoA} in general Bernoulli congestion games with nondecreasing costs and heterogeneous players. 

A similar analysis is undertaken for the \acl{PPoA}. To this end we refine the smoothness concept into $(\lambda,\mu,q)$-smoothness, which yields upper bounds for \ac{PPoA}. 
As in the ordinary setting, we prove in Theorem~\ref{th:lambda-mu-pr} that the optimal bounds derived from this refined smoothness framework are tight.
We note however that, in contrast with the standard smoothness framework, the games we use to show tightness are not routing games but general congestion games. 

It is important to highlight that our results in Sections~\ref{se:mod} and \ref{se:homogeneous-prob}, and in particular our tight bounds 
$\OPoA(\mathcal{C},q)$  and $\PPoA(\mathcal{C},q)$, are valid for general Bernoulli congestion games with nondecreasing costs and heterogeneous player's probabilities.

To illustrate the general results with a concrete class of cost functions, we perform a detailed study of a common framework considered in the literature: the class $\mathcal{C}_{\mathsf{aff}}$ of nondecreasing and nonnegative \emph{affine costs}. 
Theorem~\ref{th:upper} provides an analytic expression for the tight worst-case bounds $\OPoA(\mathcal{C}_{\mathsf{aff}},q)$ as a function of $q\in(0,1]$,  as illustrated by the green curve in Figure~\ref{fi:bounds}.
This bound is continuous and increasing, and exhibits three distinct regions with kinks at $\bar q_{0}=1/4$ and $\bar{q}_{1}\sim 0.3774$.
For $q=1$, we recover the $5/2$-bound 
of \citet{ChrKou:STOC2005} for deterministic atomic congestion games with affine costs, 
whereas for all $q<1$ we get a smaller bound.
A surprising feature here is that for $q\le 1/4$ we have a constant tight bound of $4/3$ -- which coincides with the \ac{PoA} for nonatomic games with affine costs, as shown by 
\citet{RouTar:JACM2002} -- independently of the structure of the congestion game and for any number of players. 
Similarly, Proposition~\ref{pr:lm-linear} and Theorem~\ref{th:Tight-PPoA-Affine}
provide the tight worst-case bound $\PPoA(\mathcal{C}_{\mathsf{aff}},q)$ for the prophet \acl{PoA},
expressed as the lower envelope of a countable family of functions as shown by the blue curve in Figure~\ref{fi:bounds}. This bound is continuous and increasing, and converges to $2$ for $q\to 0$ and to $5/2$ when $q\to 1$. 

\begin{figure}[ht]
\centering
\includegraphics[scale=0.75]{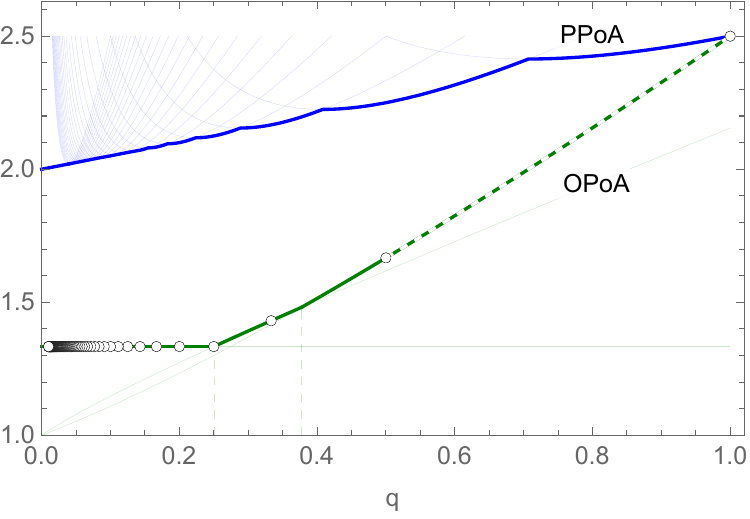}
\caption{\label{fi:bounds} Tight bounds for the \acs{OPoA} and \acs{PPoA} (thick lines in the figure) for Bernoulli congestion games with affine costs as a function of $q=\max_{i}p_{i}$. 
The dots corresponding to $q=1/n$ for $n\in\mathbb{N}_{+}$ in the \acs{OPoA} curve, as well as the dashed segment corresponding to $q\ge 1/2$, mathematically coincide with previously known bounds for different models (see Section~\ref{suse:comparison-work}). 
Our results fill in the gaps for that curve and make its regimes explicit. The vertical dashed lines depict the breakpoints where the \acs{OPoA} changes regimes. The dot at $1/3$ does not coincide with the kink of the curve at $\bar{q}_{1}\sim 0.3774$ but they coincide at $\bar{q}_{0}=1/4$.
}
\end{figure}

\subsection{Related Work}
\label{suse:comparison-work}
In this section, we frame our contributions in relation to the closest work in the literature. This provides the necessary context and understanding of our assumptions and results. 

\subsubsection{\Acl{OPoA} for nondecreasing costs and heterogeneous players.}
\label{sususe:comparison-work-1}
Our analysis of the \acl{OPoA} is most closely related to the work of \cite{ChrKou:STOC2005} who computed the \acl{PoA} for atomic unsplittable congestion games with affine costs by finding two coefficients for which an inequality for equilibria and optima holds. 
Related ideas appeared around the same time in \cite{HarVeg:P4CCAAN2007} and \cite{AlaDumGaiMonSch:SIAMJC2011} for a variety of settings. 
Collectively this set of techniques became known as the $(\lambda,\mu)$-smoothness framework, as coined and systematized by \citet{Rou:JACM2015} in a work that surveyed past uses and extended the framework to congestion games with general nondecreasing costs and other classes of games. The bounds obtained by this technique were shown to hold not only for pure equilibria, but also for mixed, correlated, and coarse-correlated equilibria.

As discussed previously, our \ac{OPoA} bounds result from an application of $(\lambda,\mu)$-smoothness. This is supported by Corollary~\ref{co:util2} and Theorem~\ref{th:different-p-i} which combined show that the worst-case for the \ac{OPoA} occurs with homogeneous probabilities and boils down to study a \emph{deterministic} game with expected costs. This allows to leverage the tools  of $(\lambda,\mu)$-smoothness to derive tight bounds
$\OPoA(\mathcal{C},q)$
for general Bernoulli congestion games with
nondecreasing costs and heterogeneous players.

Among previous work that studied specifically the \ac{OPoA} for congestion games with random players, \cite{meir2012congestion} considered a model where the stochastic player participation can be correlated, whereas resource costs can be  increasing or decreasing (for a similar model with both player and resource failures see \citealt{li2017congestion}). 
Most of their results concerned the case of Bernoulli games with homogeneous 
probabilities $p_{i}\equiv q$. 
They showed  how uncertainty eliminates
bad equilibria when  $q\approx 1$, and established 
 the lower semi-continuity of the \ac{OPoA} at $q=1$, and full continuity 
for routing games on parallel networks. 
For  fixed $q<1$ they also showed that \ac{PoA} can grow with the number of players.
Our results complement this by showing that the largest \ac{OPoA} across all congestion games with heterogeneous probabilities  $p_{i}\le q$
occurs in the homogeneous case $p_{i}\equiv q$, and that tight bounds can be established using $(\lambda,\mu)$-smoothness. Moreover, for affine costs we compute explicitly  the tight bounds 
$\OPoA(\mathcal{C}_{\mathsf{aff}},q)$ as a function of $q\in (0,1]$. 

\citet{CorHoeSch:TRB2019} also considered atomic games with stochastic player participation and arbitrary correlations, and proved that $(\lambda,\mu)$-smoothness bounds extend to Bayes-Nash equilibria of the incomplete information games. However, when applied to Bernoulli games 
for a fixed maximal participation probability $q\in(0,1)$, 
their bounds are not tight. 

\subsubsection{Prophet \ac{PoA} for nondecreasing costs and heterogeneous players.}
\label{sususe:comparison-work-2}
The closest previous result for the \acl{PPoA} is contained in \citet{Rou:ACMTEC2015} who showed that
for incomplete information games where player types are independent, the $(\lambda,\mu)$-smoothness bounds 
remain valid for the \ac{PPoA} considering all \aclp{BNE}. 
This yields bounds that are robust and insensitive with respect to the underlying distribution of types.
In particular, for congestion games with affine costs this yields the uniform bound $\PPoA\leq 5/2$ for the \acl{PPoA}. 

Here we introduce a weaker notion of $(\lambda,\mu,q)$-smoothness, specifically tailored to deal with Bernoulli games with 
$p_{i}\le q$, which provides finer parameterized bounds $\PPoA(\mathcal{C},q)$
that are sensitive to the maximum participation probability $q=\max_{i}p_{i}$. We show  that these bounds are tight (see Theorem~\ref{th:lambda-mu-pr}), and we compute them explicitly for the class of affine costs (see Theorem~\ref{th:Tight-PPoA-Affine}).

\subsubsection{Tight Bounds for Affine Cost Functions}
\label{sususe:comparison-work-3}
Our tight bounds for the \ac{OPoA} and \ac{PPoA} can be computed explicitly but laboriously for the simplest class of nonnegative and nondecreasing affine costs, which is one of the most common settings considered in the literature. 
For this class, Theorem~\ref{th:upper} shows that $\OPoA(\mathcal{C}_{\mathsf{aff}},q)$ as a function of $q\in(0,1]$ exhibits three distinct regions with kinks at $\bar q_{0}=1/4$ and at $\bar{q}_{1}\sim 0.3774$, the real root of $8 q^{3}+4q^{2}=1$ (see Figure~\ref{fi:bounds}).
In the lower region $0< q\leq\bar q_{0}$, $\OPoA(\mathcal{C}_{\mathsf{aff}},q)$ is constant and equal to $4/3$, which coincides with the \acl{PoA} for nonatomic congestion games with affine costs. 
In the middle region $\bar q_{0}\leq q\leq\bar q_{1}$, we have $\OPoA(\mathcal{C}_{\mathsf{aff}},q)=(1+q+\sqrt{q(2+p)})/(1-q+\sqrt{q(2+q)})$,
whereas in the upper region $\bar q_{1}\le q\leq 1$, the $\OPoA(\mathcal{C}_{\mathsf{aff}},q)=1+q+q^{2}/(1+q)$.
For $q=1$ we recover the known bound  of $5/2$ for deterministic atomic congestion games with affine costs, whereas for all $q<1$ we get a smaller bound. 
The computation of these tight bounds is a non-trivial application of  $(\lambda,\mu)$-smoothness, especially in the intermediate range $\bar q_{0}\leq q\leq\bar q_{1}$ which is the most challenging from a technical viewpoint.

Proposition~\ref{pr:lm-linear} and Theorem~\ref{th:Tight-PPoA-Affine} exploit the 
alternative $(\lambda,\mu,q)$-smoothness to compute the worst-case bounds $\PPoA(\mathcal{C}_{\mathsf{aff}},q)$ for the \acl{PPoA}.
The resulting bound is again tight and is given by the lower envelope of the functions $\{(\ell(\ell+1)pq^2+2\ell q+1)/(2\ell q)\mid \ell\in\mathbb{N}\setminus\{0\}\}$ (see Figure~\ref{fi:bounds}). 
The bound converges to $2$ when $q\to 0$ and to $5/2$ when $q\to 1$. 

\label{page-parametrization}
Some parts of our $\OPoA(\mathcal{C}_{\mathsf{aff}},q)$ bounds for affine costs coincide with previous bounds found by \citet{PilNikSha:TEAC2016}, \citet{BilMosVin:MOR2023} and  \citet{KleSch:TCS2019}, although for different models that -- somehow surprisingly -- turn out to have the same mathematical structure as ours. 
Specifically, \citet{PilNikSha:TEAC2016} considered an atomic congestion game where players using a given resource are randomly ordered and their costs depend on their position in this order.
For risk-neutral players, the model exhibits the same structure as ours with $p_{i}\equiv 1/2$.
\citet{BilMosVin:MOR2023}  considered a model with link failures where players select robust strategies that comprise a fixed number $\rho$ of 
edge-disjoint routes, and established tight bounds that coincide with ours when 
$p_{i}\equiv q=1/\rho$ with integer $\rho\in\{1,2,\ldots\}$.
In this model it only makes sense to consider discrete values of $q$'s of the form $1/\rho$, which provides little insight for what happens with continuous $q$,  specially in the range $1/4< q< 1/2$, where the function $\OPoA(\mathcal{C}_{\mathsf{aff}},q)$ has two kinks at $\bar q_{0}$ and $\bar{q}_{1}$. 
In a different direction, \citet{KleSch:TCS2019} studied routing games with affine costs and with two additional parameters $\rho$ and $\sigma$ that affect  the costs perceived by the players and central planner. 
Our bounds for affine costs and homogeneous players 
($p_{i}\equiv q$) are formally equivalent to the bounds in \citet{KleSch:TCS2019} with $\rho=\sigma= q$,
although the models and parameters have completely different meanings. 
Moreover, the results in \citet{KleSch:TCS2019}  only cover the interval $q\in[1/2,1]$, whereas we provide tight bounds in the full interval $q\in(0,1]$.
Incidentally, we note that the analytic expression 
of the bound for $q\in[1/2,1]$ remains valid and tight on the larger interval $q\in[\bar q_{1},1]$.     

Despite the fact that some parts of the \ac{OPoA} curve coincide with the expressions in 
\citet{PilNikSha:TEAC2016}, \citet{BilMosVin:MOR2023}, and \citet{KleSch:TCS2019},
we emphasize that the results are conceptually  different: our curve
represents the maximum \acl{PoA} across all Bernoulli games with $p_{i}\le q$, whereas
these previous papers consider neither Bernoulli games nor random participation of players, but rather games with homogeneous players whose costs and/or strategies depend on some uniform parameters which
formally end up playing a similar role as the maximum participation probability $q$.
However, \emph{a priori} it is far from obvious that those previous results
bear any connection with Bernoulli games, and it is the analysis that reveals these formal and partial coincidences.

\subsubsection{Other prior related work}
\label{sususe:comparison-work-4}

The inefficiency of equilibria in congestion games has been studied since the introduction of these games, and more
extensively since the late 1990's   after the work of \citet{KouPap:STACS1999,KouPap:CSR2009}, by means of worst-case bounds for the \acdef{PoA}.  These bounds differ substantially for atomic and nonatomic congestion games. 

For nonatomic games, where the demand is perfectly divisible, the equilibrium concept is due to \citet{War:PICE1952} and has been thoroughly studied starting with  \citet{BecMcGWin:Yale1956}.
Tight bounds for the \ac{PoA} in these games were obtained  for
specific classes of cost functions by \citet{RouTar:JACM2002,RouTar:GEB2004,Rou:JCSS2003,Rou:ACMSIAM2005} and \citet{CorSchSti:MOR2004,CorSchSti:GEB2008}.  

We refer to \citet{Rou:AGT2007,RouTar:AGT2007,CorSti:WEORMS2011} for surveys of these early results.
For atomic congestion games, both in its 
weighted and unweighted versions, the \ac{PoA} was examined in \citet{ChrKou:STOC2005}, \citet{DumGai:WINE2006}, \citet{HarVeg:P4CCAAN2007}, \citet{SurTotZho:A2007}, and \citet{AweAzaEps:SIAMJC2013}. \citet{AlaDumGaiMonSch:SIAMJC2011} provided exact bounds for the \ac{PoA} when costs are polynomial functions.
Inspired by these results and techniques, \citet{Rou:JACM2015} introduced the unifying terminology of $(\lambda,\mu)$-smoothness, and showed that the bounds derived in this manner are not only valid for pure equilibria, but also for mixed, correlated, and coarse-correlated equilibria. 
 
Various studies made different calls about what aspect to highlight, fixing some input parameters and taking the worst-case \ac{PoA} among chosen families of instances. Some results were given parametrically as a function of some scalar quantity, to shed light on how the efficiency of equilibria depends on this scalar. For instance,
\cite{CorSchSti:GEB2008} computed the parameterized \ac{PoA} as a function of the level of congestion in a game, in order to explain why lightly congested networks have low \ac{PoA}.
In a different direction, the impact of altruistic behavior of players in atomic congestion games with affine costs was investigated by \citet{CarKakKanKyrPap:TGC2010} for homogeneous players, and by \citet{ChedeKKemSch:TEAC2014} when each player has a different altruism coefficient. Although the latter deals with heterogeneous players, which bears some similarity with our model with heterogeneous probabilities, a major difference is that the 
social cost in these studies does not depend on the altruism parameters, whereas in our case both the ordinary and prophet optimal costs are affected by the stochastic player participation, the same as for the equilibrium.
As a consequence, the models in these papers  and the corresponding \ac{PoA} bounds differ substantially from our bound in Theorem~\ref{th:upper}.
Other recent papers studied the behavior of the \ac{PoA} in nonatomic routing games as a function of the total traffic demand. Among these, 
\citet{ColComMerSca:WINE2017}, \citet{ColComSca:TOCS2019}, and \citet{ColComMerSca:OR2020}
studied the asymptotic behavior of the \ac{PoA} in light and heavy traffic regimes, and showed that, under mild conditions, full efficiency is achieved in both limit cases. Similar results for congestion games in heavy traffic were obtained by \citet{WuMohChenXu:OR2021}.
A non-asymptotic analysis of the behavior of the \ac{PoA} as a function of the demand can be found in \citet{ComDosSca:MPB2024} and \citet{WuMoh:MOR2022}.
These papers studied the behavior of the \ac{PoA} for a given game as a function of the traffic demands.
By contrast, in the present paper we provide tight worst-case bounds for the \ac{PoA} for \acp{BCG}
as a function of the maximal participation probability $q\in (0,1]$.

Centrally to the motivation of this paper and as mentioned in the introduction,
attention has recently turned to incomplete information settings.
\citet{GaiMomTie:TCS2008} studied the \ac{PoA} for congestion games on a network with capacitated parallel edges, where players are of different types -- the type of each agent being the traffic that the agent moves -- and types are private information. \citet{AshMonTen:AAMAS2006} and \citet{UkkWal:TL2010} considered network games in which agents have incomplete information about the demand. \citet{OrdSti:TS2010}, \citet{NikSti:OR2014}, \citet{ComTor:MOR2016}, and \citet{LiaNikSti:MOR2019} studied the consequences of risk aversion on models with stochastic cost functions.
\citet{AngFotLia:AGT2013} studied a routing game over parallel links with Bernoulli players who are risk-averse and minimize the value-at-risk of the travel times, 
showing that for affine costs the \ac{PoA} is never larger than the number $n$ of players.
\citet{PenPolTen:GEB2009} and \citet{PenPolTen:DAM2011} dealt with congestion games with failures. 
\citet{WanDoaChe:TRB2014} considered nonatomic routing games with random demand and examined the behavior of the \ac{PoA} as a function of the demand distribution. 
\cite{Wrede2019} considered the same model as ours,  restricting the attention to games with a small number of players and giving a precise characterization of the ordinary price of anarchy for two players. 

\citet{AceMakMalOzd:OR2018,WuAmiOzd:OR2021} studied the impact of information on nonatomic congestion games.
\citet{MacScaTom:GEB2022,MacScaTom:GEB2023} considered learning in repeated nonatomic routing games where the costs functions are unknown and the demands are stochastic. \citet{GriHoeKliKog:ec2022} considered congestion games where a benevolent  planner (e.g., a mobility service such as Google Maps, Apple maps or Waze) has perfect information on the realization of an unknown state of nature, and can use this informational advantage to improve the efficiency of the equilibrium behavior by sending a public signal.
\citet{ZhuSav:IEEETCNS2022} studied a nonatomic congestion model where a planner can affect the agents' behavior via either public or private recommendations. 
These various papers on incomplete information games made different calls on the power of the social planner. 
For example, \citet{GaiMomTie:TCS2008}, \citet{WanDoaChe:TRB2014}, \citet{AngFotLia:AGT2013} and \citet{CorHoeSch:TRB2019} considered ordinary planners, whereas \citet{syrgkanis2015price} studied prophet planners.

\subsection{Organization of the Paper}
\label{se:organization}

The general setting of Bernoulli congestion games is formally described in Section~\ref{se:mod}.
In Section~\ref{se:homogeneous-prob} we present our tight worst-case bounds for the ordinary and \acl{PPoA}, for general nondecreasing cost functions and heterogeneous players.
Next, Section~\ref{se:poa_and_pos} computes explicitly these bounds for the class of affine costs.
Section~\ref{se:concl} contains conclusions and open problems. 
All missing proofs can be found in Section~EC.1 of the supplementary material. For convenience, Section~EC.5 in the supplementary material contains a list of symbols.

\section{Bernoulli Congestion Games and the Price of Anarchy}
\label{se:mod}

\subsection{Atomic Congestion Games}
Consider a finite set of resources $\mathcal{E}$
and a finite set of players $\mathcal{N}=\braces{1,\dots,n}$ where 
each player $i\in\mathcal{N}$ has a set of feasible strategies $\mathcal{S}_{i}\subseteq 2^{\mathcal{E}}$.
Given a strategy profile $\boldsymbol{s}\in\mathcal{S}\eqdef \times_{i\in\mathcal{N}} \mathcal{S}_{i}$, 
the cost for player $i$ is given by
\begin{equation}
\label{eq:Ci}
C_{i}(\boldsymbol{s})\eqdef\sum_{e\in s_{i}}c_{e}(n_{e}(\boldsymbol{s}))\, ,
\end{equation}
where $n_{e}(\boldsymbol{s})$ is the \emph{load} of resource $e\in\mathcal{E}$ defined as the number of players using that resource
\begin{equation}\label{eq:N-e}
n_{e}(\boldsymbol{s})\eqdef\sum_{j\in\mathcal{N}}\mathds{1}_{\{e\in s_{j}\}},
\end{equation}
and $c_{e}:\mathbb{N}\to[0,\infty)$ is a nondecreasing cost function of the resource $e$, with $c_{e}(x_{e})$ the cost experienced by each player using resource $e$ when the load is $x_{e}$. 
The tuple $\Gamma=\parens*{\mathcal{N},\mathcal{E},\mathcal{S},(c_{e})_{e\in\mathcal{E}}}$ defines an \acdef{ACG}. 

A \acdef{PNE} is a  strategy profile $\hat{\boldsymbol{s}}\in\mathcal{S}$ such that no player $i\in \mathcal{N}$ can benefit by unilaterally deviating from $\hat{s}_{i}$, that is, for each player $i\in\mathcal{N}$ and every $s_{i}\in\mathcal{S}_{i}$, we have
\begin{equation}
\label{eq:Nash}
C_{i}(\hat{\boldsymbol{s}})\leq C_{i}(s_{i},\hat{\boldsymbol{s}}_{-i}),
\end{equation}
where $\hat{\boldsymbol{s}}_{-i}$ is the strategy profile of all players except $i$.
The set of all \acp{PNE} is denoted by $\mathsf{NE}(\Gamma)$. 
\cite{Ros:IJGT1973} showed that every \acl{ACG} $\Gamma$ is an exact potential game and, as a consequence, its set $\mathsf{NE}(\Gamma)$ of pure Nash equilibria is nonempty. 

The \acdef{SC} of a strategy profile $\boldsymbol{s}\in\mathcal{S}$ is defined as the sum of all players' costs
\begin{equation}
\label{eq:SC}
C(\boldsymbol{s})\eqdef\sum_{i\in \mathcal{N}} C_{i}(\boldsymbol{s})=\sum_{e\in \mathcal{E}}n_{e}(\boldsymbol{s})\, c_{e}(n_{e}(\boldsymbol{s})),
\end{equation}
and  a \acdef{SO} is any profile $\boldsymbol{s}^{\ast}$ that minimizes this social cost 
\begin{equation}
\label{eq:optimum-social-cost}    
C(\boldsymbol{s}^{\ast})=\min_{\boldsymbol{s}\in\mathcal{S}}C(\boldsymbol{s}).
\end{equation}
The \acdef{PoA} is defined as
\begin{equation}
\label{eq:PoA-game}
\PoA(\Gamma)\eqdef\max_{\boldsymbol{s}\in\mathsf{NE}(\Gamma)}\frac{C(\boldsymbol{s})}{C(\boldsymbol{s}^{\ast})}.
\end{equation}
If $C(\boldsymbol{s}^{\ast})=0$, then $C(\boldsymbol{s})=0$ for all $\boldsymbol{s}\in\mathsf{NE}(\Gamma)$, and so in that case we artificially set $\PoA(\Gamma)=1$.

For a family $\mathcal{C}$ of  nonnegative and nondecreasing cost functions, we call $\mathcal{G}(\mathcal{C})$ the class of all \aclp{ACG}~$\Gamma$ with
costs $c_{e}\in\mathcal{C}$, and we look for 
 bounds on $\PoA(\Gamma)$ that hold uniformly 
for all such games, that is, we seek upper bounds for
\begin{equation}
\label{eq:PoA-class}
\PoA(\mathcal{C})\eqdef \sup\big\{\PoA(\Gamma):\Gamma\in\mathcal{G}(\mathcal{C})\big\}.
\end{equation}
A flexible tool to estimate $\PoA(\mathcal{C})$ is  the concept of smoothness: a family $\mathcal{C}$  is called \emph{$(\lambda,\mu)$-smooth} with $\lambda\geq 0$ and $\mu\in[0,1)$, if 
\begin{align}
&k \,c(1+m)  \leq  \lambda\, k\, c(k)  +  \mu\, m \, c(m) \nonumber\\
\label{eq:def-lambda-mu}
&\quad    \forall  k,m\in \mathbb{N}, \forall c(\argdot)\in\mathcal{C}.
\end{align}
It is well known that this condition implies $\PoA(\mathcal{C})\leq \lambda/(1-\mu)$, so that
\begin{align}
&\PoA(\mathcal{C})\leq\gamma(\mathcal{C})\eqdef\inf\left[\lambda/(1-\mu) \colon  \right.\nonumber\\
\label{eq:PoA_Bnd}
&\quad\left.(\lambda,\mu)\text{ satisfies \eqref{eq:def-lambda-mu}}\right].
\end{align}
For specific classes of costs, these estimates evolved in a series of papers by \citet{ChrKou:STOC2005,SurTotZho:A2007,AlaDumGaiMonSch:SIAMJC2011,AweAzaEps:SIAMJC2013,Rou:JACM2015}. 
\citet[theorem 5.8]{Rou:JACM2015} proved that these bounds are tight for \aclp{ACG}, that is:

\begin{theorem}
[\citealt{Rou:JACM2015}]
\label{th:Rou-lambda-mu}
For each family\, $\mathcal{C}$ of  nonnegative and nondecreasing cost functions we have $\PoA(\mathcal{C})=\gamma(\mathcal{C})$.
Moreover, if\, $\mathcal{C}$ contains the zero cost function $c_{0}(\argdot)$, then
the supremum in $\PoA(\mathcal{C})$
is achieved by network congestion games.
\end{theorem}

\citet{Rou:JACM2015} showed that the same bounds hold when the maximum in $\PoA(\Gamma)$ is taken over the class of mixed equilibria of the game $\Gamma$, and even over the larger classes of correlated equilibria and coarse correlated equilibria.
We will adapt the $(\lambda,\mu)$-smoothness framework to derive finer bounds for the \acl{PoA} in Bernoulli congestion games defined next. 
It turns out that Theorem~\ref{th:Rou-lambda-mu} is a corollary of our main result Theorem~\ref{th:lambda-mu-pr}.
For simplicity we start focusing on pure strategies, and postpone the discussion of mixed and correlated equilibria to Section~\ref{se:PoA_MixedEquilibria}.

\subsection{Bernoulli Congestion Games}

A \acdef{BCG} is an \acl{ACG} $\Gamma$ in which every player $i\in \mathcal{N}$ participates with some probability $p_{i}$ (independently of the other players), and otherwise remains inactive and incurs no cost.
A \ac{BCG} with probability vector $\boldsymbol{p}=\parens*{p_{i}}_{i\in \mathcal{N}}$ will be denoted by
 $\Gamma^{\boldsymbol{p}}$.
Clearly, in the deterministic case with $p_{i}\equiv 1$ for all $i\in\mathcal{N}$, a \ac{BCG} coincides with the \acl{ACG} $\Gamma$.

Without loss of generality, we assume that players (randomly) staying out of the game incur a cost equal to $0$.
Because staying out is determined exogenously and thus non-strategic, we could assign any cost to staying out of the game. 
The choice of cost for the outside option does not affect the set of Nash equilibria. 
Furthermore, a positive cost of non-participation would increase the cost of every strategy profile at equilibrium and under an optimal solution equally. 
This would make the \ac{PoA} artificially smaller, so in a worst-case analysis it is appropriate to make the constant equal to zero.

The random variables $W_{i}\sim \Bernoulli(p_{i})$, which indicate whether player $i$ is active, are assumed to be independent across players. 
We also assume that players choose their strategies before observing the actual realization of these random variables, so that no player knows for sure who will be present in the game.\footnote{In the context of routing games, \citet{NguPal:EJOR1988,Mil:N2001,MarNguSch:OR2004} considered a richer set of strategies called \emph{hyperpaths}
in which players are allowed to update their priors along their journey and switch to alternative routes.}
Thus, a \ac{BCG} can be framed as a game with incomplete information where each player has two possible types: active or inactive. 
The standard solution concept in this setting is the \acl{BNE}, where each player's strategy is contingent on the player's own type.
However, in \acp{BCG} an inactive player has no impact over the other players and has zero cost, regardless of the chosen strategy profile. 
Hence, it suffices to prescribe the strategies to be used when active. 
In what follows we describe explicitly  a  \acl{BNE} for \acp{BCG}.

For a strategy profile $\boldsymbol{s}=(s_{i})_{i\in\mathcal{N}}\in\mathcal{S}$ and all $e\in\mathcal{E}$ we define the random resource loads 
\begin{subequations}
\label{eq:N-e-stoch}
\begin{align}
N_{e}(\boldsymbol{s})&=\sum_{j\in\mathcal{N}} W_{j}\,\mathds{1}_{\{e \in s_{j}\}},
\intertext{and} 
N^{-i}_{e}(\boldsymbol{s})&=\sum_{j\neq i}W_{j}\,\mathds{1}_{\{e \in s_{j}\}}\quad\text{for all}~i\in\mathcal{N},
\end{align}
\end{subequations}
considering, respectively, either all the players who use the resource $e$, or all these players except player $i$.
With a slight abuse of notation, we use the same symbols as in \eqref{eq:Ci} and \eqref{eq:SC} for the expected cost of a player and the social cost, respectively, which are redefined  as

\begin{subequations}
\label{eq:eq:c-stoch}
\begin{align}
\label{eq:c-i-stoch}
C_{i}(\boldsymbol{s})
&=\Expect\left[W_{i}\sum_{e\in s_{i}}c_{e}(N_{e}(\boldsymbol{s}))\right]\nonumber\\
&=\sum_{e\in s_{i}}p_{i}\Expect\bracks*{c_{e}\parens*{1+N^{-i}_{e}(\boldsymbol{s})}},\\
\label{eq:c-soc-stoch}
C(\boldsymbol{s})
&=\sum_{i\in\mathcal{N}} C_{i}(\boldsymbol{s})
=\Expect\left[\sum_{e\in\mathcal{E}}N_e(\boldsymbol{s})\,c_{e}(N_{e}(\boldsymbol{s}))\right].
\end{align} 
\end{subequations}Alternatively, these costs might also be expressed in terms of the random set of active players  $\mathcal{I}=\braces*{j\in\mathcal{N} \colon W_{j}=1}$ and 
\begin{equation}\label{eq:load_alt}
n^{\mathcal{I}}_{e}(\boldsymbol{s})=\sum_{j\in\mathcal{I}}\mathds{1}_{\{e \in s_{j}\}},
\end{equation}
with which we have $N_{e}(\boldsymbol{s})=
n^{\mathcal{I}}_{e}(\boldsymbol{s})$ and $p(\mathcal{I})
\eqdef\prod_{j\in \mathcal{I}}p_{j}\,\prod_{j\not\in\mathcal{I}}(1-p_{j})$, so that
\begin{align}
C_{i}(\boldsymbol{s})
&=\Expect\left[W_{i}\sum_{e\in s_{i}}c_{e}\left(n^{\mathcal{I}}_{e}(\boldsymbol{s})\right)\right]\nonumber\\
\label{eq:cost-ie}
&=\sum_{e\in s_{i}}\sum_{\mathcal{I}\ni i}p(\mathcal{I})\, c_{e}\left(n^{\mathcal{I}}_{e}(\boldsymbol{s})\right),\\
\label{eq:SC-ie}
C(\boldsymbol{s})
&= \Expect\left[\sum_{e\in\mathcal{E}}n^{\mathcal{I}}_{e}(\boldsymbol{s})\cdot c_{e}\left(n^{\mathcal{I}}_{e}(\boldsymbol{s})\right)\right].
\end{align}

\begin{definition}
\label{de:BNE}
A strategy profile $\hat{\boldsymbol{s}}$ is a \acdef{BNE} for $\Gamma^{\boldsymbol{p}}$ if, for each $i\in\mathcal{N}$ and  all $s_{i}\in \mathcal{S}_{i}$, we have $C_{i}(\hat{\boldsymbol{s}})\leq  C_{i}(s_{i},\hat{\boldsymbol{s}}_{-i})$.
The set of all such \acp{BNE} is denoted by $\mathsf{NE}(\Gamma^{\boldsymbol{p}})$.
\end{definition}

Using \citet{Ros:IJGT1973}'s theorem, we prove that the set of \aclp{BNE} is nonempty by 
noting that every \ac{BCG} admits an exact potential function $\Phi \colon \mathcal{S}\to\mathbb{R}$
such that, for each strategy profile $\boldsymbol{s}\in\mathcal{S}$ and any unilateral deviation $\boldsymbol{s}'=\parens{s'_{i}, \boldsymbol{s}_{-i}}$ 
by a player $i$, we have 
\begin{equation}
\label{eq:potential_def}
C_{i}(\boldsymbol{s}')-C_{i}(\boldsymbol{s})=\Phi(\boldsymbol{s}')-\Phi(\boldsymbol{s}).
\end{equation}
 
A similar fact was observed in \citet[theorem~1]{meir2012congestion} for more general congestion games where the stochastic player participation can exhibit correlations, stating that such games admit a \emph{weighted potential} function and hence there exist pure \aclp{BNE}. 
As proved next, in the special case of Bernoulli games we have in fact an 
\emph{exact potential}.

\begin{proposition}
\label{pr:Meir-et-al}
Every \ac{BCG} $\Gamma^{\boldsymbol{p}}$ is an exact potential game.
In particular $\mathsf{NE}(\Gamma^{\boldsymbol{p}})$ is nonempty.
\end{proposition}

\proof{Proof.}
By considering the expectation of Rosenthal's potential
\begin{equation}\label{eq:potential}
\Phi(\boldsymbol{s}) \eqdef \Expect\bracks*{\sum_{e\in\mathcal{E}}\sum_{k=1}^{N_{e}(\boldsymbol{s})}c_{e}(k)},
\end{equation}
it suffices to note that \eqref{eq:potential_def} follows directly from the fact that the difference
\begin{equation}\label{eq:potential2}
\Phi(\boldsymbol{s})-C_{i}(\boldsymbol{s})=\Expect\bracks*{\sum_{e\in\mathcal{E}}\sum_{k=1}^{N^{-i}_{e}(\boldsymbol{s})}c_{e}(k)}
\end{equation}
does not depend on $s_{i}$.
\Halmos
\endproof

A particularly relevant case is that of homogeneous players with $p_{i}\equiv q$ for all $i\in\mathcal{N}$. 
In this case, the loads $N_{e}(\boldsymbol{s})$ are binomial random variables and $\Gamma^{\boldsymbol{p}}$ is equivalent to a deterministic atomic game with suitably modified costs.
Indeed, define the Binomial expectation of a cost function $c:\mathbb{N}\to\mathbb{R}_+$ as
\begin{align}
c^{q}(k)
\eqdef q\,\Expect\bracks*{c(1+X)} \quad \text{for }k\in\mathbb{N} \nonumber\\
\label{eq:cq}
\text{ and } X\sim\Binomial(k-1,q).
\end{align}
The following technical lemma will prove useful for dealing with 
the case $p_{i}\equiv q$ and, more generally, 
for heterogeneous probabilities such that $p_{i}\le q$.
\begin{lemma}
\label{le:CBH} 
Let $Y=\sum_{i=1}^mY_{i}$ 
and $X=\sum_{i=1}^m Y_{i}Z_{i}$
with $Y_{i}\sim\Bernoulli(r_{i})$
and $Z_{i}\sim\Bernoulli(q)$
two families of independent random variables.
Then
\begin{align*}
q\Expect\left[c(1+X)\right]
&=\Expect\left[c^{q}(1+Y)\right]
\intertext{and}
\Expect\left[X\,c(X)\right]
&=\Expect\left[Y\,c^{q}(Y)\right].   
\end{align*}

In particular, if  $X\sim\Binomial(m,q)$ then $q\Expect\bracks*{c(1+X)} = c^{q}(1+m)$ and 
$\Expect\bracks*{X\,c(X)} = m\,c^{q}(m)$.
\end{lemma}

\proof{Proof.}
Conditionally on the event $Y=k$,
the variable $X$ is distributed as a $\Binomial(k,q)$,
so that, from the very definition of $c^{q}(\argdot)$, we have
$q\Expect\big[c(1+X) \mid Y=k\big]=c^{q}(1+k)$. 
Then, the first claim follows from the tower law of iterated expectations:    
\begin{align*}
q\Expect\big[c(1+X)\big]
&=\Expect\big[q\Expect[c(1+X) \mid Y]\big]\\
&=\Expect\big[c^{q}(1+Y)\big].
\end{align*}
To establish the second identity we write
\begin{align*}
\Expect\big[X\,c(X)\big]
&=\Expect\Big[\sum_{i=1}^m Y_{i}Z_{i}\,c(X)\Big]\\
&=\sum_{i=1}^mr_{i}\,q\,
\Expect\Big[c(1+\sum_{j\neq i}Y_{j}Z_{j})\Big],    
\end{align*}
and using the first identity we conclude
\begin{align*}
\Expect\big[X\,c(X)\big]
&=\sum_{i=1}^mr_{i}\,c^{q} \Big(1+\sum_{j\neq i}Y_{j}\Big)\\
&=\sum_{i=1}^m
\Expect\Big[Y_{i}\, c^{q} \Big(\sum_{j=1}^m Y_{j}\Big)\Big]\\
&=\Expect\big[Y\,c^{q}(Y)\big].
\Halmos
\end{align*}
\endproof

Applying  Lemma~\ref{le:CBH} to \eqref{eq:c-i-stoch} and \eqref{eq:c-soc-stoch}, we get the following direct consequence. 

\begin{corollary}
\label{co:util2} 
Let $\Gamma^{\boldsymbol{p}}$ be a \acl{BCG}  with
$p_{i}\le q$. 
Set 
$r_{i}=p_{i}/q\in [0,1]$ and  
$N_{e}^{Y}(\boldsymbol{s})=\sum_{j\in\mathcal{N}}Y_{j}\mathds{1}_{\{e\in s_{j}\}}$
with 
$Y_{j}\sim \Bernoulli(r_{j})$
independent random variables.
Then
\begin{align}
\label{eq:disp12}
C_{i}(\boldsymbol{s})
&=\Expect\bracks*{\sum_{e\in s_{i}}c_{e}^{q}(N_{e}^{Y}(\boldsymbol{s}))},\\
\label{eq:disp22}
C(\boldsymbol{s})
&=\Expect\bracks*{\sum_{e\in\mathcal{E}}N_{e}^{Y}(\boldsymbol{s})\cdot c_{e}^{q}(N^{Y}_{e}(\boldsymbol{s}))},
\end{align}
so that $\Gamma^{\boldsymbol{p}}$
is equivalent -- in terms of player costs
and social cost -- to a \acl{BCG} $\Gamma_{ q}^{\boldsymbol{r}}$
with costs $c^{q}_{e}(\argdot)$
and  probabilities $r_{i}$.
In particular, when $p_{i}\equiv q$ we 
have $r_{i}\equiv 1$ and $N^{Y}_{e}(\boldsymbol{s})=n_{e}(\boldsymbol{s})$ so that
$\Gamma^{\boldsymbol{p}}$ is equivalent 
to a deterministic
\acl{ACG} $\Gamma_{ q}$ with costs $c_{e}^{q}(\argdot)$.
\end{corollary}

\section{Social Optimum and Price of Anarchy with General Costs}
\label{se:homogeneous-prob}

For games with incomplete information  we can define several notions of social optimum, depending on the relevant social cost function and the information available to the central planner. 
In the present context, we consider  the total \acdef{ESC} 
\begin{equation}\label{eq:ESC}
C(\boldsymbol{s})\eqdef \sum_{i\in\mathcal{N}} C_{i}(\boldsymbol{s})=\Expect\bracks*{\sum_{e\in\mathcal{E}}n^{\mathcal{I}}_{e}(\boldsymbol{s})\cdot c_{e}(n^{\mathcal{I}}_{e}(\boldsymbol{s}))}.
\end{equation}
An \acdef{OSO} is a profile $\boldsymbol{s}^{\ast}\in\mathcal{S}$ that minimizes the following expected cost:
\begin{equation}
\label{eq:ordinary-social-optimum}  
C_{\mathsf{ord}} \eqdef C(\boldsymbol{s}^{\ast})=\min_{\boldsymbol{s}\in\mathcal{S}}C(\boldsymbol{s}),
\end{equation}
which induces the \acdef{OPoA}, defined as 
\begin{equation}
\label{eq:PoAnp}
\OPoA(\Gamma^{\boldsymbol{p}})\eqdef\max_{\boldsymbol{s}\in\mathsf{NE}(\Gamma^{\boldsymbol{p}})}\frac{C(\boldsymbol{s})}{C_{\mathsf{ord}}}.
\end{equation}

The quantity $\OPoA(\Gamma^{\boldsymbol{p}})$ measures the inefficiency of the worst equilibrium $\hat{\boldsymbol{s}}\in\mathsf{NE}(\Gamma^{\boldsymbol{p}})$ by comparing its expected social cost $C(\hat{\boldsymbol{s}})$ to the optimum $C_{\mathsf{ord}}$ of a central planner who faces the same uncertainty about which players are present.
We now introduce the \acdef{PSO}, i.e.,  a harder benchmark that is achievable by a hypothetical planner who has full information and selects an optimal strategy adapted to each specific realization of the set of active players.
More precisely, to achieve a \ac{PSO}, the planner observes the realized set of active players $\mathcal{I}\subseteq\mathcal{N}$, and selects an optimal strategy profile $\boldsymbol{s}^\mathcal{I}=(s^\mathcal{I}_{i})_{i\in\mathcal{N}}\in\mathcal{S}$ that solves
\begin{equation}
\label{eq:optimal-prophet-action}
C_{\mathsf{pr}}(\mathcal{I})\eqdef C(\boldsymbol{s}^{\mathcal{I}})=\min_{\boldsymbol{s}\in\mathcal{S}}\sum_{e\in\mathcal{E}}n^{\mathcal{I}}_{e}(\boldsymbol{s})\cdot c_{e}\parens*{n^{\mathcal{I}}_{e}(\boldsymbol{s})}.
\end{equation}
This minimum is obviously smaller than the one obtained by the fixed ordinary optimal profile $\boldsymbol{s}^{\ast}$, so that taking expectation with respect to the random set $\mathcal{I}$ yields a smaller expected cost 
\begin{align}
C_{\mathsf{pr}}
&\eqdef\sum_{\mathcal{I}\subseteq\mathcal{N}}p(\mathcal{I}) C_{\mathsf{pr}}(\mathcal{I})\nonumber\\
\label{eq:PESC}
&=
\sum_{\mathcal{I}\subseteq\mathcal{N}}p(\mathcal{I})\sum_{e\in\mathcal{E}}n^{\mathcal{I}}_{e}(\boldsymbol{s}^{\mathcal{I}})\cdot c_{e}\parens*{n^{\mathcal{I}}_{e}(\boldsymbol{s}^{\mathcal{I}})}.
\end{align}
The \acdef{PPoA} is then defined as
\begin{equation}
\label{eq:PoAp}
\PPoA(\Gamma^{\boldsymbol{p}})\eqdef\max_{\boldsymbol{s}\in\mathsf{NE}(\Gamma^{\boldsymbol{p}})}\frac{C(\boldsymbol{s})}{C_{\mathsf{pr}}}.
\end{equation}

Observe that, by linearity of expectation, neither the ordinary nor the prophet planner can profit by optimizing over mixed strategies and in both cases it suffices to optimize over the pure strategy profiles $\boldsymbol{s}\in\mathcal{S}$. 
However, as far as equilibria are concerned, the maximum in \ac{PoA} could in principle be different if we considered either \emph{pure} or \emph{mixed} equilibria. 
For simplicity of exposition we focus on pure strategies, though in Section~\ref{se:PoA_MixedEquilibria} we show that our upper bounds on the \ac{PoA} are also valid for mixed equilibria.

\begin{example}
\label{ex:low2}
Consider a routing game with $2$ players on the Pigou network of Figure~\ref{fi:pigou} in which both players have the same origin $\mathsf{O}$ and destination $\mathsf{D}$, and the same $p_{i}= q$.
The cost function on the top link is linear, whereas the cost function in the bottom link is constant.
\begin{figure}[ht]
\centering
\begin{tikzpicture}[thick,scale=1.0, every node/.style={transform shape}]
 \node[draw, circle,minimum size = 0.7cm](Ob)  at (0,0) {\small $\mathsf{O}$};
 \node[draw, circle,minimum size = 0.7cm](Db)  at (6,0) {\small $\mathsf{D}$};
 \draw[->, >=latex,bend right=20] (Ob) to node[midway,fill=white] {$c_{2}(x_{2})=1+q$} (Db) ;
 \draw[->, >=latex,bend left=20] (Ob) to node[midway,fill=white] {$c_{1}(x_{1})=x_{1}$} (Db) ;
\end{tikzpicture}

\caption{\label{fi:pigou} A Pigou network. 
The edges are annotated with their cost functions $c_{e}(x_{e})$.}
\end{figure}
The profile where both players use the upper path is a \acl{BNE},
because in this case both paths have expected cost $q\,(1+q)$.
An \acl{OSO} $\boldsymbol{s}^{\ast}$ is achieved by pre-assigning one path to each player, independently of the fact that they are active or not. 
In this solution, when the player assigned to the upper path does not show up, the strategy misses the possibility of re-routing the other player on this  unused and cheaper path. By contrast, the \acl{PSO} exploits this flexibility: when both players are active, they are routed on different paths; if only one player shows up, this player is routed on the upper path.
This implies
\begin{align*}
\OPoA(\Gamma^{\boldsymbol{p}})
&=\frac{2+2q}{2+q}
\intertext{and}
\PPoA(\Gamma^{\boldsymbol{p}})
&=
\frac{2q(1+q)}{2q(1 - q) + q^2(2+q)}
=\frac{2+2q}{2+q^2}.
\end{align*}
\end{example}

\subsection{Tight Bounds for the \acl{OPoA}}
\label{suse:tight-bounds}

In what follows we derive bounds for the \ac{OPoA} that hold uniformly for all \acp{BCG} with probabilities $p_{i}$ below a fixed threshold $q$.

\begin{definition}
\label{de:Cp}
Given a family $\mathcal{C}$ of nonnegative and nondecreasing costs and $q\in(0,1]$, we call 
$\OPoA(\mathcal{C},q)$ the supremum
of the \acl{OPoA} $\OPoA(\Gamma^{\boldsymbol{p}})$ across all games 
$\Gamma^{\boldsymbol{p}}$ in the class $\mathcal{G}(\mathcal{C},q)$ of \aclp{BCG} with $p_{i}\le q$ for all $i\in\mathcal{N}$ and  $c_{e}\in\mathcal{C}$ for all $e\in\mathcal{E}$. 
We also let $\mathcal{C}^{q}$ denote the class of all functions $c^{q}$ defined by \eqref{eq:cq} with $c\in\mathcal{C}$.
\end{definition}

Our first main result, Theorem~\ref{th:different-p-i} below, shows that the worst-case instances for $\OPoA(\mathcal{C},q)$ occur in the case of homogeneous players with $p_{i}\equiv q$, whereas the inclusion $\mathcal{G}(\mathcal{C},p)\subseteq\mathcal{G}(\mathcal{C},q)$ for $p\le q$ implies that $\OPoA(\mathcal{C},q)$ is nondecreasing in $q$.  
It follows that the only relevant parameter to
characterize the worst-case \ac{OPoA} is the maximal probability $q=\max_{i}p_{i}$, from which we deduce that
$\OPoA(\mathcal{C},q)$ coincides with the bound $\gamma(\mathcal{C}^{q})$ for deterministic games with costs  in $\mathcal{C}^{q}$, obtained by minimizing the quotient $\lambda/(1-\mu)$ over all pairs $(\lambda,\mu)$ that satisfy
\begin{align}
  k \,c^{q}(1+m)  
  &\leq  \lambda\, k\, c^{q}(k)  +  \mu\, m \, c^{q}(m) \nonumber \\
\label{eq:r1}  
&\quad \forall  k,m\in \mathbb{N},~\forall c(\argdot)\in\mathcal{C}\,.
\end{align}
Note that, although we only consider the single parameter $q$, the bound $\OPoA(\Gamma^{\boldsymbol{p}})\leq\gamma(\mathcal{C}^{q})$ is valid for all congestion games $\Gamma^{\boldsymbol{p}}$ with heterogeneous probabilities $p_{i}\le q$ and costs in $\mathcal{C}$. 
The best previously known bound  was $\OPoA(\Gamma^{\boldsymbol{p}})\leq\gamma(\mathcal{C})$, which
can be derived from theorem~3.7 in \citet{Rou:ACMTEC2015} and is independent of $q$, whereas our sharp bound provides the tight worst-case estimate $\OPoA(\mathcal{C},q)=\gamma(\mathcal{C}^{q})$.

Observe that \eqref{eq:r1} coincides with \eqref{eq:def-lambda-mu} when $q=1$. 
In fact, \eqref{eq:def-lambda-mu} is stronger as it  implies  \eqref{eq:r1} for all $q$. 
Indeed, replacing   $k$ and $m$ in \eqref{eq:def-lambda-mu} with independent variables
$X\sim\Binomial(k,q)$ and $\widetilde{X}\sim\Binomial(m,q)$, then taking expectation, and using  Lemma~\ref{le:CBH}, we obtain \eqref{eq:r1}.
This confirms that our bound is tighter, namely
$\gamma(\mathcal{C}^{q})\leq\gamma(\mathcal{C})$.
As illustrated by the green curve in Figure~\ref{fi:bounds}, for affine costs, the bound $\gamma(\mathcal{C}_{\mathsf{aff}}^{q})$ is strictly smaller than $\gamma(\mathcal{C}_{\mathsf{aff}})=5/2$, except when $q=1$. 

We proceed to establish these previous claims, which are derived by suitably combining Theorem~\ref{th:Rou-lambda-mu}, Corollary~\ref{co:util2}, and theorem~5.3 of \citet{CorHoeSch:TRB2019}.

\begin{theorem}
\label{th:different-p-i}
For each family $\mathcal{C}$ of nonnegative and nondecreasing cost functions and each $q\in(0,1]$, we have $\OPoA(\mathcal{C},q) = \gamma(\mathcal{C}^{q})$. 
Moreover, if  zero costs are allowed, i.e.,  $0\in\mathcal{C}$, then the supremum of  $\OPoA(\Gamma^{\boldsymbol{p}})$ over $\Gamma^{\boldsymbol{p}}\in \mathcal{G}(\mathcal{C},q)$ is  achieved by restricting to network congestion games with homogeneous probabilities $p_{i}\equiv q$.
\end{theorem}

\proof{Proof.}
From Corollary~\ref{co:util2},
each $\Gamma^{\boldsymbol{p}}\in \mathcal{G}(\mathcal{C},q)$
is equivalent to a \acl{BCG} $\Gamma_{ q}^{\boldsymbol{r}}$
with costs $c^{q}_{e}(\argdot)$
and  probabilities $r_{i}=p_{i}/q$,
so that 
$\OPoA(\Gamma^{\boldsymbol{p}})=\OPoA(\Gamma_{ q}^{\boldsymbol{r}})$.
From \citet[theorem~5.3]{CorHoeSch:TRB2019}, any $(\lambda,\mu)$-smoothness bound for the deterministic game 
$\Gamma_{ q}$ with costs $c_{e}^{q}(\argdot)$ remains valid for the Bernoulli game $\Gamma_{ q}^{\boldsymbol{r}}$, hence $\OPoA(\Gamma^{\boldsymbol{p}})=\OPoA(\Gamma_{ q}^{\boldsymbol{r}})\leq\gamma(\mathcal{C}^{q})$.
Taking supremum over $\Gamma^{\boldsymbol{p}}$ we get $\OPoA(\mathcal{C},q)\leq\gamma(\mathcal{C}^{q})$.

Conversely, each deterministic congestion game $\Gamma_{ q}\in\mathcal{G}(\mathcal{C}^{q})$ is equivalent to a \acl{BCG} $\Gamma^{\boldsymbol{p}}$ with $p_{i}\equiv q$ so that $\PoA(\Gamma_{ q})=\OPoA(\Gamma^{\boldsymbol{p}})\leq \OPoA(\mathcal{C},q)$. 
Taking the supremum over  $\Gamma_{ q}\in\mathcal{G}(\mathcal{C}^{q})$ and using Theorem~\ref{th:Rou-lambda-mu} we conclude $\gamma(\mathcal{C}^{q})\leq \OPoA(\mathcal{C},q)$. 
This also shows that the worst case for $\Gamma^{\boldsymbol{p}}\in\mathcal{G}(\mathcal{C},q)$ occurs with $p_{i}\equiv q$, and, from \citet[Section 5.5]{Rou:JACM2015}, we have that such worst case can be realized with network congestion games. 
\halmos
\endproof

\begin{corollary}
\label{co:PPoA_monotonicity}
For each family $\mathcal{C}$ of nonnegative and nondecreasing cost functions, the map $q \mapsto \gamma(\mathcal{C}^{q})$ is nondecreasing.
\end{corollary}

The usefulness of the previous result depends on our ability to estimate
$\gamma(\mathcal{C}^{q})$, which is not easy in general. 
For the class $\mathcal{C}_{\mathsf{aff}}$ of affine costs we have $\mathcal{C}_{\mathsf{aff}}^{q}\subseteq\mathcal{C}_{\mathsf{aff}}$
so that, combining \citet{ChrKou:STOC2005} and \citet{Rou:JACM2015}, we get
$\gamma(\mathcal{C}_{\mathsf{aff}}^{q})\leq \gamma(\mathcal{C}_{\mathsf{aff}})=5/2$.
In Section~\ref{suse:tight-upper-bounds} we  explicitly compute $\gamma(\mathcal{C}_{\mathsf{aff}}^{q})$ and show that it is  equal to $4/3$ for all $q\in[0,1/4]$ and then becomes strictly increasing and reaches the upper bound $\gamma(\mathcal{C}_{\mathsf{aff}})=5/2$ at $q=1$.

Theorem~\ref{th:different-p-i} shows that the worst case for \ac{OPoA} occurs when all the $p_{i}$'s are equal,
so that the only relevant parameter is the maximal probability $q=\max_{i}p_{i}$. 
The following example illustrates why other parameters such as the average  or the minimum participation probabilities play no role in characterizing the worst-case \ac{OPoA}. 

\begin{example}
\label{ex:average}
Consider a game $\Gamma$ with $p_{i}=1$ and $\OPoA(\Gamma)=5/2$. See, for example, \citet[theorem~2]{ChrKou:STOC2005} or \citet[theorem~10]{AweAzaEps:SIAMJC2013}. 
Consider now a game $\Gamma'$ having additional dummy players who have low participation probabilities and  can use only one resource with zero cost. 
Then  $\OPoA(\Gamma')$ remains equal to $5/2$ although the minimum and average probabilities have changed. 
\end{example}

Given that the worst case bound
 $\OPoA(\mathcal{C},q)$ is nondecreasing in $q$,
 a natural question is whether this also holds for
 a fixed game $\Gamma$ so that $\OPoA(\Gamma)$ would increase with the
 maximal probability $q$, or even with respect to each $p_{i}$ separately.
The example below shows that both properties may fail. 
This does not exclude the possibility that some form of component-wise monotonicity might hold when we consider the worst-case \ac{OPoA} with a fixed number of players $n$ and variable probabilities $(p_{i})_{i=1}^{n}$.  

\begin{example}
\label{ex:monotonicity}
Consider any game $\Gamma$ with $p_{i}\le q<1$ and $\OPoA(\Gamma)=C(\hat{\boldsymbol{s}})/C(\boldsymbol{s}^{\ast})>1$ (see, for instance,  Example~\ref{ex:low2}). 
Consider now a new instance $\Gamma'$ with one additional player $n+1$, whose action set consists of a single resource $e$ that has cost $c_{e}(x)=x$ and is not part of any other player's action set. 
Then $\OPoA(\Gamma')= (C(\boldsymbol{s})+p_{n+1})/(C(\boldsymbol{s}^{\ast})+p_{n+1})$, which is decreasing in $p_{n+1}$. 
So, in particular, if $p_{n+1}\geq q$, then the $\OPoA(\Gamma')$ decreases with the maximal probability.
\end{example}

\subsection{Tight Bounds for the \acl{PPoA}}

We proceed to derive tight bounds for the \ac{PPoA} that hold uniformly for all \acp{BCG} with probabilities $p_{i}$ below a fixed threshold $q$.
Naturally, we expect the bounds to be larger than the tight bound $\gamma(\mathcal{C}^{q})$ for the  \ac{OPoA}. 
To this end, we introduce a slight modification of the smoothness concept, which we call
\emph{$(\lambda,\mu,q)$-smoothness}. For
$\lambda\geq 0$, $\mu\in[0,1)$ and $q\in(0,1]$, we consider the inequality
\begin{align}
&\frac{1}{q}\,k\,c^{q}(1+m)\leq \lambda\,k\,c(k)+\mu\,m\,c^{q}(m)\nonumber\\
\label{eq:ccp}
&\quad\forall  k,m\in \mathbb{N},~\forall c(\argdot)\in\mathcal{C},
\end{align}
and we set
\begin{equation}
\label{eq:BndPr}
\gamma_{\mathsf{pr}}(\mathcal{C},q)\eqdef\inf\braces*{\lambda/(1-\mu)
\colon (\lambda,\mu) \text{ satisfies }\eqref{eq:ccp}}.
\end{equation}
Because \eqref{eq:ccp} holds trivially for the zero function $c_{0}(\argdot)\equiv 0$ and when $k=0$, we can restrict this condition to the set
\begin{equation}
\label{eq:triples}
\mathcal{T} = \braces{(c,k,m) \colon  c\in\mathcal{C}\setminus\{c_{0}\}, k \in \mathbb{N}_{+},m\in\mathbb{N} }.
\end{equation}

We will show that $\gamma_{\mathsf{pr}}(\mathcal{C},q)$ yields a tight bound for the worst-case \acl{PPoA}.
Note that, whereas the expected cost $c^{q}(\argdot)$ appears on both sides of \eqref{eq:ccp}, the term $\lambda\,k\,c(k)$ on the right involves the original cost, so that \eqref{eq:ccp} is half way between \eqref{eq:def-lambda-mu} and \eqref{eq:r1}. 
Indeed, using  Lemma~\ref{le:CBH},  by  replacing $m$ in \eqref{eq:def-lambda-mu} with $X\sim\Binomial(m,q)$ and taking expectation, it follows that \eqref{eq:def-lambda-mu} implies \eqref{eq:ccp}; moreover, replacing $k$ in \eqref{eq:ccp} with $\widetilde{X}\sim\Binomial(k,q)$ and taking expectation, we get that \eqref{eq:ccp} implies \eqref{eq:r1}.
These implications translate into the following order for these bounds.
\begin{lemma} 
\label{le:implications-betas}
For each family $\mathcal{C}$ of nonnegative and nondecreasing cost functions and each $q\in(0,1]$ we have $\gamma(\mathcal{C}^{q})\leq \gamma_{\mathsf{pr}}(\mathcal{C},q)\leq \gamma(\mathcal{C})$.
\end{lemma}

\proof{Proof.}
The result follows directly from the implications \eqref{eq:def-lambda-mu} $\implies$ \eqref{eq:ccp} $\implies$ \eqref{eq:r1}.
\Halmos
\endproof

The inequalities in Lemma~\ref{le:implications-betas} can be strict.
This  will be illustrated in Section~\ref{se:poa_and_pos}, where we prove that
for the class $\mathcal{C}_{\mathsf{aff}}$ of affine cost functions, $\gamma_{\mathsf{pr}}(\mathcal{C}_{\mathsf{aff}},q)$ is a tight bound for the \ac{PPoA} with
$\gamma(\mathcal{C}_{\mathsf{aff}}^{q})<\gamma_{\mathsf{pr}}(\mathcal{C}_{\mathsf{aff}},q)<\gamma(\mathcal{C}_{\mathsf{aff}}) = 5/2$ for all $q\in (0,1)$.

Our estimates for the \ac{PPoA} exploit a special type of mixed strategies $\sigma_{i}^{\ast}$ where each player $i$ mimics the strategy of the prophet by sampling the other potentially active players. 

\begin{definition}
\label{de:prophet-like}
A \emph{prophet-like} strategy for player $i$ is a mixed strategy $\sigma_{i}^{\ast}$ that chooses a prophet optimal strategy $s^{\mathcal{I}}_{i}$ for a randomly chosen subset $\mathcal{I}$ of players that includes $i$ with certainty, together with a sample of the other players where each  $j\neq i$ is included with probability $p_{j}$.
\end{definition}

Note that in these prophet-like strategies each player samples a personal random set $\mathcal{I}$, independently of the other players. These samples need not coincide with the actual realization of the Bernoulli random variables $W_{i}$ that determine who actually takes part in the game. 

Using these special mixed strategies $\sigma_{i}^{\ast}$, we can prove an upper bound on the \acl{PPoA} that leverages $(\lambda,\mu,q)$-smoothness. 

\begin{proposition}
\label{pr:different-p-i-pro}
For each \acl{BCG} $\Gamma^{\boldsymbol{p}}\in\mathcal{G}(\mathcal{C},q)$, 
we have $\PPoA(\Gamma^{\boldsymbol{p}})\leq\gamma_{\mathsf{pr}}(\mathcal{C},q)$.
\end{proposition}

Similarly to the case of the ordinary planner, the next result shows that these bounds are tight for every family $\mathcal{C}$ of nonnegative and nondecreasing cost functions and each $q\in(0,1]$.

\begin{theorem}
\label{th:lambda-mu-pr}
For each family $\mathcal{C}$ of nonnegative and nondecreasing cost functions and each $q\in(0,1]$, we have $\PPoA(\mathcal{C},q)=\gamma_{\mathsf{pr}}(\mathcal{C},q)$.
\end{theorem}

By taking $q=1$, Theorem~\ref{th:lambda-mu-pr} yields $\PoA(\mathcal{C})=\gamma(\mathcal{C})$, providing an alternative proof of \citet[theorem~5.8]{Rou:JACM2015}. 
Comparing both proofs, ours uses a compactness argument that directly reduces the analysis to a finite subfamily of costs.
Another difference between the lower bound construction of \cite{Rou:JACM2015} and Theorem~\ref{th:lambda-mu-pr} is that, in order to handle the case $q<1$, we need to give the prophet sufficient flexibility so as to distribute players as equally as possible across the resources. 
To achieve this,  our tight examples allow a multitude of alternative strategies for the players and, as a consequence, it is unclear whether one can find tight examples encoded in routing games, instead of general congestion games. 
This is in contrast with the bound $\PoA(\mathcal{C})=\gamma(\mathcal{C})$ in \cite{Rou:JACM2015} which was shown to be attainable
by routing games, provided that the zero cost function $c_{0}(\argdot)$ belongs to the class $\mathcal{C}$.

\begin{corollary}
\label{co:gamma-nondecreasing}
For each family $\mathcal{C}$ of nonnegative and nondecreasing cost functions, we have that $\gamma_{\mathsf{pr}}(\mathcal{C},q)$ is nondecreasing in $q$.
\end{corollary}

\begin{remark}
\label{re:monotonicity}
We highlight the fact that, for a fixed game $\Gamma$, $\PPoA(\Gamma)$ can decrease with the maximal probability $q$, and even with respect to a $p_{i}$ separately. 
This can be shown using the same construction as in  Example~\ref{ex:monotonicity}.
\end{remark}

A consequence of Theorem~\ref{th:lambda-mu-pr} is that, for a fixed number of players $n$ and under a mild growth condition on the family of costs $\mathcal{C}$, the \acl{OPoA} and \acl{PPoA} converge to $1$ as the probabilities $p_{i}$ tend to $0$. 
In Section~\ref{se:poa_and_pos} we will see that this is no longer the case when the number of players is not bounded.

\begin{proposition} 
\label{pr:growth-cond}
Let $\OPoA(\mathcal{C},n,q)$ and 
$\PPoA(\mathcal{C},n,q)$  denote the supremum of
$\OPoA(\Gamma^{\boldsymbol{p}})$ and $\PPoA(\Gamma^{\boldsymbol{p}})$ 
respectively, over all \aclp{BCG}
$\Gamma^{\boldsymbol{p}}\in\mathcal{G}(\mathcal{C},q)$ with a fixed number $n$ of players.
Suppose that there exists a constant $H$ such that 
$c(n) \leq  H c(1)$ for all $c\in\mathcal{C}$.
Then $1\leq \OPoA(\mathcal{C},n,q)\leq \PPoA(\mathcal{C},n,q)\to 1$ as $q\to 0$.
\end{proposition}

The proof, presented in Section~EC.1 of the supplementary material, proceeds by showing that the $(\lambda,\mu,q)$-smoothness condition \eqref{eq:ccp} holds with 
$\mu=0$ and a suitable $\lambda=\lambda(q)$ such that
$\lambda(q)\to 1 $ when $q\to 0$.
The assumption $c(n)\leq H c(1)$ holds trivially when the family $\mathcal{C}$ is finite. This is the case when we consider a fixed graph $G$ with given costs $c_{e}(\argdot)$ and a fixed number of players, and we study the behavior of the \ac{PoA} when $\max_{i\in\mathcal{N}}p_{i}\to 0$.
Another interesting case is when $\mathcal{C}$ is the class of all polynomials with nonnegative coefficients $a_{i}\geq 0$ and maximum degree $d$.
Indeed, for such polynomials we have
$c(n)\leq n^{d} \sum_{i=0}^{d} a_{i}=n^{d}c(1)$
and we can set $H=n^{d}$.

\subsection{Extension to Mixed and Correlated Equilibria}
\label{se:PoA_MixedEquilibria}

Although unweighted congestion games admit pure equilibria -- also in the stochastic version studied in this paper -- there are good reasons for considering weaker solution concepts, such as mixed, correlated, and coarse correlated equilibria. 
In particular, when agents use no-regret algorithms, the empirical distribution of play
is an approximate correlated equilibrium in the case of internal regret, and an approximate correlated equilibrium in the case of external regret \citep[see, e.g.,][]{Han:CTG1957,FosVoh:GEB1997,CesLug:CUP2006}.

These solution concepts have also been studied before in congestion games. \citet{Rou:JACM2015} showed that every \ac{PoA} bound based on the $(\lambda,\mu)$-smoothness condition \eqref{eq:def-lambda-mu} remains valid for mixed equilibria, correlated equilibria and coarse correlated equilibria. 
It follows directly from this that the same holds for the estimates of the \acl{OPoA} based on \eqref{eq:r1}. 
Below we establish analogous bounds for the \acl{PPoA} based on $(\lambda,\mu,q)$-smoothness. 

We first recall the notions of mixed and correlated equilibria.
For any probability distribution $\tau\in\Delta(\mathcal{S})$, we let $C_{i}(\tau)=\Expect_{\boldsymbol{s} \sim \tau}[C_{i}(\boldsymbol{s})]$ denote the expected cost of player $i$, and $C(\tau)=\Expect_{\boldsymbol{s} \sim \tau}[C(\boldsymbol{s})]=\sum_{i\in\mathcal{N}}C_{i}(\tau)$ denote the expected social cost.
The expected cost of a player is taken over the appropriate distribution depending on the context, as we describe below. 
Examples include Bernoulli players, and Bayes-Nash mixed or correlated equilibria.

A mixed strategy profile is
a tuple $\boldsymbol{\sigma} = (\sigma_{j})_{j\in\mathcal{N}}$, where $\sigma_{j}\in\Delta(\mathcal{S}_{j})$ is a mixed strategy for player $j$.
Each player draws a strategy $s_{j} \sim \sigma_{j}$ independently, so that the strategy profile $\boldsymbol{s} \in \mathcal{S}$ is distributed according to the product probability measure $ \sigma=\otimes \sigma_{j}$. 
Note that there is a one-to-one correspondence between the tuples $\boldsymbol{\sigma} = (\sigma_{j})_{j\in\mathcal{N}}$ and the product probabilities $\sigma=\otimes \sigma_{j}$.
A \emph{Bayes-Nash mixed equilibrium} is then a probability $\hat{\sigma} = \otimes \hat{\sigma}_{j}$ such that, for each player $i$ and every alternative strategy $\sigma_{i}\in\Delta(\mathcal{S}_{i})$, we have
\begin{equation}
\label{eq:MixedBayesNash}
C_{i}(\hat{\sigma}_{i}\otimes\hat{\sigma}_{-i})\leq C_{i}(\sigma_{i}\otimes\hat{\sigma}_{-i}),
\end{equation}
where $\sigma_{-i} = \otimes_{j\neq i}\sigma_{j}$ stands for the product probability of the family
$(\sigma_{j})_{j\neq i}$.

The weaker \emph{Bayes-Nash correlated equilibrium} is a probability distribution $\hat{\tau}\in\Delta(\mathcal{S})$ (not necessarily of product form) such that, for each deviating strategy $s_{i}^{\ast}\in\mathcal{S}_{i}$ by any player $i$, we have
\begin{equation}
\label{eq:corrEq}
C_{i}(\hat{\tau})=\Expect_{\boldsymbol{s}\sim\hat{\tau}}\bracks*{C_{i}(\boldsymbol{s})}
\leq\Expect_{\boldsymbol{s}_{-i} \sim \hat{\tau} \mid s_{i}^{\ast}}\bracks*{C_{i}(s_{i}^{\ast},\boldsymbol{s}_{-i})},
\end{equation}
where $\hat{\tau} \mid s_{i}^{\ast}$ denotes the conditional distribution
of $\boldsymbol{s}_{-i}$ given $s_{i}^{\ast}$.

The even weaker \emph{Bayes-Nash coarse correlated equilibrium} is a distribution $\hat{\tau}\in\Delta(\mathcal{S})$ such that for each deviating strategy $s_{i}^{\ast}\in\mathcal{S}_{i}$ by any player $i$ we have
\begin{equation}
\label{eq:coarseEq}
C_{i}(\hat{\tau})=\Expect_{\boldsymbol{s}\sim\hat{\tau}}\bracks*{C_{i}(\boldsymbol{s})}
\leq\Expect_{\boldsymbol{s}\sim\hat{\tau}}\bracks*{C_{i}(s_{i}^{\ast},\boldsymbol{s}_{-i})}.
\end{equation}

We let $\mathsf{NE}^{\mathsf{mix}}(\Gamma^{\boldsymbol{p}})$, $\mathsf{NE}^{\mathsf{cor}}(\Gamma^{\boldsymbol{p}})$, and $\mathsf{NE}^{\mathsf{coa}}(\Gamma^{\boldsymbol{p}})$ denote the set of Bayes-Nash mixed equilibria,  correlated equilibria, and coarse correlated equilibria, respectively.
The corresponding definitions of 
\acl{PPoA} are similar to the one given in 
\eqref{eq:PoAp}:
\begin{subequations}
\label{eq:Poa-various}
\begin{align}
\label{eq:PoAp-mixed}
\MPPoA(\Gamma^{\boldsymbol{p}})
&\eqdef\max\left\{{C(\boldsymbol{\sigma})}:\boldsymbol{\sigma}\in\mathsf{NE}^{\mathsf{mix}}(\Gamma^{\boldsymbol{p}})\right\}/{C_{\mathsf{pr}}},\\
\label{eq:PoAp-corr}
\CorPPoA(\Gamma^{\boldsymbol{p}})
&\eqdef\max\left\{C(\tau):\tau\in\mathsf{NE}^{\mathsf{cor}}(\Gamma^{\boldsymbol{p}})\right\}/{C_{\mathsf{pr}}},\\
\label{eq:PoAp-coarse}
\CoaPPoA(\Gamma^{\boldsymbol{p}})
&\eqdef\max\left\{C(\tau):\tau\in\mathsf{NE}^{\mathsf{coa}}(\Gamma^{\boldsymbol{p}})\right\}/{C_{\mathsf{pr}}}.
\end{align}
\end{subequations}
Notice that the maxima in \eqref{eq:Poa-various} are well defined. Indeed, for any fixed game $\Gamma^{\boldsymbol{p}}$ the  maxima are attained because the social cost function $C(\argdot)$ is continuous and the sets $\mathsf{NE}^{\mathsf{mix}}(\Gamma^{\boldsymbol{p}})$, $\mathsf{NE}^{\mathsf{cor}}(\Gamma^{\boldsymbol{p}})$, and $\mathsf{NE}^{\mathsf{coa}}(\Gamma^{\boldsymbol{p}})$ are compact.

\begin{theorem} 
\label{th:PoAMixed}
For each \acl{BCG} $\Gamma^{\boldsymbol{p}}\in\mathcal{G}(\mathcal{C},q)$ 
we have 
\begin{align}
\PPoA(\Gamma^{\boldsymbol{p}})
&\leq\MPPoA(\Gamma^{\boldsymbol{p}})\leq \CorPPoA(\Gamma^{\boldsymbol{p}})\nonumber\\
\label{eq:PoA-pr-ineq}
&\leq \CoaPPoA(\Gamma^{\boldsymbol{p}})\leq\gamma_{\mathsf{pr}}(\mathcal{C},q). 
\end{align}
Moreover, all these bounds are tight.
\end{theorem}

\proof{Proof.}
The order between the different prices of anarchy 
follows directly from the chain of inclusions
$\mathsf{NE}(\Gamma^{\boldsymbol{p}})\subseteq\mathsf{NE}^{\mathsf{mix}}(\Gamma^{\boldsymbol{p}})\subseteq\mathsf{NE}^{\mathsf{cor}}(\Gamma^{\boldsymbol{p}})\subseteq\mathsf{NE}^{\mathsf{coa}}(\Gamma^{\boldsymbol{p}})$,
so that it suffices to establish the rightmost bound $\CoaPPoA(\Gamma^{\boldsymbol{p}})\leq\gamma_{\mathsf{pr}}(\mathcal{C},q)$.
Take any coarse correlated equilibrium 
$\hat{\tau}\in \mathsf{NE}^{\mathsf{coa}}(\Gamma^{\boldsymbol{p}})$ and fix $(\lambda,\mu)$ satisfying \eqref{eq:ccp}. 
Considering 
$s_{i}^{\ast} \sim\sigma_{i}^{\ast}$ for the prophet-like strategies, and taking expectation in \eqref{eq:coarseEq} we get
\begin{align*}
C(\hat{\tau})
&\leq \sum_{i\in\mathcal{N}}\Expect_{s_{i}^{\ast}\sim\sigma_{i}^{\ast}} \Expect_{\boldsymbol{s}\sim\hat{\tau}}\bracks*{C_{i}(s_{i}^{\ast},\boldsymbol{s}_{-i})}\\
&=\Expect_{\boldsymbol{s}\sim\hat{\tau}}\bracks*{\sum_{i\in\mathcal{N}}\Expect_{s_{i}^{\ast} \sim\sigma_{i}^{\ast}}\bracks*{C_{i}(s_{i}^{\ast},\boldsymbol{s}_{-i})}}.
\end{align*}
Then, proceeding as in the proof of Proposition~\ref{pr:different-p-i-pro}, we may use the inequality derived there -- see (EC.4) in the supplementary material -- to obtain
\begin{equation*}
C(\hat{\tau})\leq \Expect_{\boldsymbol{s}\sim\hat{\tau}}\bracks*{\lambda\,C_{\mathsf{pr}}+\mu\,C(\boldsymbol{s})}
=\lambda\,C_{\mathsf{pr}}+\mu\,C(\hat{\tau}),  
\end{equation*}
so that $C(\hat{\tau})/C_{\mathsf{pr}}\leq\lambda/(1-\mu)$, and we conclude by 
taking the infimum over $(\lambda,\mu)$ and 
maximizing over $\hat{\tau}\in\mathsf{NE}^{\mathsf{coa}}(\Gamma^{\boldsymbol{p}})$.
Tightness follows directly from Theorem~\ref{th:lambda-mu-pr}.
\Halmos
\endproof

\section{Price of Anarchy with Affine Costs}
\label{se:poa_and_pos}

This section mostly focuses on atomic \aclp{BCG} with nondecreasing and nonnegative affine costs, that is, we restrict the attention to the class $\mathcal{C}_{\mathsf{aff}}$ of costs of the form 
\begin{equation}
\label{eq:affine-costs}
c(x)=a+bx\quad\text{with}\quad a, b\geq 0. 
\end{equation}
Specifically, Sections~\ref{suse:tight-upper-bounds} and \ref{suse:tight-bounds-PoA-prophet}
respectively provide explicit analytic expressions for the ordinary and prophet \ac{PoA} for affine costs. 
Section~\ref{suse:general-costs} presents some partial extensions for polynomial costs and puts forward two conjectures.

\subsection{Tight Bounds for the \acl{OPoA}}
\label{suse:tight-upper-bounds}
From Theorem~\ref{th:different-p-i} we know that 
$\OPoA(\Gamma^{\boldsymbol{p}})$ is maximal when all the probabilities are equal. 
The following theorem is our main estimate for the \ac{OPoA} with affine costs,
which determines explicitly the tight bounds $\OPoA(\mathcal{C}_{\mathsf{aff}},q)=\gamma(\mathcal{C}_{\mathsf{aff}}^{q})$ as a function of $q\in(0,1]$, with three different regimes. 
See Figure~\ref{fi:envelope} for the details. 

\begin{theorem}
\label{th:upper}
Let $\bar{q}_{0}=1/4$ and let $\bar{q}_{1}\sim 0.3774$ be the real root of $8 q^{3}+4q^{2}=1$.
Then, 
\begin{align}
&\OPoA(\mathcal{C}_{\mathsf{aff}},q)=\gamma(\mathcal{C}_{\mathsf{aff}}^{q})\nonumber\\
\label{eq:PoAbound}
&\quad= 
\begin{cases}
  4/3&\text{ if }\ 0< q\leq \bar{q}_{0},\\
  \dfrac{1+q+\sqrt{q(2+q)}}{1-q+\sqrt{q(2+q)}}&\text{ if }\ \bar{q}_{0}\leq q\leq \bar{q}_{1},\\
1+q+\dfrac{q^{2}}{1+q}&\text{ if }\ \bar{q}_{1}\leq q\leq 1.
\end{cases}
\end{align}
\end{theorem}

\begin{figure}[ht]
\centering
\includegraphics[scale=0.75]{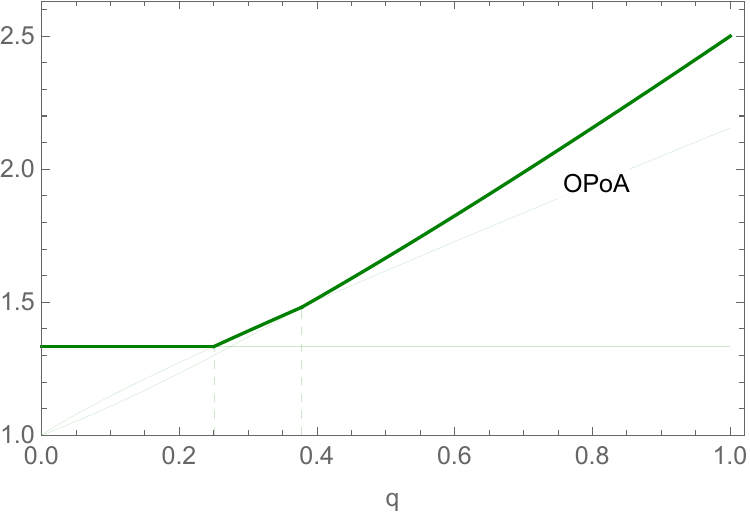}
  \caption{\label{fi:envelope} The upper envelope gives the tight bound on $\OPoA(\mathcal{C}_{\mathsf{aff}},q)$ as a function of $q$.}
\end{figure}

The proof is long and technical, especially in the intermediate range $\bar{q}_{0} \le q\leq\bar{q}_{1}$. It can be found in Section~EC.1 of the supplementary material. 
Here is a short sketch to illustrate the overall ideas.
The proof proceeds through a series of lemmas that characterize the optimal parameters $(\lambda,\mu)$  in \eqref{eq:r1} for each value of $q$.
Even if Theorem~\ref{th:different-p-i} already implies the tightness of the bound \eqref{eq:PoAbound}, in Section~EC.2 of the supplementary material we present three specific examples that attain this bound in the three different ranges of $q$.
These examples are somewhat simpler than those proposed in \citet{Rou:JACM2015}
and \citet{Gai:Thesis2006} and, moreover, they involve only purely linear costs of the form $c(x)=ax$ with $a\geq 0$.

\subsection{Tight Bounds for the \acl{PPoA}}
\label{suse:tight-bounds-PoA-prophet}
We now proceed to find tight bounds $\gamma_{\mathsf{pr}}(\mathcal{C}_{\mathsf{aff}},q)$ for the \acl{PPoA} of \aclp{BCG} with affine costs in $\mathcal{C}_{\mathsf{aff}}$. We can bound 
$\gamma_{\mathsf{pr}}(\mathcal{C}_{\mathsf{aff}},q)$ from above by the lower envelope of a countable family of functions as in Figure~\ref{fi:bounds-prophet}. Later we will show that these bounds are tight.
\begin{proposition}
\label{pr:lm-linear}
Let
\begin{subequations}
\label{eq:Xi-xi}
\begin{align}
\Xi(q)&=\inf_{\ell\geq 1}\xi_{\ell}(q) \\
\intertext{with}
\xi_{\ell}(q)&=\frac{\ell(\ell+1)q^2+2\ell q+1}{2\ell q}
\quad\text{for all }\ell\in \mathbb{N}\setminus\{0\}.
\end{align}
\end{subequations}
Then
\begin{equation}
\label{eq:PPOA-bound-linear}    
\PPoA(\mathcal{C}_{\mathsf{aff}},q)=\gamma_{\mathsf{pr}}(\mathcal{C}_{\mathsf{aff}},q)\leq\Xi(q) .
\end{equation}
\end{proposition}

\begin{figure}[ht]
\centering
\includegraphics[scale=0.75]{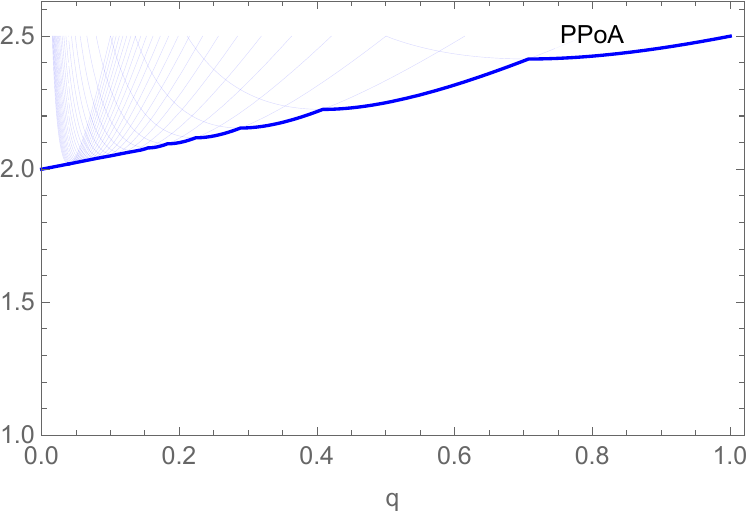}
\caption{\label{fi:bounds-prophet} The lower envelope $\Xi(q)$ gives a tight bound for $\PPoA(\mathcal{C}_{\mathsf{aff}},q)$.}
\end{figure} 
The proof proceeds by identifying,
for any given $\ell\geq 1$, some
specific values of $\lambda$ and $\mu$ that satisfy $(\lambda,\mu,q)$-smoothness with $\lambda/(1-\mu)=\xi_{\ell}(q)$.
Note that for any fixed $\ell$ the bound $\Xi(q)=\xi_{\ell}(q)$ is attained 
for all $q$ in the following interval (see Figure~\ref{fi:bounds-prophet}):
\begin{equation}
\label{eq:range}
\frac{1}{\sqrt{\ell(\ell+1)}}\leq q\leq \frac{1}{\sqrt{\ell(\ell-1)}}.
\end{equation}

\begin{corollary}
\label{co:q-1/l}
If $q=1/\ell$ for some $\ell\in\mathbb{N}\setminus\{0\}$, then $\gamma_{\mathsf{pr}}(\mathcal{C}_{\mathsf{aff}},q)\leq 2+q/2$.
\end{corollary}

\proof{Proof.}
Because $q=1/\ell$ lies in the range \eqref{eq:range}, by direct substitution we have $\Xi(q)=\xi_{\ell}(q)=2+q/2$.
\Halmos
\endproof

In Section~EC.3 of the supplementary material we construct a specific family of \aclp{BCG} $\Gamma^{\boldsymbol{p}}$ with
$p_{i}\equiv q$
for which
$\PPoA(\Gamma^{\boldsymbol{p}})$ approaches
$\Xi(q)$.
This implies that \eqref{eq:PPOA-bound-linear} in Proposition~\ref{pr:lm-linear} holds with equality.
\begin{theorem}
\label{th:Tight-PPoA-Affine}
$\PPoA(\mathcal{C}_{\mathsf{aff}},q)=\gamma_{\mathsf{pr}}(\mathcal{C}_{\mathsf{aff}},q)=\Xi(q)$ for all $q\in (0,1]$.
\end{theorem}

\subsection{Extension to polynomial costs}
\label{suse:general-costs}

Let  $\mathcal{P}_{d}$ be the class of polynomial costs  
$c(x)=\sum_{i=0}^{d} a_{i}x^{i}$ of  degree $d$ with $a_{i}\geq 0$.
Using the raw moments of binomial random variables and Stirling numbers $\stirling{i}{j}$ of the second kind \cite[see theorem 2.2 of][]{Kno:SIAMJAM2008}, the modified cost functions for the monomials $c_{i}(x) \eqdef x^{i}$ become
\begin{equation}
\label{eq:polyq}
c_{i}^{q}(k)=\sum_{j=0}^{i}\stirling{i+1}{j+1}\,q^{j+1}(k-1)\,\cdots(k-j).
\end{equation}
Each $c_{i}^{q}(\argdot)$ remains a polynomial in $k$ of degree $i$, 
but the formula becomes quite complex as $i$ increases.

Thus, deriving analytic expressions for $\gamma(\mathcal{P}_{d}^{q})$ and $\gamma_{\mathsf{pr}}(\mathcal{P}_{d},q)$ as a function of $q$ to compute the price of anarchy -- as done in Theorems~\ref{th:upper} and \ref{th:Tight-PPoA-Affine} for affine costs -- would be even more technical than the analysis in Section~EC.1 of the supplementary material. 

However, from \citet[theorem~7]{ComScaSchSti:MOR2023} we know that the limit of $\gamma(\mathcal{P}_{d}^{q})$ for $q\downarrow 0$ coincides with the nonatomic bound, which, combined with \citet{Rou:JCSS2003}, yields
\begin{equation*}
\lim_{q\downarrow 0}\OPoA(\mathcal{P}_{d},q)
=\frac{1}{1-d/(d+1)^{(d+1)/d}}.    
\end{equation*}
We conjecture that this is not only attained in the limit, but equality holds for all $q\leq
d/\parens*{(2^{d}-1)(d+1)^{(d+1)/d}}$.

On the other hand, we have the following lower bound in the prophet case, which shows that  $\PPoA(\mathcal{P}_{d},{0^+})$ is significantly larger than the $\OPoA(\mathcal{P}_{d},{0^+})$.
The proof is in Section~EC.4 of the supplementary material.
\begin{proposition}
\label{pr:proph}
Let $B_{d} = \sum_{j=0}^{d} \stirling{d}{j}$ be the $d$-th Bell number. 
Then 
$\PPoA(\mathcal{P}_{d},{0^+})\geq  B_{d+1}$.
\end{proposition}

At the other extreme, when $q\uparrow 1$, both $\OPoA(\mathcal{P}_{d},q)$ and $\PPoA(\mathcal{P}_{d},q)$
increase monotonically to the same deterministic atomic bound given in \citet{AlaDumGaiMonSch:SIAMJC2011}, that is 
\begin{align*}
&\OPoA(\mathcal{P}_{d},1)=\PPoA(\mathcal{P}_{d},1)\\
&\quad=\frac{(\varphi+1)^{2 d+1}-\varphi^{d+1}(\varphi+2)^{d}}{(\varphi+1)^{d+1}-(\varphi+2)^{d}+(\varphi+1)^{d}-\varphi^{d+1}}, 
\end{align*}
where $\varphi=\lfloor x^*\rfloor$, with $x^*$ the unique positive solution of the equation $(x+1)^{d}=x^{d+1}$. 

Moreover, as shown by \citet{Rou:JCSS2003} in the nonatomic case and by \citet{AlaDumGaiMonSch:SIAMJC2011} for 
  deterministic atomic games, the bounds in these two limit cases are dominated by the highest degree monomial
$c_{d}(x)=x^{d}$, and the lower terms $c_{i}(x)=x^i$ for $i<d$ can be ignored.
As a matter of fact, for the \ac{OPoA} this holds not only for $q\downarrow 0$ and $q\uparrow 1$ but for all $q\in (0,1]$ (again, the proof is in Section~EC.4 of the supplementary material).

\begin{proposition}
\label{pr:polycosts}
For all $q\in (0,1]$ we have $\OPoA(\mathcal{P}_{d},q)=\gamma(\{c_{d}^{q}\})$.
\end{proposition}  

We conjecture that a similar equality holds for the prophet \ac{PoA}, namely, $\PPoA(\mathcal{P}_{d},q)=
\gamma_{\mathsf{pr}}(\{c_{d}\},q)$.

\section{Conclusions}\label{se:concl}

Our work studies atomic congestion games with stochastic demands. 
In the model we propose,  each player either participates in the game with an idiosyncratic probability or stays out.
We contrast the ensuing equilibria with what can be achieved by central planners with different foresight skills. 
A prophet planner has access to real-time information and can make contingent plans after learning the demand realizations;
an ordinary planner does not have access to up-to-date information and can only plan based on the demand distribution.
Our main results consist of analytic expressions that describe how the \ac{PoA} changes as a function of the user-participation probabilities. 
We have computed these expressions explicitly for games with affine cost functions. 
In a high participation regime, results tend to what is known for deterministic games. 
More interestingly, for low participation, equilibria are closer to social optima because worst-case inefficiencies arise when certain levels of congestion in the system are attained. 
Our results quantify the value of the additional information available to prophet planners, as mediated by the participation probabilities. 
We also note that the resulting curves have various regimes and are not concave nor convex. 
This discussion is related to information availability and how it affects the economics of networks for platforms such as Google maps, Apple maps, and Waze.

Although not a focal part of the paper, we highlight that market operators that act as a prophet planner can better match participating users to optimal routes. 
This could be used to guide the system to a more efficient market outcome, either through route recommendations, routing directly if cars were self-driving, or indirectly through information transmission or pricing. 
Depending on implementation constraints, ordinary planners may only charge fixed fees (e.g., network pricing in London, implemented around two decades ago), versus modern systems with real-time information that perform dynamic pricing (e.g., highway high-occupancy vehicle lanes).

To put the efficiency results in perspective,
\citet{Rou:ACMTEC2015} showed that, when dealing with games of incomplete information, the bounds for the corresponding games of complete information are still valid for prophet planners. 
His framework for incomplete information games is very robust, but requires a smoothness definition that holds across different types \citep[see][definition~3.1 and remark~3.2]{Rou:ACMTEC2015}.
A result in the same spirit appears in \cite{CorHoeSch:TRB2019}, who showed that upper bounds derived from the smoothness framework continue to hold for ordinary planners in \acp{BCG} even if the events of players being active are not independent and identically distributed.
These authors consider a class of games and an information structure that makes these objects games of incomplete information; then they compute bounds for the \ac{PoA} of games in this class over all possible probabilities that characterize the incomplete information.
They show the remarkable result that the performance of the \ac{PoA} does not decay in the presence of incomplete information.

Our results are in a different spirit.
We fix not only the class of games and the information structure, but also the probability measure and examine the behavior of the \acl{PoA} as this probability varies.
In our case, when the probability is characterized by a single parameter $q$, this is tantamount to studying the \ac{OPoA} and \ac{PPoA} as a univariate function of this parameter. 
This means that, for a fixed value of $q$, we consider the worst-case \ac{OPoA} and \ac{PPoA} among all possible instances where participation probabilities of players are bounded above by $q$.
The main results in this respect are:
\begin{enumerate}[(a)]
\item For any family $\mathcal{C}$ of  nonnegative and nondecreasing cost functions and any $q\in(0,1]$ we have $\OPoA(\mathcal{C},q) = \gamma(\mathcal{C}^{q})$ and $\PPoA(\mathcal{C},q)=\gamma_{\mathsf{pr}}(\mathcal{C},q)$. In particular, 
this implies that both 
$\gamma(\mathcal{C}^{q})$ as well as 
$\gamma_{\mathsf{pr}}(\mathcal{C},q)$ are  nondecreasing in $q$.

\item For the class $\mathcal{C}_{\mathsf{aff}}$ of affine costs, we provide analytic expressions for the worst case bounds 
$\OPoA(\mathcal{C}_{\mathsf{aff}},q)$ and
$\PPoA(\mathcal{C}_{\mathsf{aff}},q)$
as functions of $q$. 

\item The presence of two kinks in the  function $\OPoA(\mathcal{C}_{\mathsf{aff}},q)$, which turns out to be constant and equal to $4/3$ for $q\in(0,1/4]$, exactly as in nonatomic congestion games with affine costs, whereas the maximum of $5/2$, which is the \ac{PoA} in the atomic case, is only attained in the limit as $q \to 1$ (see Figure~\ref{fi:bounds}).

\item The presence of countably many kinks in $\PPoA(\mathcal{C}_{\mathsf{aff}},q)$ and its convergence to $2$ as $q \to 0$ (see Figure~\ref{fi:bounds}).
\end{enumerate}

Several natural questions remain open. First, it is unclear how to adapt the current lower bound construction of $\PPoA(\mathcal{C},q)=\gamma_{\mathsf{pr}}(\mathcal{C},q)$ to routing games.

Second, what can be said about the \acdef{PoS}, which captures the inefficiency of the best equilibria, as first defined by \citet{SchSti:P14ACM2003} and coined by \citet{AnsDasKleTarWexRou:SIAMJC2008}, 
\citet[theorem~5]{KleSch:TCS2019} established a tight bound of $1+\sqrt{q/(2+q)}$ for all $q\ge 1/4$ on the ordinary price of stability. 
A tight bound of $4/3$ for all $q\le 1/4$ on the ordinary price of stability completes this characterization. However, how does a characterization looks like for the prophet price of stability?

Another interesting question is to consider a version of Theorem~\ref{th:different-p-i} where the number $n$ of players is kept fixed, and shed light on the efficiency of equilibria.

Finally, our model works also without the independence hypothesis, in the sense that a stochastic congestion game can be defined for any joint distribution of players participation.
The equilibria and optimum of the game will depend on the whole distribution and not just on the marginals, so the game will require a more complex description.
Moreover, if the agents take part in the game in a correlated way, without any constraint on the possible dependence structure, then the best lower and upper bounds for the \ac{OPoA} and the \ac{PPoA} coincide with the bounds for the deterministic game. 
To wit, let the participation of the players be comonotonic with equal marginals $p$, i.e.,  with probability $p$ all players take part in the game and with probability $(1-p)$ they are all absent. 
Take a deterministic game $\Gamma$ with social cost function $C$ and let $\hat{\boldsymbol{s}}$ and $\boldsymbol{s}^{\ast}$ be, respectively, the worst equilibrium and an optimum of this deterministic game.
Then, for any $p\in(0,1]$, in the stochastic congestion we have that the worst equilibrium and the optimum are the same as in the deterministic game, both in the ordinary and prophet cases. 
The corresponding social costs are just the respective social costs of the deterministic game multiplied by $p$.
This implies that for every $p\in(0,1]$ the \ac{OPoA} and the \ac{PPoA} are equal to the \ac{PoA} of the deterministic game.
It is enough to choose a game $\Gamma$ that achieves the worst \ac{PoA} in a class to get our result.

\subsection*{Acknowledgments}
Some results about the ordinary planner that appear in this paper were presented at several seminars and at EC'19. We would like to thank the reviewers and participants for several insightful comments and remarks that made the results better and the presentation more clear. In addition, we would like to thank Vittorio Bil\`o for pointing out the relations to one of his earlier papers.
We thank an anonymous reviewer in the journal submission process for prompting us to consider the case where the hypothesis of independence for the agents' participation is removed.

This collaboration started in Schloss Dagstuhl at the Seminar on Dynamic Traffic Models in Transportation Science in 2018.
Roberto Cominetti gratefully acknowledges support from Luiss University for a visit during which part of this research took place.
Marco Scarsini gratefully acknowledges the support and hospitality of N\'ucleo Milenio ``Informaci\'on y Coordinaci\'on en Redes.''
Part of this work was carried out when Marc Schr\"oder was a visiting professor at Luiss University. 
This project was further carried out when Marc Schr\"oder and Marco Scarsini were taking part in the program on Games, Learning, and Networks at the Institute for Mathematical Sciences, National University of Singapore in 2023.
Marco Scarsini is a member of GNAMPA-INdAM.

\theendnotes


\bibliographystyle{informs2014}
\bibliography{bibstochcong}

\begin{thebibliography}{69}
\providecommand{\natexlab}[1]{#1}
\providecommand{\url}[1]{\texttt{#1}}
\providecommand{\urlprefix}{URL }

\bibitem[{Acemoglu et~al.(2018)Acemoglu, Makhdoumi, Malekian, \protect\BIBand{}
  Ozdaglar}]{AceMakMalOzd:OR2018}
Acemoglu D, Makhdoumi A, Malekian A, Ozdaglar A (2018) Informational {B}raess'
  paradox: the effect of information on traffic congestion. \emph{Oper. Res.}
  66(4):893--917, \urlprefix\url{http://dx.doi.org/10.1287/opre.2017.1712}.

\bibitem[{Aland et~al.(2011)Aland, Dumrauf, Gairing, Monien, \protect\BIBand{}
  Schoppmann}]{AlaDumGaiMonSch:SIAMJC2011}
Aland S, Dumrauf D, Gairing M, Monien B, Schoppmann F (2011) Exact price of
  anarchy for polynomial congestion games. \emph{SIAM J. Comput.}
  40(5):1211--1233, \urlprefix\url{http://dx.doi.org/10.1137/090748986}.

\bibitem[{Angelidakis et~al.(2013)Angelidakis, Fotakis, \protect\BIBand{}
  Lianeas}]{AngFotLia:AGT2013}
Angelidakis H, Fotakis D, Lianeas T (2013) Stochastic congestion games with
  risk-averse players. \emph{Algorithmic Game Theory}, volume 8146 of
  \emph{Lecture Notes in Comput. Sci.}, 86--97 (Springer, Heidelberg),
  \urlprefix\url{http://dx.doi.org/10.1007/978-3-642-41392-6_8}.

\bibitem[{Anshelevich et~al.(2008)Anshelevich, Dasgupta, Kleinberg, Tardos,
  Wexler, \protect\BIBand{} Roughgarden}]{AnsDasKleTarWexRou:SIAMJC2008}
Anshelevich E, Dasgupta A, Kleinberg J, Tardos E, Wexler T, Roughgarden T
  (2008) The price of stability for network design with fair cost allocation.
  \emph{SIAM J. Comput.} 38(4):1602--1623,
  \urlprefix\url{http://dx.doi.org/10.1137/070680096}.

\bibitem[{Ashlagi et~al.(2006)Ashlagi, Monderer, \protect\BIBand{}
  Tennenholtz}]{AshMonTen:AAMAS2006}
Ashlagi I, Monderer D, Tennenholtz M (2006) Resource selection games with
  unknown number of players. \emph{Proceedings of the Fifth International Joint
  Conference on Autonomous Agents and Multiagent Systems}, 819--825, AAMAS '06
  (New York, NY, USA: ACM),
  \urlprefix\url{http://dx.doi.org/10.1145/1160633.1160782}.

\bibitem[{Awerbuch et~al.(2013)Awerbuch, Azar, \protect\BIBand{}
  Epstein}]{AweAzaEps:SIAMJC2013}
Awerbuch B, Azar Y, Epstein A (2013) The price of routing unsplittable flow.
  \emph{SIAM J. Comput.} 42(1):160--177,
  \urlprefix\url{http://dx.doi.org/10.1137/070702370}.

\bibitem[{Beckmann et~al.(1956)Beckmann, McGuire, \protect\BIBand{}
  Winsten}]{BecMcGWin:Yale1956}
Beckmann MJ, McGuire C, Winsten CB (1956) \emph{Studies in the Economics of
  Transportation} (New Haven, CT: Yale University Press).

\bibitem[{Bil\`{o} et~al.(2024)Bil\`{o}, Moscardelli, \protect\BIBand{}
  Vinci}]{BilMosVin:MOR2023}
Bil\`{o} V, Moscardelli L, Vinci C (2024) Uniform mixed equilibria in network
  congestion games with link failures. \emph{Math. Oper. Res.} 49(1):509--535,
  \urlprefix\url{http://dx.doi.org/10.1287/moor.2023.1365}.

\bibitem[{Caragiannis et~al.(2010)Caragiannis, Kaklamanis, Kanellopoulos,
  Kyropoulou, \protect\BIBand{} Papaioannou}]{CarKakKanKyrPap:TGC2010}
Caragiannis I, Kaklamanis C, Kanellopoulos P, Kyropoulou M, Papaioannou E
  (2010) The impact of altruism on the efficiency of atomic congestion games.
  Wirsing M, Hofmann M, Rauschmayer A, eds., \emph{Trustworthly Global
  Computing}, 172--188 (Berlin, Heidelberg: Springer),
  \urlprefix\url{https://doi.org/10.1007/978-3-642-15640-3_12}.

\bibitem[{Cesa-Bianchi \protect\BIBand{} Lugosi(2006)}]{CesLug:CUP2006}
Cesa-Bianchi N, Lugosi G (2006) \emph{Prediction, Learning, and Games}
  (Cambridge University Press, Cambridge),
  \urlprefix\url{http://dx.doi.org/10.1017/CBO9780511546921}.

\bibitem[{Chen et~al.(2014)Chen, de~Keijzer, Kempe, \protect\BIBand{}
  Sch\"{a}fer}]{ChedeKKemSch:TEAC2014}
Chen PA, de~Keijzer B, Kempe D, Sch\"{a}fer G (2014) Altruism and its impact on
  the price of anarchy. \emph{ACM Trans. Econ. Comput.} 2(4):17:1--17:45,
  \urlprefix\url{http://dx.doi.org/10.1145/2597893}.

\bibitem[{Christodoulou \protect\BIBand{} Koutsoupias(2005)}]{ChrKou:STOC2005}
Christodoulou G, Koutsoupias E (2005) The price of anarchy of finite congestion
  games. \emph{S{TOC}'05: {P}roceedings of the 37th {A}nnual {ACM} {S}ymposium
  on {T}heory of {C}omputing}, 67--73 (New York: ACM),
  \urlprefix\url{http://dx.doi.org/10.1145/1060590.1060600}.

\bibitem[{Colini-Baldeschi et~al.(2017)Colini-Baldeschi, Cominetti,
  Mertikopoulos, \protect\BIBand{} Scarsini}]{ColComMerSca:WINE2017}
Colini-Baldeschi R, Cominetti R, Mertikopoulos P, Scarsini M (2017) The
  asymptotic behavior of the price of anarchy. R~Devanur N, Lu P, eds.,
  \emph{Web and Internet Economics: 13th International Conference, WINE 2017},
  133--145 (Cham: Springer International Publishing),
  \urlprefix\url{http://dx.doi.org/10.1007/978-3-319-71924-5_10}.

\bibitem[{Colini-Baldeschi et~al.(2020)Colini-Baldeschi, Cominetti,
  Mertikopoulos, \protect\BIBand{} Scarsini}]{ColComMerSca:OR2020}
Colini-Baldeschi R, Cominetti R, Mertikopoulos P, Scarsini M (2020) When is
  selfish routing bad? {T}he price of anarchy in light and heavy traffic.
  \emph{Oper. Res.} 68(2):411--434,
  \urlprefix\url{http://dx.doi.org/10.1287/opre.2019.1894}.

\bibitem[{Colini-Baldeschi et~al.(2019)Colini-Baldeschi, Cominetti,
  \protect\BIBand{} Scarsini}]{ColComSca:TOCS2019}
Colini-Baldeschi R, Cominetti R, Scarsini M (2019) Price of anarchy for highly
  congested routing games in parallel networks. \emph{Theory Comput. Syst.}
  63(1):90--113, \urlprefix\url{http://dx.doi.org/10.1007/s00224-017-9834-1}.

\bibitem[{Cominetti et~al.(2024)Cominetti, Dose, \protect\BIBand{}
  Scarsini}]{ComDosSca:MPB2024}
Cominetti R, Dose V, Scarsini M (2024) The price of anarchy in routing games as
  a function of the demand. \emph{Math. Program.} 203(1-2):531--558,
  \urlprefix\url{http://dx.doi.org/10.1007/s10107-021-01701-7}.

\bibitem[{Cominetti et~al.(2023)Cominetti, Scarsini, Schr\"{o}der,
  \protect\BIBand{} Stier-Moses}]{ComScaSchSti:MOR2023}
Cominetti R, Scarsini M, Schr\"{o}der M, Stier-Moses N (2023) Approximation and
  convergence of large atomic congestion games. \emph{Math. Oper. Res.}
  48(2):784--811, \urlprefix\url{http://dx.doi.org/10.1287/moor.2022.1281}.

\bibitem[{Cominetti \protect\BIBand{} Torrico(2016)}]{ComTor:MOR2016}
Cominetti R, Torrico A (2016) Additive consistency of risk measures and its
  application to risk-averse routing in networks. \emph{Math. Oper. Res.}
  41(4):1510--1521, \urlprefix\url{http://dx.doi.org/10.1287/moor.2016.0787}.

\bibitem[{Correa et~al.(2019)Correa, Hoeksma, \protect\BIBand{}
  Schr{\"o}der}]{CorHoeSch:TRB2019}
Correa J, Hoeksma R, Schr{\"o}der M (2019) Network congestion games are robust
  to variable demand. \emph{Transportation Res. Part B} 119:69--78,
  \urlprefix\url{https://doi.org/10.1016/j.trb.2018.11.010}.

\bibitem[{Correa et~al.(2004)Correa, Schulz, \protect\BIBand{}
  Stier-Moses}]{CorSchSti:MOR2004}
Correa JR, Schulz AS, Stier-Moses NE (2004) Selfish routing in capacitated
  networks. \emph{Math. Oper. Res.} 29(4):961--976,
  \urlprefix\url{http://dx.doi.org/10.1287/moor.1040.0098}.

\bibitem[{Correa et~al.(2008)Correa, Schulz, \protect\BIBand{}
  Stier-Moses}]{CorSchSti:GEB2008}
Correa JR, Schulz AS, Stier-Moses NE (2008) A geometric approach to the price
  of anarchy in nonatomic congestion games. \emph{Games Econom. Behav.}
  64(2):457--469, \urlprefix\url{http://dx.doi.org/10.1016/j.geb.2008.01.001}.

\bibitem[{Correa \protect\BIBand{} Stier-Moses(2011)}]{CorSti:WEORMS2011}
Correa JR, Stier-Moses NE (2011) \emph{Wardrop Equilibria} (John Wiley \& Sons,
  Ltd),
  \urlprefix\url{http://dx.doi.org/https://doi.org/10.1002/9780470400531.eorms0962}.

\bibitem[{Dobiński(1877)}]{Dob:1877}
Dobiński G (1877) Summirung der {R}eihe $\sum\frac{n^m}{n!}$ f\"ur $m = 1, 2,
  3, 4, 5,\ldots$. \emph{Grunert's Archiv (in German)} 61:333--336.

\bibitem[{Dumrauf \protect\BIBand{} Gairing(2006)}]{DumGai:WINE2006}
Dumrauf D, Gairing M (2006) Price of anarchy for polynomial {W}ardrop games.
  Spirakis P, Mavronicolas M, Kontogiannis S, eds., \emph{Internet and Network
  Economics}, 319--330 (Berlin, Heidelberg: Springer),
  \urlprefix\url{https://doi.org/10.1007/11944874_29}.

\bibitem[{Foster \protect\BIBand{} Vohra(1997)}]{FosVoh:GEB1997}
Foster DP, Vohra RV (1997) Calibrated learning and correlated equilibrium.
  \emph{Games Econom. Behav.} 21(1-2):40--55,
  \urlprefix\url{http://dx.doi.org/10.1006/game.1997.0595}.

\bibitem[{Gairing(2006)}]{Gai:Thesis2006}
Gairing M (2006) Selfish routing in networks. \emph{Ph.D. Dissertation.
  Universit\"at Paderborn.} .

\bibitem[{Gairing et~al.(2008)Gairing, Monien, \protect\BIBand{}
  Tiemann}]{GaiMomTie:TCS2008}
Gairing M, Monien B, Tiemann K (2008) Selfish routing with incomplete
  information. \emph{Theory Comput. Syst.} 42(1):91--130,
  \urlprefix\url{http://dx.doi.org/10.1007/s00224-007-9015-8}.

\bibitem[{Griesbach et~al.(2022)Griesbach, Hoefer, Klimm, \protect\BIBand{}
  Koglin}]{GriHoeKliKog:ec2022}
Griesbach SM, Hoefer M, Klimm M, Koglin T (2022) Public signals in network
  congestion games. \emph{Proceedings of the 23rd ACM Conference on Economics
  and Computation}, 736, EC '22 (New York, NY, USA: Association for Computing
  Machinery), \urlprefix\url{http://dx.doi.org/10.1145/3490486.3538349}.

\bibitem[{Hannan(1957)}]{Han:CTG1957}
Hannan J (1957) Approximation to {B}ayes risk in repeated play.
  \emph{Contributions to the Theory of Games, vol. 3}, 97--139, Annals of
  Mathematics Studies, no. 39 (Princeton University Press, Princeton, N.J.).

\bibitem[{Harks \protect\BIBand{} V{\'e}gh(2007)}]{HarVeg:P4CCAAN2007}
Harks T, V{\'e}gh LA (2007) Nonadaptive selfish routing with online demands.
  \emph{Proceedings of the 4th Conference on Combinatorial and Algorithmic
  Aspects of Networking}, 27--45, CAAN'07 (Berlin, Heidelberg:
  Springer-Verlag),
  \urlprefix\url{http://dl.acm.org/citation.cfm?id=1778487.1778494}.

\bibitem[{Kleer \protect\BIBand{} Sch{\"a}fer(2019)}]{KleSch:TCS2019}
Kleer P, Sch{\"a}fer G (2019) Tight inefficiency bounds for
  perception-parameterized affine congestion games. \emph{Theoret. Comput.
  Sci.} 754:65--87,
  \urlprefix\url{http://dx.doi.org/https://doi.org/10.1016/j.tcs.2018.04.025}.

\bibitem[{Knoblauch(2008)}]{Kno:SIAMJAM2008}
Knoblauch A (2008) Closed-form expressions for the moments of the binomial
  probability distribution. \emph{SIAM J. Appl. Math.} 69(1):197--204,
  \urlprefix\url{http://dx.doi.org/10.1137/070700024}.

\bibitem[{Koutsoupias \protect\BIBand{} Papadimitriou(1999)}]{KouPap:STACS1999}
Koutsoupias E, Papadimitriou C (1999) Worst-case equilibria. \emph{S{TACS} 99
  ({T}rier)}, volume 1563 of \emph{Lecture Notes in Comput. Sci.}, 404--413
  (Berlin: Springer),
  \urlprefix\url{http://dx.doi.org/10.1007/3-540-49116-3_38}.

\bibitem[{Koutsoupias \protect\BIBand{} Papadimitriou(2009)}]{KouPap:CSR2009}
Koutsoupias E, Papadimitriou C (2009) Worst-case equilibria. \emph{Computer
  Science Review} 3(2):65--69,
  \urlprefix\url{http://dx.doi.org/https://doi.org/10.1016/j.cosrev.2009.04.003}.

\bibitem[{Li et~al.(2017)Li, Jia, Tan, Wang, Han, \protect\BIBand{}
  Lau}]{li2017congestion}
Li Y, Jia Y, Tan H, Wang R, Han Z, Lau FCM (2017) Congestion game with agent
  and resource failures. \emph{IEEE Journal on Selected Areas in
  Communications} 35(3):764--778,
  \urlprefix\url{https://doi.org/10.1109/JSAC.2017.2672358}.

\bibitem[{Lianeas et~al.(2019)Lianeas, Nikolova, \protect\BIBand{}
  Stier-Moses}]{LiaNikSti:MOR2019}
Lianeas T, Nikolova E, Stier-Moses NE (2019) Risk-averse selfish routing.
  \emph{Math. Oper. Res.} 44(1):38--57,
  \urlprefix\url{http://dx.doi.org/10.1287/moor.2017.0913}.

\bibitem[{Macault et~al.(2022)Macault, Scarsini, \protect\BIBand{}
  Tomala}]{MacScaTom:GEB2022}
Macault E, Scarsini M, Tomala T (2022) Social learning in nonatomic routing
  games. \emph{Games Econom. Behav.} 132:221--233,
  \urlprefix\url{http://dx.doi.org/10.1016/j.geb.2022.01.003}.

\bibitem[{Macault et~al.(2023)Macault, Scarsini, \protect\BIBand{}
  Tomala}]{MacScaTom:GEB2023}
Macault E, Scarsini M, Tomala T (2023) Corrigendum to ``{S}ocial learning in
  nonatomic routing games'' [{G}ames {E}con. {B}ehav. 132 (2022) 221-233].
  \emph{Games Econom. Behav.} 138:407--408,
  \urlprefix\url{http://dx.doi.org/10.1016/j.geb.2022.08.001}.

\bibitem[{Marcotte et~al.(2004)Marcotte, Nguyen, \protect\BIBand{}
  Schoeb}]{MarNguSch:OR2004}
Marcotte P, Nguyen S, Schoeb A (2004) A strategic flow model of traffic
  assignment in static capacitated networks. \emph{Oper. Res.} 52(2):191--212,
  \urlprefix\url{http://dx.doi.org/10.1287/opre.1030.0091}.

\bibitem[{Meir et~al.(2012)Meir, Tennenholtz, Bachrach, \protect\BIBand{}
  Key}]{meir2012congestion}
Meir R, Tennenholtz M, Bachrach Y, Key P (2012) Congestion games with agent
  failures. \emph{Proceedings of the AAAI Conference on Artificial
  Intelligence}, volume~26, 1401--1407,
  \urlprefix\url{http://dx.doi.org/https://doi.org/10.1609/aaai.v26i1.8244}.

\bibitem[{Miller-Hooks(2001)}]{Mil:N2001}
Miller-Hooks E (2001) Adaptive least-expected time paths in stochastic,
  time-varying transportation and data networks. \emph{Networks} 37(1):35--52,
  \urlprefix\url{http://dx.doi.org/10.1002/1097-0037(200101)37:1<35::AID-NET4>3.0.CO;2-G}.

\bibitem[{Nguyen \protect\BIBand{} Pallottino(1988)}]{NguPal:EJOR1988}
Nguyen S, Pallottino S (1988) Equilibrium traffic assignment for large scale
  transit networks. \emph{European J. Oper. Res.} 37(2):176--186,
  \urlprefix\url{http://dx.doi.org/10.1016/0377-2217(88)90327-X}.

\bibitem[{Nikolova \protect\BIBand{} Stier-Moses(2014)}]{NikSti:OR2014}
Nikolova E, Stier-Moses NE (2014) A mean-risk model for the traffic assignment
  problem with stochastic travel times. \emph{Oper. Res.} 62(2):366--382,
  \urlprefix\url{http://dx.doi.org/10.1287/opre.2013.1246}.

\bibitem[{Ord{\'o}{\~n}ez \protect\BIBand{} Stier-Moses(2010)}]{OrdSti:TS2010}
Ord{\'o}{\~n}ez F, Stier-Moses NE (2010) Wardrop equilibria with risk-averse
  users. \emph{Transportation Sci.} 44(1):63--86,
  \urlprefix\url{http://dx.doi.org/10.1287/trsc.1090.0292}.

\bibitem[{Penn et~al.(2009)Penn, Polukarov, \protect\BIBand{}
  Tennenholtz}]{PenPolTen:GEB2009}
Penn M, Polukarov M, Tennenholtz M (2009) Congestion games with load-dependent
  failures: identical resources. \emph{Games Econom. Behav.} 67(1):156--173,
  ISSN 0899-8256,1090-2473,
  \urlprefix\url{http://dx.doi.org/10.1016/j.geb.2009.03.004}.

\bibitem[{Penn et~al.(2011)Penn, Polukarov, \protect\BIBand{}
  Tennenholtz}]{PenPolTen:DAM2011}
Penn M, Polukarov M, Tennenholtz M (2011) Congestion games with failures.
  \emph{Discrete Appl. Math.} 159(15):1508--1525,
  \urlprefix\url{http://dx.doi.org/10.1016/j.dam.2011.01.019}.

\bibitem[{Pigou(1920)}]{Pig:Macmillan1920}
Pigou AC (1920) \emph{The Economics of Welfare} (London: Macmillan and Co.),
  1st edition.

\bibitem[{Piliouras et~al.(2016)Piliouras, Nikolova, \protect\BIBand{}
  Shamma}]{PilNikSha:TEAC2016}
Piliouras G, Nikolova E, Shamma JS (2016) Risk sensitivity of price of anarchy
  under uncertainty. \emph{ACM Trans. Econ. Comput.} 5(1):Art. 5, 27,
  \urlprefix\url{http://dx.doi.org/10.1145/2930956}.

\bibitem[{Rosenthal(1973)}]{Ros:IJGT1973}
Rosenthal RW (1973) A class of games possessing pure-strategy {N}ash
  equilibria. \emph{Internat. J. Game Theory} 2:65--67,
  \urlprefix\url{http://dx.doi.org/10.1007/BF01737559}.

\bibitem[{Roughgarden(2003)}]{Rou:JCSS2003}
Roughgarden T (2003) The price of anarchy is independent of the network
  topology. \emph{J. Comput. System Sci.} 67(2):341--364,
  \urlprefix\url{http://dx.doi.org/10.1016/S0022-0000(03)00044-8}, special
  issue on STOC2002 (Montreal, QC).

\bibitem[{Roughgarden(2005)}]{Rou:ACMSIAM2005}
Roughgarden T (2005) Selfish routing with atomic players. \emph{Proceedings of
  the {S}ixteenth {A}nnual {ACM}-{SIAM} {S}ymposium on {D}iscrete
  {A}lgorithms}, 1184--1185 (ACM, New York),
  \urlprefix\url{https://dl.acm.org/doi/abs/10.5555/1070432.1070600}.

\bibitem[{Roughgarden(2007)}]{Rou:AGT2007}
Roughgarden T (2007) Routing games. \emph{Algorithmic game theory}, 461--486
  (Cambridge Univ. Press, Cambridge),
  \urlprefix\url{https://doi.org/10.1017/CBO9780511800481.020}.

\bibitem[{Roughgarden(2015{\natexlab{a}})}]{Rou:JACM2015}
Roughgarden T (2015{\natexlab{a}}) Intrinsic robustness of the price of
  anarchy. \emph{J. ACM} 62(5):Art. 32, 1--42,
  \urlprefix\url{http://dx.doi.org/10.1145/2806883}.

\bibitem[{Roughgarden(2015{\natexlab{b}})}]{Rou:ACMTEC2015}
Roughgarden T (2015{\natexlab{b}}) The price of anarchy in games of incomplete
  information. \emph{ACM Trans. Econ. Comput.} 3(1):Art. 6, 1--20,
  \urlprefix\url{http://dx.doi.org/10.1145/2737816}.

\bibitem[{Roughgarden \protect\BIBand{} Tardos(2002)}]{RouTar:JACM2002}
Roughgarden T, Tardos E (2002) How bad is selfish routing? \emph{J. ACM}
  49(2):236--259, \urlprefix\url{http://dx.doi.org/10.1145/506147.506153}.

\bibitem[{Roughgarden \protect\BIBand{} Tardos(2004)}]{RouTar:GEB2004}
Roughgarden T, Tardos E (2004) Bounding the inefficiency of equilibria in
  nonatomic congestion games. \emph{Games Econom. Behav.} 47(2):389--403,
  \urlprefix\url{http://dx.doi.org/10.1016/j.geb.2003.06.004}.

\bibitem[{Roughgarden \protect\BIBand{} Tardos(2007)}]{RouTar:AGT2007}
Roughgarden T, Tardos E (2007) Introduction to the inefficiency of equilibria.
  \emph{Algorithmic game theory}, 443--459 (Cambridge Univ. Press, Cambridge),
  ISBN 978-0-521-87282-9; 0-521-87282-0,
  \urlprefix\url{https://doi.org/10.1017/CBO9780511800481.019}.

\bibitem[{Schulz \protect\BIBand{} Stier~Moses(2003)}]{SchSti:P14ACM2003}
Schulz AS, Stier~Moses N (2003) On the performance of user equilibria in
  traffic networks. \emph{Proceedings of the {F}ourteenth {A}nnual {ACM}-{SIAM}
  {S}ymposium on {D}iscrete {A}lgorithms ({B}altimore, {MD}, 2003)}, 86--87
  (ACM, New York),
  \urlprefix\url{https://dl.acm.org/doi/10.5555/644108.644121}.

\bibitem[{Suri et~al.(2007)Suri, T\'oth, \protect\BIBand{}
  Zhou}]{SurTotZho:A2007}
Suri S, T\'oth CD, Zhou Y (2007) Selfish load balancing and atomic congestion
  games. \emph{Algorithmica} 47(1):79--96,
  \urlprefix\url{http://dx.doi.org/10.1007/s00453-006-1211-4}.

\bibitem[{Syrgkanis(2015)}]{syrgkanis2015price}
Syrgkanis V (2015) Price of stability in games of incomplete information.
  Technical report, arXiv:1503.03739,
  \urlprefix\url{http://arxiv.org/abs/1503.03739}.

\bibitem[{Touchard(1939)}]{Tou:AM1939}
Touchard J (1939) Sur les cycles des substitutions. \emph{Acta Math.}
  70(1):243--297, \urlprefix\url{http://dx.doi.org/10.1007/BF02547349}.

\bibitem[{Ukkusuri \protect\BIBand{} Waller(2010)}]{UkkWal:TL2010}
Ukkusuri S, Waller S (2010) Approximate analytical expressions for
  transportation network performance under demand uncertainty.
  \emph{Transportation Lett.} 2(2):111--123,
  \urlprefix\url{http://dx.doi.org/10.3328/TL.2010.02.02.111-123}.

\bibitem[{Wang et~al.(2014)Wang, Doan, \protect\BIBand{}
  Chen}]{WanDoaChe:TRB2014}
Wang C, Doan XV, Chen B (2014) Price of anarchy for non-atomic congestion games
  with stochastic demands. \emph{Transportation Res. Part B} 70:90--111,
  \urlprefix\url{http://dx.doi.org/https://doi.org/10.1016/j.trb.2014.08.009}.

\bibitem[{Wardrop(1952)}]{War:PICE1952}
Wardrop JG (1952) Some theoretical aspects of road traffic research.
  \emph{Proceedings of the Institute of Civil Engineers, Part II}, volume~1,
  325--378, \urlprefix\url{https://doi.org/10.1680/ipeds.1952.11259}.

\bibitem[{Wrede(2019)}]{Wrede2019}
Wrede S (2019) On the price of anarchy in atomic congestion games under
  player-entrance-probabilities, master's thesis, RWTH Aachen University.

\bibitem[{Wu et~al.(2021{\natexlab{a}})Wu, Amin, \protect\BIBand{}
  Ozdaglar}]{WuAmiOzd:OR2021}
Wu M, Amin S, Ozdaglar AE (2021{\natexlab{a}}) Value of information in
  {B}ayesian routing games. \emph{Oper. Res.} 69(1):148--163,
  \urlprefix\url{http://dx.doi.org/10.1287/opre.2020.1999}.

\bibitem[{Wu \protect\BIBand{} M\"{o}hring(2022)}]{WuMoh:MOR2022}
Wu Z, M\"{o}hring RH (2022) A sensitivity analysis for the price of anarchy in
  nonatomic congestion games. \emph{Math. Oper. Res.} 48(3):1364--1392,
  \urlprefix\url{http://dx.doi.org/10.1287/moor.2022.1292}.

\bibitem[{Wu et~al.(2021{\natexlab{b}})Wu, M\"{o}hring, Chen, \protect\BIBand{}
  Xu}]{WuMohChenXu:OR2021}
Wu Z, M\"{o}hring RH, Chen Y, Xu D (2021{\natexlab{b}}) Selfishness need not be
  bad. \emph{Oper. Res.} 69(2):410--435,
  \urlprefix\url{http://dx.doi.org/10.1287/opre.2020.2036}.

\bibitem[{Zhu \protect\BIBand{} Savla(2022)}]{ZhuSav:IEEETCNS2022}
Zhu Y, Savla K (2022) Information design in nonatomic routing games with
  partial participation: computation and properties. \emph{IEEE Trans. Control
  Netw. Syst.} 9(2):613--624,
  \urlprefix\url{https://doi.org/10.1109/TCNS.2022.3165012}.

\end{thebibliography}


\clearpage

\begin{APPENDICES}
\ECSwitch


\ECHead{Supplementary Material to `Ordinary and Prophet Planning under Uncertainty in Bernoulli Congestion Games' by Cominetti, Scarsini, Schr\"oder, and Stier-Moses}

%



\section{Proofs}
\label{se:proofs}

\subsection*{Proofs of Section~\ref{se:homogeneous-prob}}

We will use the following equivalent expression for $\gamma_{\mathsf{pr}}(\mathcal{C},q)$.
\begin{lemma}
\label{le:boundalt} 
Let  
$\Psi(\omega)
\eqdef 
\sup_{(c,k,m)\in\mathcal{T}}
\beta_{c,k,m}(\omega)$
 denote the supremum of the affine functions
\begin{equation}
\label{eq:affine-functions}
\beta_{c,k,m}(\omega)\eqdef\frac{c^{q}(1+m)}{q\,c(k)}\,\omega+\frac{m \, c^{q}(m)}{k\, c(k)}(1-\omega).
\end{equation}
Then
\begin{equation}\label{eq:bnd-equiv2}
\gamma_{\mathsf{pr}}(\mathcal{C},q)=\inf_{\omega\geq 1}\Psi(\omega).
\end{equation}

\end{lemma}

\proof{Proof.}
If $c(k)=0$ for some $c\in\mathcal{C}\setminus\{c_{0}\}$ and $k\geq 1$, then by taking the largest such $k$ we have $c(k+1)>0$ and then for $m=k$ we get $c^{q}(k)=0<c^{q}(k+1)$ so that the right hand side in \eqref{eq:ccp} is $0$ whereas the expression on the left is strictly positive.
Hence no pair $(\lambda,\mu)$ satisfies \eqref{eq:ccp} and $\gamma_{\mathsf{pr}}(\mathcal{C},q)=\infty$. Similarly,
\begin{equation}
\label{eq:Psi>=} 
\Psi(\omega)\geq\beta_{c,k,k}(\omega)=\frac{c^{q}(1+k)}{q\,c(k)}\omega\equiv\infty,
\end{equation}
so that $\inf_{\omega\geq 1}\Psi(\omega)=\infty=\gamma_{\mathsf{pr}}(\mathcal{C},q)$.

Suppose next that $c(k)>0$ for all $c\in\mathcal{C}\setminus\{c_{0}\}$ and $k\geq 1$.
Then, for each $\mu\in[0,1)$ the smallest
$\lambda$ that satisfies \eqref{eq:BndPr} is
\[
\lambda=\sup_{(c,k,m) \in\mathcal{T}}\frac{\frac{1}{q}\,k \,c^{q}(1+m)-\mu\, m \, c^{q}(m)}{k\, c(k)}.
\]
Dividing by $(1-\mu)$ and using the change of variable $\omega=1/(1-\mu)$,
we obtain  \eqref{eq:bnd-equiv2}.
\Halmos
\endproof

\begin{remark}
The objective function $\Psi(\argdot)$ in \eqref{eq:bnd-equiv2} is a supremum of affine functions; therefore, it
is convex and lower semi-continuous. 
Moreover the infimum is attained at some
$\omega\in [1,\gamma_{\mathsf{pr}}(\mathcal{C},q)]$. 
In fact, if $c(1)>0$ for some $c\in\mathcal{C}$ then $\Psi(\omega)\geq\beta_{c,1,0}(\omega)=\omega$ so that $\Psi(\omega)\to\infty$ as $\omega\to \infty$
and the minimum of $\Psi(\argdot)$ is attained.
Otherwise, by the first argument in the proof of Lemma~\ref{le:boundalt}, we have $\Psi(\omega)\geq \gamma_{\mathsf{pr}}(\mathcal{C},q)=\infty$ and every $\omega\in[1,\infty)$ is a minimizer.
\end{remark}

\proof{Proof of Proposition~\ref{pr:different-p-i-pro}.}
Let $\Gamma^{\boldsymbol{p}}\in\mathcal{G}(\mathcal{C},q)$. If $\Gamma^{\boldsymbol{p}}$ does not satisfy the $(\lambda,\mu,q)$-smoothness condition, then $\gamma_{\mathsf{pr}}(\mathcal{C},q)=\infty$ and the statement follows trivially. So we assume the $(\lambda,\mu,q)$-smoothness condition \eqref{eq:ccp}.  We claim that for the prophet-like strategies $\sigma_{i}^{\ast}$ and any fixed strategy profile $\boldsymbol{s}\in\mathcal{S}$ we have
\begin{equation}
\label{eq:estim}
\sum_{i\in\mathcal{N}}\Expect_{s_{i}^{\ast}\sim\sigma_{i}^{\ast}}
\bracks*{C_{i}(s_{i}^{\ast},\boldsymbol{s}_{-i})}
\leq
\lambda\,C_{\mathsf{pr}}+\mu\,C(\boldsymbol{s}).
\end{equation}
Indeed, let $R$ be the sum on the left of \eqref{eq:estim}.
When $s_{i}^{\ast}\sim\sigma_{i}^{\ast}$ we have $\Prob(s_{i}^{\ast}
=s^{\mathcal{I}}_{i})=p(\mathcal{I})/p_{i}$, 
if $i\in\mathcal{I}$, and $\Prob(s_{i}^{\ast}=s^{\mathcal{I}}_{i})=0$ otherwise, so that
\begin{equation}
\label{eq:initial} 
R = \sum_{i\in \mathcal{N}}\sum_{\mathcal{I}:i\in\mathcal{I}}\frac{p(\mathcal{I})}{p_{i}}\,C_{i}(s^{\mathcal{I}}_{i},\boldsymbol{s}_{-i}).
\end{equation}
We now estimate $C_{i}(s^{\mathcal{I}}_{i},\boldsymbol{s}_{-i})$ using Corollary~\ref{co:util2}. 
Set $r_{i}=p_{i}/q\in [0,1]$ and  
$N_{e}^{Y}(\boldsymbol{s}) =\! \sum_{j\in\mathcal{N}}Y_{j}\mathds{1}_{\{e\in s_{j}\}}$
with 
$Y_{j}\sim \Bernoulli(r_{j})$
independent random variables.
Then, from \eqref{eq:disp12} we get
\begin{subequations}
\label{eq:bound-Ci-sI}
\begin{align}
C_{i}(s^{\mathcal{I}}_{i},\boldsymbol{s}_{-i}) 
&= \Expect\bracks*{\sum_{e\in s^{\mathcal{I}}_{i}} c_{e}^{q} \parens*{N_{e}^{Y}(s^{\mathcal{I}}_{i},\boldsymbol{s}_{-i})}}\\
&= \Expect\bracks*{r_{i}\sum_{e\in s^{\mathcal{I}}_{i}} c_{e}^{q}\parens*{1+N_{e}^{Y}(\boldsymbol{s}_{-i})}}\\
&\leq r_{i} \Expect\bracks*{\sum_{e\in s^{\mathcal{I}}_{i}}c_{e}^{q} \parens*{1+N^{Y}_{e}(\boldsymbol{s})}}.
\end{align}
\end{subequations}
Plugging the bound in \eqref{eq:bound-Ci-sI} into \eqref{eq:initial}, and using  $r_{i}/p_{i}=1/q$, we obtain
\begin{align*}
    R&\leq  \Expect\bracks*{\sum_{i\in \mathcal{N}}\sum_{\mathcal{I}:i\in\mathcal{I}}p(\mathcal{I})\sum_{e\in s^{\mathcal{I}}_{i}}\frac{1}{q} c_{e}^{q}\parens*{1+N^{Y}_{e}(s)}}\\
   &=  \Expect\bracks*{\sum_{\mathcal{I}\subseteq\mathcal{N}}\sum_{e\in\mathcal{E}}p(\mathcal{I})\frac{1}{q}\,n^{\mathcal{I}}_{e}(s^{\mathcal{I}})\, c_{e}^{q}\parens*{1+N^{Y}_{e}(s)}}.
\end{align*}
Now, we invoke \eqref{eq:ccp} for $c_{e}(\argdot)$ with $k=n^{\mathcal{I}}_{e}(s^{\mathcal{I}})$ and $m=N^{Y}_{e}(\boldsymbol{s})$,
to derive the bound
\begin{align*}
R
&\leq \Expect\Bigg[
\sum_{\mathcal{I}\subseteq\mathcal{N}}p(\mathcal{I})\sum_{e\in\mathcal{E}}
\Big(\lambda\,n^{\mathcal{I}}_{e}(s^{\mathcal{I}})\,c_{e}
\parens*{n^{\mathcal{I}}_{e}(s^{\mathcal{I}})}
+ \mu\,N^{Y}_{e}(\boldsymbol{s})\,c^{q}_{e}
\parens*{N^{Y}_{e}(\boldsymbol{s})}\Big)\Bigg].
\end{align*}
From \eqref{eq:PESC} and \eqref{eq:disp22}, the right hand side is precisely $\lambda\,C_{\mathsf{pr}}+\mu\,C(\boldsymbol{s})$, which proves \eqref{eq:estim}.

Let $\boldsymbol{s}\in\mathsf{NE}(\Gamma^{\boldsymbol{p}})$ be a \acl{BNE}
and fix $(\lambda,\mu)$ satisfying \eqref{eq:ccp}. 
For each $s_{i}^{\ast}$ we have 
$C_{i}(\boldsymbol{s}) \leq C_{i}(s_{i}^{\ast},\boldsymbol{s}_{-i})$,
so that $C_{i}(\boldsymbol{s})\leq\Expect_{s_{i}^{\ast}\sim\sigma_{i}^{\ast}}[C_{i}(s_{i}^{\ast},\boldsymbol{s}_{-i})]$,
and then \eqref{eq:estim} implies
\begin{equation*}
C(\boldsymbol{s})
=\sum_{i\in \mathcal{N}} C_{i}(\boldsymbol{s})\leq  \lambda\,C_{\mathsf{pr}}+\mu\,C(\boldsymbol{s}). 
\end{equation*}
Thus, $C(\boldsymbol{s})/C_{\mathsf{pr}}\leq\lambda/(1-\mu)$ and we conclude by taking the infimum over $(\lambda,\mu)$ and maximizing over all $\boldsymbol{s}\in\mathsf{NE}(\Gamma^{\boldsymbol{p}})$.
\Halmos 
\endproof

\proof{Proof of Theorem~\ref{th:lambda-mu-pr}.}
From Proposition~\ref{pr:different-p-i-pro} we have $\PPoA(\mathcal{C},q)\leq\gamma_{\mathsf{pr}}(\mathcal{C},q)$, so we only need to show that this bound is tight. We distinguish two cases.

\begin{case}
$c(k)=0$ for some $c\in\mathcal{C}\setminus\{c_{0}\}$ and $k\geq 1$.

As observed in proof of 
Lemma~\ref{le:boundalt}, this is a degenerate case where no pair $(\lambda,\mu)$ satisfies 
\eqref{eq:ccp}, so that $\gamma_{\mathsf{pr}}(\mathcal{C},q)=\infty$. We will build a \acl{BCG} $\Gamma^{\boldsymbol{p}}$ with $n$  players and homogeneous probabilities $p_{i}\equiv q$, such that $\PPoA(\Gamma^{\boldsymbol{p}})=\infty$.

By increasing $k$ we may assume that  $c(k)=0<c(k+1)$, and therefore 
$c^{q}(k)=0<c^{q}(k+1)$.
Consider a game with $n$ players and resource set composed of two disjoint cycles $\mathcal{E}=\mathcal{E}_{1}\cup\mathcal{E}_{2}$ of $n$ resources each with costs $c(\argdot)$ (see Figure~\ref{fi:cycle_npPoA1}).
Every player $i$ has only two possible strategies:
\begin{align*}
 \text{(blue)\quad}s_{i}&=\{a_{i+1},\ldots,a_{i+k}\}\cup\{b_{i+1},\ldots,b_{i+k+1}\},\\
 \text{(red)\quad}s_{i}'&=\{a_{i+k+1},\ldots,a_{i+2k}\}\cup\{b_{i+k+2},\ldots,b_{i+2k+1}\},
\end{align*}
with the convention $i+j\equiv i+j-n$ when $i+j>n$. 
We take $n \geq 2k+1$ so that the strategies $s_{i}$ and $s_{i}'$ do not overlap.
\begin{figure}[ht]
\centering
\begin{tikzpicture}[thick,scale=0.5, every node/.style={transform shape}]

\def \n {7}
\def \radius {4cm}

\def \margin {6} 
\node[scale=1.5](E1)  at (0,0) {\Large$\mathcal{E}_{1}$};

\foreach \s in {1,...,1}
{
 \node[draw, circle,minimum size = 0.6cm](H\s)  at ({90-360/(\n+1) * (\s-1)}:\radius) {};
  \draw[->, >=latex] ({90-360/(\n+1) * (\s-1)-\margin}:\radius) arc ({90-360/(\n+1) * (\s-1)-\margin}:{90-360/(\n+1)*\s+\margin}:\radius) node[midway,fill=white,scale=1.5] {$a_{\s}$};
}  
\foreach \s in {2,...,3}
{
 \node[draw, circle,minimum size = 0.6cm](H\s)  at ({90-360/(\n+1) * (\s-1)}:\radius) {};
  \draw[->,color=blue, >=latex] ({90-360/(\n+1) * (\s-1)-\margin}:\radius) arc ({90-360/(\n+1) * (\s-1)-\margin}:{90-360/(\n+1) * \s+\margin}:\radius) node[midway,fill=white,scale=1.5] {$a_{\s}$};
}
\foreach \s in {4,...,5}
{
 \node[draw, circle,minimum size = 0.6cm](H\s)  at ({90-360/(\n+1) * (\s-1)}:\radius) {};
  \draw[->,color=red, >=latex] ({90-360/(\n+1) * (\s-1)-\margin}:\radius) arc ({90-360/(\n+1) * (\s-1)-\margin}:{90-360/(\n+1) * \s+\margin}:\radius) node[midway,fill=white,scale=1.5] {$a_{\s}$};
}
\foreach \s in {6,...,8}
{
 \node[draw, circle,minimum size = 0.6cm](H\s)  at ({90-360/(\n+1) * (\s-1)}:\radius) {};
  \draw[->, >=latex] ({90-360/(\n+1) * (\s-1)-\margin}:\radius) arc ({90-360/(\n+1) * (\s-1)-\margin}:{90-360/(\n+1) *\s+\margin}:\radius) node[midway,fill=white,scale=1.5] {$a_{\s}$};
}

\end{tikzpicture}\hspace{1cm}
\begin{tikzpicture}[thick,scale=0.5, every node/.style={transform shape}]

\def \n {7}
\def \radius {4cm}

\def \margin {6} 
\node[scale=1.5](E2) at (0,0) {\Large$\mathcal{E}_{2}$};

\foreach \s in {1,...,1}
{
 \node[draw, circle,minimum size = 0.6cm](H\s)  at ({90-360/(\n+1) * (\s-1)}:\radius) {};
  \draw[->, >=latex] ({90-360/(\n+1) * (\s-1)-\margin}:\radius) arc ({90-360/(\n+1) * (\s-1)-\margin}:{90-360/(\n+1)*\s+\margin}:\radius) node[midway,fill=white,scale=1.5] {$b_{\s}$};
}  
\foreach \s in {2,...,4}
{
 \node[draw, circle,minimum size = 0.6cm](H\s)  at ({90-360/(\n+1) * (\s-1)}:\radius) {};
  \draw[->,color=blue, >=latex] ({90-360/(\n+1) * (\s-1)-\margin}:\radius) arc ({90-360/(\n+1) * (\s-1)-\margin}:{90-360/(\n+1) * \s+\margin}:\radius) node[midway,fill=white,scale=1.5] {$b_{\s}$};
}
\foreach \s in {5,...,7}
{
 \node[draw, circle,minimum size = 0.6cm](H\s)  at ({90-360/(\n+1) * (\s-1)}:\radius) {};
  \draw[->,color=red, >=latex] ({90-360/(\n+1) * (\s-1)-\margin}:\radius) arc ({90-360/(\n+1) * (\s-1)-\margin}:{90-360/(\n+1) * \s+\margin}:\radius) node[midway,fill=white,scale=1.5] {$b_{\s}$};
}
\foreach \s in {7,...,8}
{
 \node[draw, circle,minimum size = 0.6cm](H\s)  at ({90-360/(\n+1) * (\s-1)}:\radius) {};
  \draw[->, >=latex] ({90-360/(\n+1) * (\s-1)-\margin}:\radius) arc ({90-360/(\n+1) * (\s-1)-\margin}:{90-360/(\n+1) *\s+\margin}:\radius) node[midway,fill=white,scale=1.5] {$b_{\s}$};
}  

\end{tikzpicture}
\caption{\label{fi:cycle_npPoA1} The set $\mathcal{E}=\mathcal{E}_{1}\cup\mathcal{E}_{2}$ with two cycles of $n=8$ resources each. 
The strategies $s_{1}$ and $s_{1}'$ for player 1 are shown in blue and red, with $k=2$. For subsequent players these strategies are turned clockwise.
}

\end{figure}

\noindent If all the players choose the blue strategy $s_{i}$, then the $a_{j}$'s have a load $k$ and the $b_{j}$'s a load $k+1$. 
Because $c^{q}(k)=0$, the expected cost for each player $i$ is just $(k+1)\,c^{q}(k+1)$. 
Now, deviating to the red strategy $s_{i}'$ yields a larger cost $k\,c^{q}(k+1)+k\,c^{q}(k+2)$, so that all players choosing $s_{i}$ is an equilibrium with social cost $n\,(k+1)\,c^{q}(k+1)>0$. 
On the other hand, if the prophet assigns $s_{i}'$ to every player, then all the resources have a load  $k$ and the social cost is 0. 
Therefore, this game has $\PPoA(\Gamma^{\boldsymbol{p}})=\infty$ as required.
\end{case}

\begin{case}
$c(k)>0$ for all $c\in\mathcal{C}\setminus\{c_{0}\}$ and $k\geq 1$.

We will use the alternative formula \eqref{eq:bnd-equiv2} for $\gamma_{\mathsf{pr}}(\mathcal{C},q)$ to show that
for each 
$M<\gamma_{\mathsf{pr}}(\mathcal{C},q)$ there exists some game $\Gamma^{\boldsymbol{p}}$ with homogeneous probabilities $p_{i}\equiv q$ such that $\PPoA(\Gamma^{\boldsymbol{p}})>M$. 

We note that for each $c\in\mathcal{C}\setminus\{c_{0}\}$ we have $\Psi(\omega)\geq \beta_{c,1,0}(\omega)=\omega$ for all $\omega\in\mathbb{R}$, whereas for $\omega\in[0,1)$ we have $\Psi(\omega)\geq\lim_{m\to\infty}\beta_{c,1,m}(\omega)=\infty$.
Thus $\Psi(\omega)\geq \gamma_{\mathsf{pr}}(\mathcal{C},q)>M$ for all $\omega\geq 0$. 
It follows that the sets $\{\omega:\beta_{c,k,m}(\omega)>M\}$ with $(c,k,m)\in\mathcal{T}$ are an open cover of the compact interval $[0,M]$. 
Let us extract a finite subcover $\mathcal{F}\subset \mathcal{T}$ and assume,  without loss of generality, that $(c,1,0)\in\mathcal{F}$ for some $c\in\mathcal{C}\setminus\{c_{0}\}$.
Then, the piece-wise affine function
\begin{equation}
\label{eq:Psi-finite}
\Psi_{\mathcal{F}}(\omega)
\eqdef 
\max_{(c,k,m)\in\mathcal{F}}\beta_{c,k,m}(\omega)
\end{equation}
satisfies
$\Psi_{\mathcal{F}}(\omega)>M$ for all $\omega\in[0,M]$, and also
$\Psi_{\mathcal{F}}(\omega)\geq \beta_{c,1,0}(\omega) = \omega$
for all $\omega\geq 0$.
It follows that the minimum 
$\gamma_{\mathsf{pr}}(\mathcal{F})\eqdef\min_{\omega\geq 0}\Psi_{\mathcal{F}}(\omega)$ is strictly larger than $M$ and is attained at some $\bar\omega\geq 0$.
To construct $\Gamma^{\boldsymbol{p}}$ we distinguish two sub-cases.

\medskip

\noindent 
\emph{Sub-case} 2.1: $\bar \omega=0$.

In this case there exists a triple $(c,k,m)\in\mathcal{F}$ such that  
$\beta_{c,k,m}(0)=\gamma_{\mathsf{pr}}(\mathcal{F})$ and $\beta'_{c,k,m}(0)\geq 0$, that is, 
\begin{equation}
\label{eq:double-cond}
\frac{m\,c^{q}(m)}{k\,c(k)}=\gamma_{\mathsf{pr}}(\mathcal{F}) \quad\text{and}\quad
\frac{1}{q}\,k\,c^{q}(1+m)\geq m\,c^{q}(m).    
\end{equation}
 Consider  a rational approximation $\tilde{q}=\zeta/\nu\approx q$ with  $\zeta,\nu\in\mathbb{N}$ and $\tilde{q}\le q$, so that
\begin{equation}\label{eq:case1}
\nu\,k\,c^{q}(1+m)\geq \zeta\,m\,c^{q}(m).
\end{equation}

We build a sequence of \aclp{BCG}  
$\Gamma^{\boldsymbol{p}}_{n}$ with
$n$  players with homogeneous probabilities $p_{i}\equiv q$, such that   $\PPoA(\Gamma^{\boldsymbol{p}}_{n})>M$
for $n$ large.
The resource set is composed of $h=n\, \zeta$ disjoint cycles $\mathcal{E}_{1},\ldots,\mathcal{E}_{h}$ of $n$ resources each with costs $c(\argdot)$ (see Figure~\ref{fi:cycle_PPoA}).

\begin{figure}[ht]
\centering
\begin{tikzpicture}[thick,scale=0.5, every node/.style={transform shape}]

\def \n {7}
\def \radius {4cm}

\def \margin {6} 
\node[scale=1.5](E1)  at (0,0) {\Large$\mathcal{E}_{1}$};

\foreach \s in {1,...,1}
{
 \node[draw, circle,minimum size = 0.6cm](H\s)  at ({90-360/(\n+1) * (\s-1)}:\radius) {};
  \draw[->, >=latex] ({90-360/(\n+1) * (\s-1)-\margin}:\radius) arc ({90-360/(\n+1) * (\s-1)-\margin}:{90-360/(\n+1) * (\s)+\margin}:\radius) node[midway,fill=white,scale=1.5] {$a_{\s}$};
}  
\foreach \s in {2,...,4}
{
 \node[draw, circle,minimum size = 0.6cm](H\s)  at ({90-360/(\n+1) * (\s-1)}:\radius) {};
  \draw[->,color=blue, >=latex] ({90-360/(\n+1) * (\s-1)-\margin}:\radius) arc ({90-360/(\n+1) * (\s-1)-\margin}:{90-360/(\n+1) * (\s)+\margin}:\radius) node[midway,fill=white,scale=1.5] {$a_{\s}$};
}
\foreach \s in {5,...,8}
{
 \node[draw, circle,minimum size = 0.6cm](H\s)  at ({90-360/(\n+1) * (\s-1)}:\radius) {};
  \draw[->,color=black, >=latex] ({90-360/(\n+1) * (\s-1)-\margin}:\radius) arc ({90-360/(\n+1) * (\s-1)-\margin}:{90-360/(\n+1) * (\s)+\margin}:\radius) node[midway,fill=white,scale=1.5] {$a_{\s}$};
}

\end{tikzpicture}\hspace{1cm}
\begin{tikzpicture}[thick,scale=0.5, every node/.style={transform shape}]

\def \n {7}
\def \radius {4cm}

\def \margin {6} 
\node[scale=1.5](E1)  at (0,0) {\Large$\mathcal{E}_{2}$};

\foreach \s in {1,...,1}
{
 \node[draw, circle,minimum size = 0.6cm](H\s)  at ({90-360/(\n+1) * (\s-1)}:\radius) {};
  \draw[->, >=latex] ({90-360/(\n+1) * (\s-1)-\margin}:\radius) arc ({90-360/(\n+1) * (\s-1)-\margin}:{90-360/(\n+1) * (\s)+\margin}:\radius) node[midway,fill=white,scale=1.5] {$a_{\s}$};
}  
\foreach \s in {2,...,4}
{
 \node[draw, circle,minimum size = 0.6cm](H\s)  at ({90-360/(\n+1) * (\s-1)}:\radius) {};
  \draw[->,color=blue, >=latex] ({90-360/(\n+1) * (\s-1)-\margin}:\radius) arc ({90-360/(\n+1) * (\s-1)-\margin}:{90-360/(\n+1) * (\s)+\margin}:\radius) node[midway,fill=white,scale=1.5] {$a_{\s}$};
}
\foreach \s in {5,...,8}
{
 \node[draw, circle,minimum size = 0.6cm](H\s)  at ({90-360/(\n+1) * (\s-1)}:\radius) {};
  \draw[->,color=black, >=latex] ({90-360/(\n+1) * (\s-1)-\margin}:\radius) arc ({90-360/(\n+1) * (\s-1)-\margin}:{90-360/(\n+1) * (\s)+\margin}:\radius) node[midway,fill=white,scale=1.5] {$a_{\s}$};
}

\end{tikzpicture}\hspace{1cm}
\begin{tikzpicture}[thick,scale=0.5, every node/.style={transform shape}]

\def \n {7}
\def \radius {4cm}

\def \margin {6} 
\node[scale=1.5](E1)  at (0,0) {\Large$\mathcal{E}_{3}$};

\foreach \s in {1,...,1}
{
 \node[draw, circle,minimum size = 0.6cm](H\s)  at ({90-360/(\n+1) * (\s-1)}:\radius) {};
  \draw[->, >=latex] ({90-360/(\n+1) * (\s-1)-\margin}:\radius) arc ({90-360/(\n+1) * (\s-1)-\margin}:{90-360/(\n+1) * (\s)+\margin}:\radius) node[midway,fill=white,scale=1.5] {$a_{\s}$};
}  
\foreach \s in {2,...,4}
{
 \node[draw, circle,minimum size = 0.6cm](H\s)  at ({90-360/(\n+1) * (\s-1)}:\radius) {};
  \draw[->,color=blue, >=latex] ({90-360/(\n+1) * (\s-1)-\margin}:\radius) arc ({90-360/(\n+1) * (\s-1)-\margin}:{90-360/(\n+1) * (\s)+\margin}:\radius) node[midway,fill=white,scale=1.5] {$a_{\s}$};
}
\foreach \s in {5,...,8}
{
 \node[draw, circle,minimum size = 0.6cm](H\s)  at ({90-360/(\n+1) * (\s-1)}:\radius) {};
  \draw[->,color=black, >=latex] ({90-360/(\n+1) * (\s-1)-\margin}:\radius) arc ({90-360/(\n+1) * (\s-1)-\margin}:{90-360/(\n+1) * (\s)+\margin}:\radius) node[midway,fill=white,scale=1.5] {$a_{\s}$};
}

\end{tikzpicture}

\caption{\label{fi:cycle_PPoA} The resource set for $h=3$ cycles with $n=8$ resources each. The strategy $s_{1}$ for player 1 with $m=3$ is shown in blue. For subsequent players this strategy is turned clockwise.
}

\end{figure}

Each player $i\in\braces*{1,\ldots,n}$ has one equilibrium strategy and multiple alternative strategies. 
Player $i$'s equilibrium strategy $s_{i}$ picks the resources $a_{i+1},\ldots,a_{i+m}$ 
from each and every cycle (the blue resources in 
Figure~\ref{fi:cycle_PPoA}), with the identification
$i+j\equiv i+j-n$ when $i+j>n$.
The alternative strategies consist of picking an arbitrary set containing $\kappa=n\, \nu\, k$ resources, excluding those in $s_{i}$ (i.e.,  only black resources can be chosen).
If each player plays the strategy $s_{i}$, 
then the load on every resource is $m$ and the expected cost for each player $i$ is 
$h\,m\,c^{q}(m)$,
whereas a unilateral deviation to any of the alternative strategy produces the cost 
$\kappa\,c^{q}(1+m)$. From \eqref{eq:case1}
it follows that the strategy profile in which all players choose $s_{i}$ is an equilibrium,
with expected social cost 
\begin{equation}
\label{eq:SCNE02}
C(\boldsymbol{s})= n\,h\,m\,c^{q}(m).
\end{equation}

Now, the prophet observes the demand $N\sim\Binomial\parens*{n,q}$ and tries to minimize the expected cost by distributing the players as uniformly as possible across the resources using the alternative strategies. 
Recall however that the resources in $s_{i}$ are forbidden in player $i$'s alternative strategies. 
So we consider the following upper bound on the optimal prophet cost. 

Assume that instead of picking $\kappa$ resources, the prophet uses the following greedy procedure to allocate $\tilde\kappa=\kappa+mj$ resources to each player (with $j\in\mathbb{N}$ to be fixed later), including $mj$ redundant  resources that can be dropped later. 
Starting from the first cycle $\mathcal{E}_{1}$ consider sequentially each one of the $N$
players assigning $\tilde\kappa$ contiguous resources, and continuing from there with the next player. Once the resources of a given cycle are exhausted the process jumps to the next cycle, and after reaching the end of the last cycle $\mathcal{E}_{h}$ it jumps back to $\mathcal{E}_{1}$, continuing the process until all players have been assigned $\tilde\kappa$ resources. 

The $\tilde\kappa$ resources allocated to a given player $i$ may include some forbidden resources in $s_{i}$. However, these $\tilde\kappa$ resources span at most $\ceil{\tilde\kappa/n}$ cycles and therefore the number of such forbidden links for $i$ is at most $m\,\ceil{\tilde\kappa/n}$.
If we choose $j$ such that $mj\geq m\,\ceil{\tilde\kappa/n}$, we may then remove $mj$ resources eliminating the forbidden links to obtain feasible strategies for every player $i$. 
This can be accomplished by choosing $j\geq \tilde\kappa/n$,
that is $n\, j\geq \kappa + mj$,
so it suffices to take $j=\ceil{\kappa/(n-m)}$.

The social cost for this feasible 
strategy profile is smaller than the cost of the
greedy allocation including the redundant resources, which then provides an upper bound for the 
optimal cost achievable by the prophet. 
To compute this upper bound we observe that the greedy procedure yields an  average load of $X=N\,\tilde\kappa/n\,h$ on each resource, with some resources having a load $\ceil{X}$ and the others $\floor{X}$. 
More explicitly, because  $N\tilde\kappa= n\,h\,  \floor{X} + v$ with $0\leq v<n\,h$, there will be $v$ resources with a load $\ceil{X}$ and  $(n\,h  -v)$ resources with a load $\floor{X}$. 
Observing that $v=n\,h \, (X-\floor{X})$ and defining 
\begin{align}
\label{eq:Qc}
Q^{c}(x)
&\eqdef (x-\floor{x})\,\ceil{x}\,c(\ceil{x})+(1-x+\floor{x})\,\floor{x} \,c(\floor{x}),     
\end{align}
the corresponding social cost can be expressed as
\begin{equation}
\label{eq:soc-cost-Qc}
v\,\ceil{X}\,c(\ceil{X})+(n\,h-v)\,\floor{X} c(\floor{X})
=n\,h\,Q^{c}(X).    
\end{equation}
Combining this upper bound for the prophet optimal cost with \eqref{eq:SCNE02}, we obtain a lower bound for the \acl{PPoA}, that is,
\begin{equation}
\label{eq:ppoabnd}
\PPoA(\Gamma_{n}^{\boldsymbol{p}})\geq\frac{n\,h\,m\,c^{q}(m)}{n\,h\Expect[Q^{c}(X)]}=\frac{m\,c^{q}(m)}{\Expect[Q^{c}(X)]}.
\end{equation}
Now, for $n\to\infty$ we have that the number of redundant resources $mj=m\ceil{n\,\nu\,k/(n-m)}$ remains bounded, so that $\tilde\kappa/h$ converges to $\nu\,k/\zeta=k/\tilde{q}$,
and therefore
\begin{align*}
X =\frac{N\,\tilde\kappa}{n\, h}\xrightarrow{\text{a.s.}} q\, k/\tilde q.
\end{align*}
Because $X$ is bounded and $Q^{c}(\argdot)$ is continuous, the portmanteau theorem implies that $\Expect[Q^{c}(X)]\to Q^{c}(q\, k / \tilde{q})$ as $n\to\infty$.
By taking $\tilde{q}\to q$, the latter converges to $Q^{c}(k)=k\,c(k)$ and, because 
\begin{equation}
\label{eq:frac>M}
\frac{m\,c^{q}(m)}{kc(k)}=\gamma_{\mathsf{pr}}(\mathcal{F})>M,   
\end{equation}
we may choose $n$ large enough and $\tilde{q}\approx q$ such that the right hand side of \eqref{eq:ppoabnd} is larger than $M$. 
It follows that $\PPoA(\Gamma^{\boldsymbol{p}}_{n})>M$, as was to be proved.

\medskip

\noindent 
\emph{Sub-case} 2.2:
$\bar \omega>0$

From optimality we have $0\in\partial\Psi_{\mathcal{F}}(\bar \omega)$ so we can find two affine functions $c_{1}, c_{2}$ with
$\beta_{c_{1},k_{1},m_{1}}(\bar \omega)=\beta_{c_{2},k_{2},m_{2}}(\bar \omega)=\gamma_{\mathsf{pr}}(\mathcal{F})$ such that $\beta'_{c_{1},k_{1},m_{1}}(\bar \omega)
\leq 0\leq \beta'_{c_{2},k_{2},m_{2}}(\bar \omega)$, that is,
\begin{align}
&\frac{1}{q}\,k_{1}\,c^{q}_{1}(1+m_{1})\bar \omega+m_{1}\, c^{q}_{1}(m_{1})(1-\bar \omega)=\gamma_{\mathsf{pr}}(\mathcal{F})\, k_{1}\, c_{1}(k_{1}),\label{eq:C32}\\
&\frac{1}{q}\,k_{2}\,c^{q}_{2}(1+m_{2})\bar \omega+m_{2}\, c^{q}_{2}(m_{2})(1-\bar \omega)=\gamma_{\mathsf{pr}}(\mathcal{F})\, k_{2}\, c_{2}(k_{2}),\label{eq:C42}\\
&\frac{1}{q}\,k_{1}\,c^{q}_{1}(1+m_{1})-m_{1}\,c^{q}_{1}(m_{1})~~\leq ~~0~~\leq~~ \frac{1}{q}\,k_{2}\,c^{q}_{2}(1+m_{2})-m_{2}\,c^{q}_{2}(m_{2}).\label{eq:C52}
\end{align}
The latter implies that 0 can be expressed as an average of the 
left and right expressions, so there exists $\eta\in[0,1]$ such that
\begin{equation}
\label{eq:peq}
\eta\, m_{1}\,c_{1}^{q}(m_{1})+(1-\eta)\,m_{2}\,c_{2}^{q}(m_{2})=\eta\,\frac{1}{q}\,k_{1}\,c_{1}^{q}(1+m_{1})+ (1-\eta)\,\frac{1}{q}\,k_{2}\,c_{2}^{q}(1+m_{2}).
\end{equation}

We will construct a sequence of \aclp{BCG} with costs $c_{1}(\argdot)$ and $c_{2}(\argdot)$ and $n$ homogeneous players with  probabilities $p_{i}\equiv q$, such that for large $n$ the \acl{PPoA} approaches $\gamma_{\mathsf{pr}}(\mathcal{F})$ and therefore can  be made larger than $M$ as required.

Firstly, we take rational approximations
\begin{equation}
\label{eq:rational-approx}
\tilde\eta=\theta/\tau\approx\eta 
\quad\text{and}\quad \tilde{q}=\zeta/\nu\approx q
\end{equation} 
with $\theta,\tau,\zeta,\nu\in \mathbb{N}$, and $\tilde{q}< q$, in such a way that
in \eqref{eq:peq} we preserve an inequality,
namely 
\begin{equation}\label{eq:tightppoa}
\theta\, \zeta\,m_{1}\,c_{1}^{q}(m_{1}) + (\tau-\theta)\,\zeta\,m_{2}\,c_{2}^{q}(m_{2})\leq \theta\,\nu\,k_{1}\,c_{1}^{q}(1+m_{1}) + (\tau-\theta)\,\nu\,k_{2}\,c_{2}^{q}(1+m_{2}).
\end{equation}

Define 
\begin{equation}
\label{eq:def-approx-bucket}
h_{1}=\theta\, \zeta\,n,
\quad h_{2} = (\tau-\theta)\,\zeta\,n, 
\quad \kappa_{1} = \theta\,\nu\,n\,k_{1},
\quad\text{and}\quad
\kappa_{2}=(\tau-\theta)\,\nu\,n\,k_{2}.
\end{equation}
The resource set is composed of $h=h_{1}+h_{2}$ disjoint cycles $\mathcal{E}_{1},\ldots,\mathcal{E}_{h}$ with $n$ resources $a_{1},\ldots,a_{n}$ each,  as in Figure~\ref{fi:cycle_PPoA}. 
The resources in the cycles $\mathcal{E}_{1},\ldots,\mathcal{E}_{h_{1}}$ all have cost function $c_{1}(\argdot)$, and 
the resources in $\mathcal{E}_{h_{1}+1},\ldots,\mathcal{E}_{h_{1}+h_{2}}$ all have cost function $c_{2}(\argdot)$.

Each player $i\in\braces*{1,\ldots,n}$ has one equilibrium strategy and multiple alternative strategies. 
Player $i$'s equilibrium strategy $s_{i}$ is as follows: from each $\mathcal{E}_{1},\ldots,\mathcal{E}_{h_{1}}$ pick resources $a_{i+1},\ldots,a_{i+m_{1}}$ and from each $\mathcal{E}_{h_{1}+1},\ldots,\mathcal{E}_{h_{1}+h_{2}}$ pick resources $a_{i+1},\ldots,a_{i+m_{2}}$. 
As before, the indices $i+j$ are interpreted modulo $n$, so that 
$i+j\equiv i+j -n$ when $i+j>n$.
Each of the alternative strategies consists of picking an arbitrary set of $\kappa_{1}$ resources from $\mathcal{E}_{1}\cup\cdots\cup\mathcal{E}_{h_{1}}$ and $\kappa_{2}$ resources from $\mathcal{E}_{h_{1}+1}\cup\cdots\cup\mathcal{E}_{h_{1}+h_{2}}$, excluding  the resources that are part of $s_{i}$.

If every player plays the strategy $s_{i}$, the expected cost for each player $i$ is 
$h_{1}\,m_{1}\,c_{1}^{q}(m_{1})+h_{2}\,m_{2}\,c_{2}^{q}(m_{2})$,
whereas a unilateral deviation to any of the alternative strategies produces the cost 
$\kappa_{1}\,c_{1}^{q}(1+m_{1})+\kappa_{2}\,c_{2}^{q}(1+m_{2})$. It follows from \eqref{eq:tightppoa}
that the profile where all players choose $s_{i}$ is an equilibrium with corresponding expected social cost
\begin{equation}
\label{eq:SCNE03}
C(\boldsymbol{s}) = n\,h_{1}\,m_{1}\,c_{1}^{q}(m_{1}) + n\,h_{2}\,m_{2}\,c_{2}^{q}(m_{2}).
\end{equation}

Consider next the cost for the prophet when dealing with $N\sim\Binomial\parens*{n,q}$ players. 
By applying the same greedy procedure as in sub-case 2.1, separately for the first $h_{1}$ cycles and the second $h_{2}$ cycles, we obtain an upper bound on the prophet optimal costs. 
That is, in the first $h_{1}$ cycles the greedy procedure assigns $\tilde\kappa_{1}=\kappa_{1}+m_{1}\,\lceil\kappa_{1}/(n-m_{1})\rceil$ resources per player, and in the second $h_{2}$ cycle it assigns $\tilde\kappa_{2}=\kappa_{2}+m_{2}\,\lceil\kappa_{2}/(n-m_{2})\rceil$ resources per player. 
Repeating the
analysis of sub-case 2.1 we obtain an upper bound for the optimal prophet cost of
\begin{equation}
\label{eq:bound-opt-proph-cost}  
n\,h_{1}\, Q^{c_{1}}(X_{1})+n\,h_{2}\, Q^{c_{2}}(X_{2}),
\end{equation}
where $X_{1}=N\tilde\kappa_{1}/n h_{1}$
and $X_{2}=N\tilde\kappa_{2}/n h_{2}$.
Hence 
\begin{equation}
\label{eq:fin}
\PPoA(\Gamma^{\boldsymbol{p}}_{n})
\geq\frac{n\,h_{1}\,m_{1}\,c_{1}^{q}(m_{1})+n\,h_{2}\,m_{2}\,c_{2}^{q}(m_{2})}{n\,h_{1}\, \Expect[Q^{c_{1}}(X_{1})]+n\,h_{2}\, \Expect[Q^{c_{2}}(X_{2})]}.
\end{equation}
When $n\to\infty$ we have 
$X_{1}\xrightarrow{\text{a.s.}} q\, k_{1}/\tilde{q}$ and $X_{2}\xrightarrow{\text{a.s.}} q\, k_{2}/\tilde q$,
so that, using the portmanteau theorem, 
the right hand side in \eqref{eq:fin} 
converges to   
\begin{equation}
\label{eq:rhs-converges}
\frac{\theta\,m_{1}\,c_{1}^{q}(m_{1})+(\tau-\theta)\,m_{2}\,c_{2}^{q}(m_{2})}{\theta\, Q^{c_{1}}(q\, k_{1}/\tilde{q}) + (\tau-\theta)\, Q^{c_{2}}(q\, k_{2}/\tilde{q})}
 = \frac{\tilde\eta\,m_{1}\,c_{1}^{q}(m_{1}) + (1-\tilde\eta)\,m_{2}\,c_{2}^{q}(m_{2})}{\tilde\eta\, Q^{c_{1}}(q\, k_{1}/\tilde{q}) + (1-\tilde\eta)\, Q^{c_{2}}(q\, k_{2}/\tilde{q})}.    
\end{equation}
Moreover, letting $\tilde{q} \to q$ and 
$\tilde\eta\to\eta$, the latter quotient converges
to 
\begin{equation}
\label{eq:quotient-converges}
\frac{\eta\,m_{1}\,c_{1}^{q}(m_{1})+(1-\eta)\,m_{2}\,c_{2}^{q}(m_{2})}{\eta\, Q^{c_{1}}(k_{1})+(1-\eta)\, Q^{c_{2}}(k_{2})}=\frac{\eta\,m_{1}\,c_{1}^{q}(m_{1})+(1-\eta)\,m_{2}\,c_{2}^{q}(m_{2})}{\eta\, k_{1}\,c_{1}(k_{1})+(1-\eta)\, k_{2}\,c_{2}(k_{2})}=\gamma_{\mathsf{pr}}(\mathcal{F}),   
\end{equation}
where the last equality follows by averaging \eqref{eq:C32} and \eqref{eq:C42} with weights $\eta$ and $(1-\eta)$ respectively, and using \eqref{eq:peq}.
Finally, because $\gamma_{\mathsf{pr}}(\mathcal{F})>M$, we may choose $\tilde{q}\approx q$, 
$\tilde\eta\approx\eta$, and $n$ large, so that the right hand side in \eqref{eq:fin} is strictly larger than $M$. 
This completes the proof.
\Halmos
\end{case}
\endproof

\proof{Proof of Proposition~\ref{pr:growth-cond}.}
By arguing as in the proof of Proposition~\ref{pr:different-p-i-pro}, it suffices to show that the $(\lambda,\mu,q)$-smoothness condition \eqref{eq:ccp} holds with  
$\mu=0$ and some $\lambda=\lambda(q)$ such that
$\lambda(q)\to 1 $ when $q\to 0$, namely 
\begin{equation}
\label{eq:auxxx}
\frac{1}{q} k \,c^{q}(1+m)  \leq  \lambda\, k\, c(k)\quad \forall k\geq 1, m\geq 0,\;\forall c\in\mathcal{C}.
\end{equation}
Clearly, this is equivalent to $c^{q}(1+m)  \leq  \lambda\,q\, c(k)$, and it is most restrictive when $k=1$. 
On the other hand, because the number of players $n$ is fixed, in any strategy profile the load of any given resource is at most $n$. 
Thus, one can always modify the costs $c(k)$ to be constant for $k\geq n$, without affecting the equilibria nor the social costs. 
Therefore it suffices to have the previous inequality for $k=1$ and $m =n$.
Now, taking $X\sim\Binomial(n,q)$ we have
\begin{align*}
c^{q}(1+n)
= q\Expect[c(X)]
&= q \Prob(X=0)\,c(1) +q\Prob(X>0)\Expect[c(X) \mid X>0]\\
&\le q \Prob(X=0)\,c(1) +q\Prob(X>0) H c(1)\\
&= q \,c(1)\left[\Prob(X=0) + (1-\Prob(X=0)) H \right],
\end{align*}
and because $\Prob(X=0)=(1-q)^{n}$, \eqref{eq:auxxx} holds with  $\lambda(q)=(1-q)^{n}+(1-(1-q)^{n})H\to 1$.
\Halmos
\endproof

\subsection*{Proofs of Section~\ref{se:poa_and_pos}}
\label{Ap:poa_and_pos}
Theorem~\ref{th:upper} is a direct consequence 
of Theorem~\ref{th:different-p-i} and the following Proposition~\ref{pr:lambda-mu}, which determines the optimal parameters $(\lambda,\mu)$ for which the class of games  $\mathcal{G}(\mathcal{C}_{\mathsf{aff}}^{q})$ is $(\lambda,\mu)$-smooth.

\begin{proposition}
\label{pr:lambda-mu} 
Let $\bar{q}_{0}=1/4$ and let $\bar{q}_{1}\sim 0.3774$ be the real root of $8 q^{3}+4q^{2}=1$.
The optimal parameters $(\lambda,\mu)$
that attain the tight bounds $\gamma(\mathcal{C}_{\mathsf{aff}}^{q})$ are given by
\begin{equation}\label{eq:para}
(\lambda,\mu)=
\begin{cases}
  \left(1,\dfrac{1}{4}\right)&\text{ if }\ 0< q\leq \bar{q}_{0},\\
  \left(\dfrac{1+q+\sqrt{q(2+q)}}{2},\dfrac{1+q-\sqrt{q(2+q)}}{2}\right)&\text{ if }\ \bar{q}_{0}\leq q\leq \bar{q}_{1},\\
\left(\dfrac{1+2q+2q^{2}}{1+2q},\dfrac{q}{1+2q}\right)&\text{ if }\ \bar{q}_{1}\leq q\leq 1.
\end{cases}
\end{equation}
\end{proposition}

We split the proof of Proposition~\ref{pr:lambda-mu} into three technical lemmas and three propositions, each one dealing with one of the three subintervals of $[0,1]$ determined by $\bar{q}_{0}$ and $\bar{q}_{1}$. 

A sketch of the proof goes as follows. For each fixed $q$ we proceed to minimize $\lambda/(1-\mu)$ over all
 $(\lambda,\mu)$-smoothness parameters satisfying \eqref{eq:r1}. The latter is simplified as in condition \eqref{eq:tightPoAaff} below, which still
provides the optimal parameters for the tight bounds $\gamma(\mathcal{C}_{\mathsf{aff}}^{q})$. 
We next reduce the optimization over $(\lambda,\mu)$ to the minimization of a one dimensional convex function $\psi_{q}(y)$ over the region $y\geq0$.
This auxiliary function $\psi_{q}(\argdot)$ is an upper envelope of a countable family of affine functions, and for each $q$ it has a minimizer $y_{q}$, which takes different values, depending on where $q$ is located with respect to $\bar{q}_{0}$ and $\bar{q}_{1}$. 
This optimal solution yields the three alternative expressions for $\gamma(\mathcal{C}_{\mathsf{aff}}^{q})$, with the corresponding optimal smoothness parameters $(\lambda,\mu)$.

Our starting point is the following simple observation.

\begin{lemma}
\label{le:lambda-mu}
A pair $(\lambda,\mu)$ with $\lambda\ge 0$ and $\mu\in[0,1)$ satisfies \eqref{eq:r1} 
for the class $\mathcal{C}_{\mathsf{aff}}$ iff
\begin{equation}\label{eq:lambda-mu}
k(1+m q)\leq \lambda\, k(1-q+q k)+\mu\, m(1-q+q m)\quad\forall k,m\in\mathbb{N}.
\end{equation}
\end{lemma}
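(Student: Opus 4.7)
The plan is to unpack the smoothness condition \eqref{eq:r1} by computing the modified cost $\cost^{\proba}(\cdot)$ explicitly for an arbitrary $\cost\in\lat_{0}$, and then exploit the conic parameterization of $\lat_{0}$ by its nonnegative coefficients $(\affa,\affb)$.

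First, I would compute $\cost^{\proba}(\ell)$ for the affine cost $\cost(x)=\affa x+\affb$. Using definition \eqref{eq:cp} with $\Load\sim\Binomial(\ell-1,\proba)$, linearity of expectation gives $\Expect[\cost(1+\Load)]=\affa(1-\proba+\proba\ell)+\affb$, and hence
\[
\cost^{\proba}(\ell)=\proba\bigl[\affa(1-\proba+\proba\ell)+\affb\bigr].
\]
Substituting into \eqref{eq:r1} and cancelling the common factor $\proba$ (the case $\proba=0$ being trivial since $\lat_{0}^{0}=\{0\}$), the smoothness condition for $\cost^{\proba}$ becomes
\[
k\bigl[\affa(1+\proba m)+\affb\bigr]\le \lambda k\bigl[\affa(1-\proba+\proba k)+\affb\bigr]+\mu m\bigl[\affa(1-\proba+\proba m)+\affb\bigr].
\]

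Second, I would observe that this inequality is \emph{linear} in $(\affa,\affb)$ and must hold over the nonnegative cone $\{(\affa,\affb):\affa,\affb\ge 0\}$ that parameterizes $\lat_{0}$. Hence it is equivalent to its validity on the two extreme rays $(\affa,\affb)=(1,0)$ and $(\affa,\affb)=(0,1)$. The first choice produces exactly \eqref{eq:lambda-mu}, while the second collapses to the constant-cost condition $k\le \lambda k+\mu m$ for all $k,m\in\N$.

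Finally, to close the iff, I would show that the constant-cost condition is automatically implied by \eqref{eq:lambda-mu}: plugging $(k,m)=(1,0)$ into \eqref{eq:lambda-mu} yields $\lambda\ge 1$, and together with $\mu\ge 0$ this gives $k\le \lambda k\le \lambda k+\mu m$ for every $k,m\in\N$. I do not expect any real obstacle; the main ingredient is the elementary expected-cost computation for an affine $\cost$, and the rest is a linearity argument. The only subtlety worth flagging is that the constant term $\affb$ does not produce an extra restriction on $(\lambda,\mu)$ beyond the one already contributed by the pure linear part evaluated at $(k,m)=(1,0)$.
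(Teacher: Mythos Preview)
Your proposal is correct and follows essentially the same route as the paper: compute $\cost^{\proba}(\ell)=\proba[\affa(1-\proba+\proba\ell)+\affb]$ explicitly and then use linearity in $(\affa,\affb)$ to reduce \eqref{eq:r1} to the pure linear case $(\affa,\affb)=(1,0)$. The paper's own proof is terser and simply says that from \eqref{eq:lambda-mu} one obtains \eqref{eq:r1} by ``multiplying by $\affa\geq 0$ and adding $\affb\geq 0$ on both sides''; your treatment is actually more careful here, since you make explicit that the constant-cost extreme ray $(\affa,\affb)=(0,1)$ yields the separate condition $k\le\lambda k+\mu m$ and then show this is automatic because \eqref{eq:lambda-mu} at $(k,m)=(1,0)$ forces $\lambda\ge 1$.
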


\proof{Proof.}
For an affine function $c(x)=a\, x+b$, we have $c^{q}(x)= q[a\,(1-q+q\, x)+b]$.
It follows that \eqref{eq:lambda-mu} is just the special case of \eqref{eq:r1} with $a=1$ and $b=0$. Conversely, 
starting from \eqref{eq:lambda-mu} and taking $k=1$ and $m=0$ we get $\lambda\geq 1$, so that multiplying by $a\geq 0$ and adding $kb\geq 0$ on both sides we readily get \eqref{eq:r1} for $c^{q}(\argdot)$.
\Halmos
\endproof

From Lemma~\ref{le:lambda-mu} it follows that
\begin{equation}
\label{eq:tightPoAaff}
\gamma(\mathcal{C}_{\mathsf{aff}}^{q})=\inf\{\lambda/(1-\mu): (\lambda,\mu) \text{ satisfies }\eqref{eq:lambda-mu}\}
\end{equation}
which can be reduced to a one-dimensional problem. Indeed, condition \eqref{eq:lambda-mu} is trivially satisfied for $k=0$
so we may restrict to the set $\mathcal{P}$ of all pairs 
$(k,m)\in\mathbb{N}^{2}$ with $k\geq 1$.   
Then, for any given  $\mu\in [0,1)$, the smallest possible value of $\lambda$ compatible with \eqref{eq:lambda-mu} is 
\begin{align*}
\lambda
&=\sup_{(k,m)\in\mathcal{P}}\frac{k(1+q m)- \mu m(1-q+q m)}{k(1-q+q k)}\\
&=\sup_{(k,m)\in\mathcal{P}}\mu\bracks*{\frac{k(1+q m)- m(1-q+q m)}{k(1-q+q k)}} +(1-\mu)\frac{1+q m}{1-q+q k}\,,
\end{align*}
from which it follows that
\begin{equation*}
\gamma(\mathcal{C}_{\mathsf{aff}}^{q})=\inf_{\lambda,\mu}\frac{\lambda}{1-\mu}=\inf_{\mu\in [0,1)}\sup_{(k,m)\in\mathcal{P}}\frac{\mu}{1-\mu}\left[\frac{k(1+q m)-m(1-q+q m)}{k(1-q+q k)}\right]+\frac{1+q m}{1-q+q k}\,.
\end{equation*}
Defining 
\begin{equation}\label{eq:y}
y\eqdef\frac{\mu}{1-\mu}\in [0,\infty)
\end{equation}
and introducing the functions 
\begin{align}
\label{eq:vrphikm}
\psi^{k,m}_{q}(y)&=y\left[\frac{k(1+q m)-m(1-q+q m)}{k(1-q+q k)}\right]+\frac{1+q m}{1-q+q k},\\
\label{eq:vrphi}
\psi_{q}(y)&=\sup_{(k,m)\in\mathcal{P}} \psi^{k,m}_{q}(y),
\end{align}
we obtain the following equivalent expression for the optimal bound in \eqref{eq:tightPoAaff}:
\begin{equation}\label{eq:B(p)}
\gamma(\mathcal{C}_{\mathsf{aff}}^{q})=\inf_{y\geq 0} \psi_{q}(y).
\end{equation}
If this infimum is attained at a certain $y_{q}$, then we get $\gamma(\mathcal{C}_{\mathsf{aff}}^{q})=\psi_{q}(y_{q})$ together with the corresponding optimal parameters 
\[
\mu=\frac{y_{q}}{1+y_{q}}\quad \text{and} \quad \lambda=(1-\mu)\,\gamma(\mathcal{C}_{\mathsf{aff}}^{q})=\frac{\psi_{q}(y_{q})}{1+y_{q}}.
\]

To proceed, we need the auxiliary function $\psi_{\infty}(y)$ defined in the next lemma.
\begin{lemma}
\label{le:L12} 
For all $q>0$ and $y>0$ the following limit is well defined and does not depend on $q$
\begin{equation}\label{eq:varphi-infty}
\psi_{\infty}(y)=\lim_{k\to\infty}\sup_{m\geq 0} \psi^{k,m}_{q}(y)=\frac{(y+1)^{2}}{4y}.
\end{equation}
This function is strictly decreasing  for $y\in (0,1)$ and strictly increasing for $y\in (1,\infty)$.
\end{lemma}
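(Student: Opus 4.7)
The plan is to compute the limit in two stages: first, for each fixed $k$, determine $\sup_{m\ge 0}\auxf^{k,m}_{\proba}(\fracmu)$ (which is equivalent to the maximum over real $m\ge 0$ up to a vanishing correction), and then pass to the limit $k\to\infty$. The key structural observation is that, after clearing the denominator $D=k(1-\proba+\proba k)$, the map $m\mapsto\auxf^{k,m}_{\proba}(\fracmu)$ is a concave quadratic:
\begin{equation*}
\auxf^{k,m}_{\proba}(\fracmu)\;=\;\frac{-\fracmu\proba\, m^{2}+[\,k\proba(\fracmu+1)-\fracmu(1-\proba)\,]\,m+k(\fracmu+1)}{k(1-\proba+\proba k)}\,.
\end{equation*}
Since the leading coefficient $-\fracmu\proba$ is strictly negative (recall $\fracmu>0,\proba>0$), the unconstrained maximum over $m\in\R$ is attained at the vertex $m^{\star}=\tfrac{k(\fracmu+1)}{2\fracmu}+\tfrac{\proba-1}{2\proba}$, which is positive for large $k$, so it dominates the supremum over $m\ge 0$.

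Second, I would substitute the closed-form $\text{max}=c-b^{2}/(4a)$ with $a=-\fracmu\proba$, $b=k\proba(\fracmu+1)-\fracmu(1-\proba)$, $c=k(\fracmu+1)$ and simplify. After expansion, the numerator becomes
\begin{equation*}
\frac{k^{2}\proba(\fracmu+1)^{2}}{4\fracmu}+\frac{k(\fracmu+1)(1+\proba)}{2}+\frac{\fracmu(1-\proba)^{2}}{4\proba}\,,
\end{equation*}
while the denominator $k(1-\proba+\proba k)=\proba k^{2}+(1-\proba)k$ is dominated by $\proba k^{2}$. Dividing numerator and denominator by $k^{2}$ and letting $k\to\infty$, the $\proba$ in the leading coefficient of the numerator cancels with the $\proba$ in the denominator and every lower-order term is killed, yielding exactly $(\fracmu+1)^{2}/(4\fracmu)$. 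This is where the independence from $\proba$ emerges transparently.

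Third, to handle the fact that $m$ ranges over $\N$ rather than $\R_{+}$, I would bracket the true supremum between the continuous maximum above (upper bound) and the value at $m_{k}=\lfloor k(\fracmu+1)/(2\fracmu)\rfloor$ (lower bound). Since $m_{k}/k\to (\fracmu+1)/(2\fracmu)$, a direct term-by-term limit of the rational expression shows that both bounds converge to the same quantity $(\fracmu+1)^{2}/(4\fracmu)$, establishing the claim.

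Finally, for the monotonicity statement, a direct differentiation gives $\auxf_{\infty}'(\fracmu)=(\fracmu^{2}-1)/(4\fracmu^{2})$, which is strictly negative on $(0,1)$ and strictly positive on $(1,\infty)$, with the unique minimum value $\auxf_{\infty}(1)=1$. The main obstacle in executing this plan is purely algebraic bookkeeping: keeping the $\proba$-dependent constants straight through the expansion of $b^{2}$ so as to verify that all $\proba$-dependence drops out in the limit, leaving the clean formula $(\fracmu+1)^{2}/(4\fracmu)$. Nothing deeper than concavity of a quadratic and comparison of polynomials in $k$ is required.
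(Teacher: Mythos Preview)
Your proposal is correct and follows essentially the same route as the paper: identify the map $m\mapsto\auxf^{k,m}_{\proba}(\fracmu)$ as a concave quadratic, locate its vertex, and compare leading $k^{2}$-coefficients in numerator and denominator to extract the $\proba$-independent limit $(\fracmu+1)^{2}/(4\fracmu)$, then differentiate for the monotonicity claim. The only cosmetic difference is that the paper evaluates the integer supremum exactly via the residual identity $\sup_{m\in\N}(\cdots)=\fracmu\proba(\widehat m^{2}-\remain^{2})$ with $\remain\in(-\tfrac12,\tfrac12]$, whereas you sandwich the integer supremum between the real maximum and the value at $m_k=\lfloor k(\fracmu+1)/(2\fracmu)\rfloor$; both devices kill the $O(1)$ correction in the same way.
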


\proof{Proof.}
Fix $y>0$ and $q\in (0,1]$. The maximum of $\psi^{k,m}_{q}(y)$ for $m\in\mathbb{N}$ is attained at the integer $\overline{m}$
that is closest to the unconstrained (real) maximizer (because $\psi^{k,m}_{q}(y)$ is quadratic in $m$)
\begin{equation}\label{eq:mhat}
\widehat{m} = \frac{(y+1)q k-y(1-q)}{2yq}\,.
\end{equation}
For a  large $k$, we have $\widehat{m}\geq 0$ and we may find $f\in (-\frac{1}{2},\frac{1}{2}]$ such that $\widehat{m} =\overline{m}+f$.
Then,
\begin{align*}
\sup_{m\in\mathbb{N}}\; ((y+1)q k-y(1-q))m -yq m^{2} &= yq(2\widehat{m} - \overline{m})\overline{m}\\
&=yq(\widehat{m}+f)(\widehat{m}-f)\\
&=yq(\widehat{m}^{2}-f^{2})\\
&=\frac{((y+1)q k-y(1-q))^{2}}{4yq}-yq f^{2},
\end{align*}
from which it follows that 
\[
\psi_\infty(y)=\lim_{k\to\infty}\dfrac{(y+1)k+\dfrac{1}{4yq}((y+1)q k-y(1-q))^{2}-yq f^{2}}{k(1-q+q k)}=\frac{(y+1)^{2}}{4y}\,.
\]
The monotonicity claims follow at once by computing the derivative $\psi'_\infty(y)=(y^{2}-1)/(4y^{2})$.
\Halmos
\endproof

The following lemma gathers some basic facts about the function $\psi_{q}:[0,\infty)\to\mathbb{R}$ and
shows in particular that its infimum is attained.

\begin{lemma}
\label{le:phip}
For each $q>0$ the function $\psi_{q}(\argdot)$ is convex and finite over $(0,\infty)$,
with $\psi_{q}(y)\to\infty$ both when $y\to 0$ 
and $y\to\infty$. In particular, the minimum of $\psi_{q}(\argdot)$ is attained 
at a point $y_{q}>0$.
\end{lemma}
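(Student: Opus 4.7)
The plan proceeds in three natural steps: convexity (immediate), finiteness on $(0,\infty)$, and the boundary blow-up, from which attainment is a standard Weierstrass-type conclusion. From \eqref{eq:vrphikm} the map $\fracmu\mapsto\auxf^{k,m}_{\proba}(\fracmu)$ is affine for every fixed $(k,m)\in\pairs$, so $\auxf_{\proba}$ is convex and lower semicontinuous on $[0,\infty)$ as a pointwise supremum of affine functions.

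For finiteness, fix $\fracmu>0$ and let $f(k)=\sup_{m\in\N}\auxf^{k,m}_{\proba}(\fracmu)$ for $k\geq 1$. Expanding the numerator in \eqref{eq:vrphikm}, the map $m\mapsto\auxf^{k,m}_{\proba}(\fracmu)$ is a concave quadratic in $m$ with negative leading coefficient $-\fracmu\proba/(k(1-\proba+\proba k))$, so $f(k)$ is attained at some finite integer and in particular $f(k)<\infty$ for every $k$. Moreover, \cref{le:L12} gives $\lim_{k\to\infty}f(k)=\auxf_{\infty}(\fracmu)=(\fracmu+1)^{2}/(4\fracmu)<\infty$, so the sequence $(f(k))_{k\geq 1}$ is bounded and $\auxf_{\proba}(\fracmu)=\sup_{k\geq 1}f(k)<\infty$.

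For the boundary behaviour, the same limit identity yields the pointwise lower bound $\auxf_{\proba}(\fracmu)\geq\auxf_{\infty}(\fracmu)=(\fracmu+1)^{2}/(4\fracmu)$, and the right-hand side diverges to $+\infty$ both as $\fracmu\to 0^{+}$ and as $\fracmu\to\infty$. Since a finite convex function on an open interval is automatically continuous, $\auxf_{\proba}$ is continuous on $(0,\infty)$, and combined with the boundary divergence every sublevel set of $\auxf_{\proba}$ is a compact subset of $(0,\infty)$. By the extreme value theorem $\auxf_{\proba}$ thus attains its minimum at some $\fracmu_{\proba}>0$.

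The only nontrivial point is the uniform control of $f(k)$ as $k$ grows, which is exactly what \cref{le:L12} provides; once that lemma is in hand, the rest of the argument is essentially bookkeeping.
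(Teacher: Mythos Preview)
Your proof is correct and follows essentially the same route as the paper's: convexity from the supremum-of-affines structure, the lower bound $\auxf_{\proba}\geq\auxf_{\infty}$ for the blow-up at the endpoints, and continuity plus inf-compactness for attainment. The one minor variation is in the finiteness step: the paper relaxes the inner supremum to $m\in\R$ and bounds the resulting quotient of quadratics in $k$ directly, whereas you reuse \cref{le:L12} to argue that $(f(k))_{k\geq 1}$ converges and is therefore bounded; both arguments are short and equivalent in spirit.
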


\proof{Proof.}
Convexity is obvious as $\psi_{q}(\argdot)$ is a supremum of affine functions.
The infinite limits at 0 and $\infty$ follow by noting that $\psi_{q}(y)\geq\psi_{\infty}(y)$ for $y>0$,
together with the fact that $\psi_{q}(0)=\infty$ which results from letting $m\to\infty$ in
the inequality  $\psi_{q}(0)\geq \psi^{1,m}_{q}(0)=1+q m\to \infty$.

To show that $\psi_{q}(y)<\infty$ for   $y\in(0,\infty)$, we rewrite the expression of $\psi_{q}(y)$ as
\begin{equation}\label{eq:alt}
\psi_{q}(y)=\sup_{k\geq 1} \frac{1}{k(1-q+q k)}\left[(y+1)k+\sup_{m\geq 0}\left[((y+1)q k-y (1-q))m -y q m^{2}\right]\right].
\end{equation}
Relaxing the inner supremum and considering the maximum with $m\in\mathbb{R}$ we get
\[
\psi_{q}(y)\leq \sup_{k\geq 1} \frac{1}{k(1-q+q k)}\left[(y+1)k+\frac{((y+1)q k-y (1-q))^{2}}{4y q}\right].
\]
The latter is a quotient of two quadratics in $k$ so it remains bounded and the supremum is finite.

Because $\psi_{q}(\argdot)$ is convex and finite on $(0,\infty)$, it is continuous. Moreover, because it goes to $\infty$
at 0 and $\infty$, it is inf-compact  and therefore its minimum is attained.
\Halmos
\endproof

Our next step is to find the exact expression for the optimal solution $y_{q}$ for all $q\in[0,1]$.
We will show that, for $q$ large, the minimum of $\psi_{q}(\argdot)$ is attained at a point $y_{q}$ for which the supremum in \eqref{eq:vrphi} is reached with $k=1$ and simultaneously for $m=1$ and $m=2$, that is, 
\begin{equation*}
\psi_{q}(y_{q})=\psi_{q}^{1,1}(y_{q})=\psi_{q}^{1,2}(y_{q}).
\end{equation*} 
For smaller values of $q$ the supremum is still reached at $k=1$
with either $m=1$ or $m=0$, but also for $k$ and $m$ tending to $\infty$. 
This suggests to consider the solutions of the equations
\begin{align} 
\label{eq:y-0-p}
\psi_\infty(y)=\psi_{q}^{1,0}(y)&\quad\iff\quad y=y_{0,q}\eqdef 1/3\\
\label{eq:y-1-p}
\psi_\infty(y)=\psi_{q}^{1,1}(y)&\quad\iff\quad y=y_{1,q}\eqdef\frac{1}{1+2q+ 2\sqrt{q(2+q)}}\\
\label{eq:y-2-p}
\psi_{q}^{1,1}(y)=\psi_{q}^{1,2}(y)&\quad\iff\quad y=y_{2,q}\eqdef\frac{q}{1+q}.
\end{align}
Note that these three solutions belong to $(0,1)$. Let also $\bar{q}_{0}=1/4$ be the point at which $y_{0,q}=y_{1,q}$, and $\bar{q}_{1}\sim 0.3774$ the point where $y_{1,q}=y_{2,q}$ which is the unique real root of $8 q^{3}+4q^{2}=1$.

\begin{proposition}
\label{pr:low}
The minimum of $\psi_{q}(\argdot)$ is attained at $y_{0,q}$ if and only if $q\in (0,\bar q_{0}]$.
\end{proposition}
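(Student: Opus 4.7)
The plan is to sandwich $\auxf_\proba$ between two universal lower bounds that it inherits from its sup-definition: $\auxf_\proba(\fracmu)\ge \auxf^{1,0}_\proba(\fracmu)=1+\fracmu$, and by \cref{le:L12} also $\auxf_\proba(\fracmu)\ge \auxf_\infty(\fracmu)=(\fracmu+1)^2/(4\fracmu)$. The first bound is strictly increasing and the second is strictly decreasing on $(0,1)$, and they cross precisely at $\fracmu_{0,\proba}=1/3$ with common value $4/3$. Hence $\max\{\auxf^{1,0}_\proba,\auxf_\infty\}\ge 4/3$ pointwise with equality only at $\fracmu=1/3$, which gives $\auxf_\proba(\fracmu)\ge 4/3$ everywhere and precludes the value $4/3$ at any $\fracmu\neq 1/3$.

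For the sufficient direction ($\proba\le\bar\proba_0$), it then suffices to show $\auxf_\proba(1/3)=4/3$, equivalently $\auxf^{k,m}_\proba(1/3)\le 4/3$ for every $(k,m)\in\pairs$. A short manipulation rewrites this as
\[
\proba(m-2k)^2+m(1-\proba)\ge 4k\proba.
\]
The case $m=0$ is immediate from $k\ge 1$. For $m\ge 1$ I would handle $k\ge 2$ and $k=1$ separately: in the first case, the discriminant of the resulting quadratic in $m$ equals $(1-\proba)^2-8k\proba(1-3\proba)$, which is convex in $\proba$ and non-positive at both endpoints $\proba=1/(1+4k)$ and $\proba=1/4$, hence non-positive throughout; in the second case the expression factors as $m(\proba m+1-5\proba)$, and non-negativity on integers $m\ge 1$ is precisely the condition $\proba\le 1/4$. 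Combined with the sandwich of the first paragraph, this shows $\fracmu_{0,\proba}$ is a minimizer of $\auxf_\proba$ with value $4/3$.

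For the necessary direction ($\proba>\bar\proba_0$), a single extra piece already does the job: $\auxf^{1,1}_\proba(\fracmu)=\proba\fracmu+1+\proba$, so $\auxf_\proba(1/3)\ge\auxf^{1,1}_\proba(1/3)=1+4\proba/3>4/3$. Combined with the first paragraph, the value $4/3$ is unreachable by $\auxf_\proba$, so $\min\auxf_\proba>4/3=\auxf_\infty(\fracmu_{0,\proba})$; the characterization of the minimum through the coincidence of the two universal lower bounds at $\fracmu_{0,\proba}$ thereby breaks down, and in the remaining regimes the minimizer shifts to $\fracmu_{1,\proba}$ or $\fracmu_{2,\proba}$ as the next two propositions will establish.

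The main technical obstacle I foresee is the second step, namely the uniform verification of $\proba(m-2k)^2+m(1-\proba)\ge 4k\proba$ over all $(k,m)\in\pairs$. Although the discriminant argument for $k\ge 2$ is clean once one exploits convexity in $\proba$ to reduce to endpoint values, the truly binding case is $(k,m)=(1,1)$, where the inequality saturates exactly at $\proba=1/4$; this case alone is what pins down the threshold $\bar\proba_0=1/4$ and makes the characterization sharp.
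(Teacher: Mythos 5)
Your sufficiency argument is correct and shares its skeleton with the paper's proof: both rest on the same certificate, namely that the two minorants $\auxf^{1,0}_{\proba}$ and $\auxf_{\infty}$ touch $\auxf_{\proba}$ at $\fracmu_{0,\proba}=1/3$ with common value $4/3$, and both reduce everything to the inequality $\auxf^{k,m}_{\proba}(1/3)\le 4/3$ for all $(k,m)$, with $(k,m)=(1,1)$ binding and pinning down $\bar\proba_{0}=1/4$. You differ in two places. For minimality you use the elementary pointwise sandwich $\auxf_{\proba}\ge\max\{\auxf^{1,0}_{\proba},\auxf_{\infty}\}\ge 4/3$, with equality in the envelope only at $1/3$; the paper instead uses convexity, observing that the slopes $1$ and $-2$ of the touching minorants give $0\in[-2,1]\subseteq\partial\auxf_{\proba}(1/3)$. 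Both are valid, and yours avoids subgradient language. For the verification of the key inequality, your rewriting $\proba(m-2k)^{2}+m(1-\proba)\ge 4k\proba$ is algebraically identical to the paper's $\proba(2k-m-1)^{2}+m(1-3\proba)-\proba\ge 0$, but the paper dispatches it in two lines (for $m\ge 1$ drop the square and use $m(1-3\proba)-\proba\ge 1-4\proba\ge 0$), whereas your case split is heavier and has a hole: for $k\ge 2$ the discriminant $(1-\proba)^{2}-8k\proba(1-3\proba)$ is non-positive only on $[1/(1+4k),1/4]$ (it tends to $1$ as $\proba\to 0$), so ``non-positive throughout'' does not cover $\proba\in(0,1/(1+4k))$. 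There the fix is one line: all coefficients of the quadratic in $m$ are then nonnegative with positive linear term, so both roots are non-positive and the inequality is trivial for integer $m\ge 0$ --- but as written the step fails for small $\proba$.

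The necessity direction contains a genuine logical gap. From $\auxf_{\proba}(1/3)\ge\auxf^{1,1}_{\proba}(1/3)=1+4\proba/3>4/3$ you infer that the minimum cannot be attained at $1/3$; this does not follow, since for $\proba>1/4$ the minimum value itself exceeds $4/3$, so nothing so far prevents it from being attained at $1/3$. Worse, deferring to \cref{pr:mid,pr:hig} cannot rescue the literal ``only if'': at $\proba=1/2$ one has $\fracmu_{2,\proba}=\proba/(1+\proba)=1/3=\fracmu_{0,\proba}$, so by \cref{pr:hig} the minimum \emph{is} attained at $\fracmu_{0,\proba}$ even though $1/2\notin(0,\bar\proba_{0}]$. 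What your $(1,1)$ computation genuinely establishes --- and what the paper actually proves as the iff in \cref{eq:igualdad0} --- is necessity for the \emph{certificate}: the coincidence $\auxf_{\proba}(1/3)=\auxf_{\infty}(1/3)=\auxf^{1,0}_{\proba}(1/3)$ holds precisely when $\proba\le 1/4$. You should recast your necessity claim in that form, as the paper does; where genuine non-attainment at $1/3$ can be shown (e.g.\ for $\proba\in(\bar\proba_{0},\bar\proba_{1}]$), the correct argument is the strict comparison $\min\auxf_{\proba}=\auxf^{1,1}_{\proba}(\fracmu_{1,\proba})=1+\proba+\proba\,\fracmu_{1,\proba}<1+\proba+\proba/3\le\auxf_{\proba}(1/3)$, using $\fracmu_{1,\proba}<1/3$ and \cref{pr:mid}, not the unreachability of the value $4/3$.
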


\proof{Proof.}
We will prove that
\begin{equation}\label{eq:igualdad0}
\psi_{q}(y_{0,q})=\psi_{\infty}(y_{0,q})=\psi_{q}^{1,0}(y_{0,q})\quad\text{iff}\quad q\leq\bar q_{0}.
\end{equation}
Assuming this, because both $\psi_{q}^{1,0}(\argdot)$ and $\psi_{\infty}(\argdot)$ are minorants of $\psi_{q}(\argdot)$, their slopes $(\psi_{q}^{1,0})'(y_{0,q})=1$ and $\psi_{\infty}'(y_{0,q})=-2$ 
are subgradients of $\psi_{q}(\argdot)$ at $y_{0,q}$. Hence 
$0\in[-2,1]\subseteq \partial\psi_{q}(y_{0,q})$ and $y_{0,q}$ is indeed a minimizer, as claimed.

To prove \eqref{eq:igualdad0}, we observe that the second part of this equality stems from the definition of $y_{0,q}$ in \eqref{eq:y-0-p}.
To establish the first equality, we note that $\psi_{\infty}(\argdot)\leq \psi_{q}(\argdot)$, so it suffices to show that 
$\psi_{q}(y_{0,q})\leq\psi_{\infty}(y_{0,q})$, which is equivalent to
\begin{equation}\label{eq:D0}
y_{0,q}\left[\frac{k(1+q m)-m(1-q+q m)}{k(1-q+q k)}\right]+\frac{1+q m}{1-q+q k}\leq\frac{(1+y_{0,q})^{2}}{4y_{0,q}}\quad\forall (k,m)\in\mathcal{P}\quad\text{iff }q\leq \bar q_{0}.
\end{equation}

Substituting $y_{0,q}=1/3$, the left inequality can be written equivalently as 
\[
0\leq q(2k-m-1)^{2}+m(1-3q)-q\quad\forall (k,m)\in\mathcal{P}.
\]
This holds trivially for $m=0$ so we just consider $m\geq 1$. 
Now, for $k=m=1$ this requires $q\leq 1/4$. Conversely, if $q\leq 1/4$ we have
$1-3q>0$ and therefore $m(1-3q)$ increases with $m$ so that 
\begin{equation*}
q(2k-m-1)^{2}+m(1-3q)-q\geq m(1-3q)-q\geq 1-4q\geq 0. 
\Halmos
\end{equation*}
\endproof

\begin{proposition} 
\label{pr:mid}
The minimum of $\psi_{q}(\argdot)$ is attained at $y_{1,q}$ if and only if $q\in[ \bar{q}_{0},\bar{q}_{1}]$.
\end{proposition}

\proof{Proof.}
We will prove that  
\begin{equation}\label{eq:igualdad}
\psi_{q}(y_{1,q})=\psi_{\infty}(y_{1,q})=\psi_{q}^{1,1}(y_{1,q})\quad\text{iff}\quad q\in [\bar q_{0},\bar q_{1}]. 
\end{equation}
Assuming this, it follows that 
\begin{equation}\label{eq:subgrad}
(\psi_{q}^{1,1})'(y_{1,q})= q \quad\text{and}\quad \alpha\eqdef\psi_{\infty}'(y_{1,q})
\end{equation} 
are subgradients of $\psi_{q}(\argdot)$ at $y_{1,q}$. Now, because $y_{1,q}<1$, by Lemma~\ref{le:L12}, we have 
$\alpha<0$ so that
$0\in[\alpha,q]\subseteq \partial\psi_{q}(y_{1,q})$ and therefore $y_{1,q}$ is a minimizer.

To prove \eqref{eq:igualdad}, we observe that the second equality stems from the definition of $y_{1,q}$ in \eqref{eq:y-1-p}. 
To establish the first equality, we note that $\psi_{\infty}(\argdot)\leq \psi_{q}(\argdot)$, so it suffices to show that 
$\psi_{q}(y_{1,q})\leq\psi_{\infty}(y_{1,q})$, which is equivalent to
\begin{equation}\label{eq:D1}
y_{1,q}\left[\frac{k(1+q m)-m(1-q+q m)}{k(1-q+q k)}\right]+\frac{1+q m}{1-q+q k}\leq\frac{(1+y_{1,q})^{2}}{4y_{1,q}}\quad\forall (k,m)\in\mathcal{P},\quad\text{iff }q\in [\bar q_{0},\bar q_{1}].
\end{equation}
Dividing by $y_{1,q}$ and letting 
\begin{equation}\label{eq:z}
z=\frac{1+y_{1,q}}{2y_{1,q}}=1+q+\sqrt{q(2+q)},
\end{equation}
the left inequality becomes 
\[
\left[\frac{k(1+q m)-m(1-q+q m)}{k(1-q+q k)}\right]+\frac{1+q m}{1-q+q k}(2z-1)\leq z^{2}.
\]
Multiplying by $k(1-q+q k)$ and factorizing, this can be rewritten as
\begin{equation}\label{eq:dd0}
Q_{q}(k,m)\eqdef
q\left(z k-m+\frac{(1-q) z-2}{2q}\right)^{2}+((1-q) z-1-q)m-\frac{((1-q) z-2)^{2}}{4q} 
\ge 0,
\end{equation}
so that, the left inequality of \eqref{eq:D1} is equivalent to
$Q_{q}(k,m)\geq 0$ for all $(k,m)\in\mathcal{P}$.
We observe that
\begin{alignat}{5}
\label{eq:nnn}
Q_{q}(1,0)\geq 0 & \iff & z\geq 2 & \iff & q\geq \bar q_{0}\\
\label{eq:nnnn}
Q_{q}(1,2)\geq 0 & \iff & 8q^{3}+4q^{2}-1\leq 0 & \iff & q\leq \bar q_{1}
\end{alignat}
so that $q\in[\bar q_{0},\bar q_{1}]$ is a necessary condition for \eqref{eq:D1}.
We now show that it is also sufficient.

\begin{case}
$m=0$: The inequality $Q_{q}(k,0)\geq 0$ is equivalent to $z(1-q+q k)\geq 2$ so that the most stringent
condition is for $k=1$, which holds for all $q\geq \bar q_{0}$, as already noted in \eqref{eq:nnn}.

\end{case}

\begin{case}
$m=1$:
From the very definition of $y_{1,q}$ we have that \eqref{eq:D1} holds with equality for $(k,m)=(1,1)$, 
so that $Q_{q}(1,1)=0$. 
Because $Q_{q}(k,1)$ is quadratic in $k$, in order to have $Q_{q}(k,1)\geq 0$ 
for all $k\geq 1$, it suffices to check that $Q_{q}(2,1)\geq 0$. The latter can be factorized as
\[
Q_{q}(2,1)=2(1+q)z(z-2)+1,
\]
so that, substituting $z$ and simplifying, the resulting inequality becomes
\[
4q(1+q)\sqrt{q(2+q)}+4q^{3}+8q^{2}+2q-1\geq 0.
\]
The conclusion follows because this expression increases with $q$ and the inequality holds for $q=1/4$.
\end{case}
 
\begin{case}
$m=2$: As noted in \eqref{eq:nnnn} we have $Q_{q}(1,2)\geq 0$ for all $q\leq\bar q_{1}$. On the other hand, because $z>1$
we have that $Q_{q}(k,2)$ increases for $k\geq 2$, so that it suffices to show that $Q_{q}(2,2)\geq 0$. Now, $Q_{q}(2,2)$ can be factorized as
\[
Q_{q}(2,2)=2(1+q)(z-1)^{2}-4q z
\]
and substituting $z$ we get
\[
Q_{q}(2,2)=4q^{2}(1+q+\sqrt{q(2+q)})\geq 0.
\] 
\end{case}

\begin{case}
$m\geq 3$:
Let $\alpha=(1-q)z-1-q$ be the slope of the linear term in $Q_{q}(k,m)$.
Neglecting the quadratic part we have
\begin{equation}\label{eq:ooo1}
Q_{q}(k,m)\geq \alpha\, m -\frac{((1-q)z-2)^{2}}{4q}
\end{equation}
and therefore it suffices to show that the latter linear expression is nonnegative.
We claim that for all $q\leq\bar q_{1}$ we have $\alpha\geq 0$. 
Indeed, substituting $z$ we get
\[
\alpha=(1-q)\sqrt{q(2+q)}-q(1+q),
\]
so that $\alpha\geq 0$ if and only if $(1-q)^{2}(2+q)\geq q(1+q)^{2}$ which simplifies as
$q^{2}+2q\leq 1$ and holds for $q\leq\sqrt{2}-1$, and in particular for $q\leq \bar q_{1}$.
Thus, the right hand side in \eqref{eq:ooo1} increases with $m$, so what remains to be shown is that it is nonnegative for $m=3$.
The latter amounts to 
\[
3\,\alpha\geq \frac{((1-q)z-2)^{2}}{4q}\,,
\] 
which is equivalent to
\[
2(6q+1+q^{2})(1-q)\sqrt{q(2+q)}\geq 1+2q+11q^{2}+12q^{3}+2q^4
\]
and can be seen to hold for all $q\in [\bar q_{0},\bar q_{1}]$. 
\Halmos
\end{case}
\endproof

\begin{proposition}
\label{pr:hig}
The minimum of $\psi_{q}(\argdot)$ is attained at $y_{2,q}$ if and only if $q\in[ \bar{q}_{1},1]$.
\end{proposition}

\proof{Proof.}
For $y=y_{2,q}$ and $k=1$ the unconstrained maximizer in \eqref{eq:mhat} is $\widehat{m} =3/2$ so that the supremum $\sup_{m\geq 0}\psi^{1,m}_{q}(y_{2,q})$ is attained at $m=1$ and $m=2$.
The slopes of the corresponding terms are
\[
(\psi^{1,m}_{q})'(y)=
\begin{dcases}
q &\text{if } m=1,\\ 
-1&\text{if } m=2.
\end{dcases}
\]
If the outer supremum $\sup_{k\geq 1}$ in \eqref{eq:alt} is attained for $k=1$ it follows that $0\in [-1,q]\subseteq \partial\psi_{q}(y_{2,q})$ and, as a consequence, $y_{2,q}$ is a minimizer.

Considering the expression in \eqref{eq:alt}, and substituting the value of $y_{2,q}$ and using the fact that for $k=1$ the $\sup_{m\geq 0}\psi^{1,m}_{q}(y_{2,q})$ is attained at $m=1$, it follows that $\sup_{k\geq 1}$ is attained at $k=1$ if and only if
\begin{equation}\label{eq:cond}
(1+2q)k+\sup_{m\in\mathbb{N}}\left[((1+2q)k-(1-q))q m -q^{2}m^{2}\right]\leq [(1+q)^{2}+q^{2}]k(1-q+q k)\quad\forall\;k\geq 2.
\end{equation}
We claim that this holds if and only if $q\in[ \bar{q}_{1},1]$.
To this end, we note that for all $k\geq 1$ the unconstrained maximum of the quadratic  $((1+2q)k-(1-q))q m -q^{2}m^{2}$ is attained at
\[
\widehat{m} = \frac{(1+2q)k-(1-q)}{2q}>1.
\]
Proceeding as in the proof of Lemma~\ref{le:L12} we may find an integer $\overline{m}\geq 1$ and $f\in (-\frac{1}{2},\frac{1}{2}]$ such that $\widehat{m}=\overline{m}+f$. Hence, the supremum for $m\in\mathbb{N}$ is attained at $\overline{m}$ and
\begin{equation}\label{eq:solfrac}
\sup_{m\in\mathbb{N}}[((1+2q)k-(1-q))q m -q^{2}m^{2}]= q^{2}(\widehat{m}^{2}-f^{2})
=\frac{1}{4}((1+2q)k-(1-q))^{2}-q^{2}f^{2}.
\end{equation}
Replacing this expression into \eqref{eq:cond} and after simplification, the condition becomes
\begin{equation}\label{eq:cond1}
0 \leq [8q^{3}+4q^{2}-1]k^{2}+[2-2q-4q^{2}-8q^{3}]k+4q^{2}f^{2}-(1-q)^{2}\quad\forall\;k\geq 2.
\end{equation}
It follows that a necessary condition is $8q^{3}+4q^{2}-1\geq 0$ which amounts to $q\geq\bar{q}_{1}$. 
It remains to be shown that, once $q\geq\bar{q}_{1}$, the inequality \eqref{eq:cond1} holds automatically.
Consider first the case $k\geq 3$. 
Ignore the nonnegative term $4q^{2}f^{2}$ and define 
\begin{equation}\label{eq:Q}
Q(x)=[8q^{3}+4q^{2}-1]x^{2}+[2-2q-4q^{2}-8q^{3}]x-(1-q)^{2}.
\end{equation}
For $q\geq\bar{q}_{1}$ this is quadratic and convex in $x$ and we have 
\[
Q'(3)=40q^{3}+20q^{2}-2q-4\geq 0\quad \forall\,q\geq\bar{q}_{1}.
\]
Hence $Q(x)$ is increasing for $x\in [3,\infty)$ and then \eqref{eq:cond1} holds for all $k\geq 3$
because
\[
Q(x)\geq Q(3)=48q^{3}+23q^{2}-4q-4\geq 0\quad \forall\,q\geq\bar{q}_{1}.
\]
For $k=2$ it is not always the case that $Q(2)\geq 0$ so we must consider also the role of the
fractional residual $4q^{2}y^{2}$. The inequality to be proved is
\[
2(1+2q)+\sup_{m\in\mathbb{N}}\; (1+5q)q m -q^{2}m^{2}\leq [(1+q)^{2}+q^{2}]2(1+q).
\]
The supremum for $m\in\mathbb{N}$ is attained at the integer closest to
\[
\widehat{m} = \frac{1+5q}{2q}=2+\frac{1}{2}+\frac{1}{2q}\,,
\]
which can be either $m=3$ or $m=4$ depending on whether $q$ is larger or smaller than $1/2$.
Now, for these values of $m$, the inequalities to be checked are
\begin{align*}
2(1+2q)+ (1+5q)3q-9q^{2}&\leq [(1+q)^{2}+q^{2}]2(1+q),\\
2(1+2q)+ (1+5q)4q-16q^{2}&\leq [(1+q)^{2}+q^{2}]2(1+q),
\end{align*}
which reduce, respectively, to 
\begin{align*}
0&\leq 4q^{3}+2q^{2}-q,\\
0&\leq 4q^{3}+4q^{2}-2q,
\end{align*}
and are easily seen to hold for all $q\in [\bar{q}_{1},1]$.
\Halmos
\endproof

With all the previous ingredients the proof of our main result is straightforward.

\proof{Proof of Proposition~\ref{pr:lambda-mu}.}
Substituting the expressions for the optimal solution $y_{q}$ derived in Propositions~\ref{pr:low},  \ref{pr:mid}, and \ref{pr:hig} we 
get the optimal bound
$\gamma(\mathcal{C}_{\mathsf{aff}}^{q})=\psi_{q}(y_{q})$ which gives \eqref{eq:PoAbound}, as well as the optimal parameters
\[
\mu=\frac{y_{q}}{1+y_{q}}\quad\text{and}\quad \lambda=\frac{\psi_{q}(y_{q})}{1+y_{q}},
\]
which are shown in \eqref{eq:para}.
\Halmos
\endproof

\proof{Proof of Proposition~\ref{pr:lm-linear}.}
We observe that for the purely linear cost $c_{e}(x)=x$, condition \eqref{eq:ccp} reduces to
\begin{equation}\label{eq:smooth}
k\,(1+q\,m)\leq \lambda\,k^2+\mu\,m\,q\,(1+q(m-1))\quad(\forall k,m\in\mathbb{N}).
\end{equation}
If we find a value for $\lambda$ and $\mu$ with $\lambda\geq 1$, then this guarantees that \eqref{eq:ccp} holds automatically for all affine costs $c_{e}(x)=a_{e}\,x+b_{e}$ with 
$a_{e}\geq 0, b_{e}\geq 0$. 
Indeed, we have $c_{e}^{q}(x)= q\parens*{a_{e}\,(1+q(x-1))+ b_{e}}$
and because $\lambda\geq 1$ we get
\begin{align*}
\frac{1}{q}\,k\, c_{e}^{q}(1+m)
&=a_{e}\, k(1+q\,m) + b_{e} k\\
&\leq\lambda\, a_{e}\, k^2 + b_{e} k+\mu\, m\,q\,a_{e}(1+q(m-1)) \\
&\leq \lambda\,k\, c_{e}(k)+\mu\,m\, c_{e}^{q}(m).
\end{align*}
Let $\ell\in\mathbb{N}\setminus\{0\}$. 
We now show that, with 
\begin{equation}
\label{eq:lambda-mu-ell}   
(\lambda,\mu)=\parens*{\frac{\ell(\ell+1)q^2+2\ell q+1}{2\ell q+1},\frac{1}{2\ell q+1}}
\end{equation}
\eqref{eq:smooth} holds for all $k,m\in\mathbb{N}$.
For $k=0$ this holds because $\mu>0$ and $q\leq 1$. 
If $k=1$, then \eqref{eq:smooth} can be reduced to $(\ell-m)(\ell+1-m)q^2\geq 0$, which clearly holds for all $m\in\mathbb{N}$. 
Finally, for $k\geq 2$ we note that the expression $\lambda\cdot  k^2+\mu\cdot q\cdot m(1+q\cdot (m-1))-k(1+m\cdot q)$ is quadratic in $m$. Its miminizer for $m\in\mathbb{R}$ is 
\begin{equation*}
m^*=\frac{k+q +2\ell kq -1}{2q}
\end{equation*} 
and its minimum value is
\begin{equation*}
\frac{(k+1)(k-1)+k(k-2) + 2q (1 - q) +4 \ell q(1+q)k (k - 1)+ (k-q)^2}{4(2\ell q +1)}\geq 0.
\end{equation*}
This implies  \eqref{eq:smooth} for all $k\geq 2$ and $m\in\mathbb{N}$, from which it follows that $\gamma_{\mathsf{pr}}(\mathcal{C}_{\mathsf{aff}},q)\leq\xi_{\ell}(q)$. The proof is then completed by taking the infimum over $\ell\in\mathbb{N}\setminus\{0\}$.
\Halmos
\endproof

\section{Routing Games with Linear Costs are Tight for $\OPoA(\mathcal{C}_{\mathsf{aff}},q)$}
\label{se:routing_games_are_tight}

The following examples show that the upper bounds for the \ac{OPoA} in Theorem~\ref{th:upper} are tight
and are in fact attained (at least asymptotically) by network routing games with purely linear costs and homogeneous players. 
We proceed in order with three examples that address the three regimes $(0,\bar{q}_{0}],  [\bar{q}_{0},\bar{q}_{1}], [\bar{q}_{1},1]$, for $\bar{q}_{0}=1/4$ and $\bar{q}_{1}\sim 0.3774$. These examples are inspired by the 
minimization problems that define $\gamma(\mathcal{C}_{\mathsf{aff}}^{q})$
subject to the constraints  \eqref{eq:r1}.

\begin{example} 
\label{ex:low}
Let $k\in\mathbb{N}$ and consider a routing game with $n=2k$ players on the bypass network $\mathcal{B}_{k}$ shown in Figure~\ref{fi:bypass}. 
Assume that $p_{i}= q>0$ for all $i\in\mathcal{N}$.
Players $i=1,\ldots,k$ have two strategies, $s_{i}$ and $\bar{s}_{i}$, to travel from origin $\mathsf{O}_{i}$ to destination $\mathsf{D}_{i}$. 
Strategy $s_{i}$ consists of an exclusive direct link $e_{i}$ with cost $c_{i}(x)=x$, whereas the bypass strategy $\bar{s}_{i}$ uses a faster shared link $\bar{e}$ with cost 
\[
\bar{c}(x)=\frac{1}{1+2kq}\cdot x
\] 
connected to $\mathsf{O}_{i}$ and $\mathsf{D}_{i}$ by zero cost links (dashed). 
The remaining players $i=k+1,\ldots, 2k$ have a common origin $\bar{\mathsf{O}}$ and destination $\bar{\mathsf{D}}$ with a unique strategy $\bar{s}_{i}$ using the shared link $\bar{e}$.

\begin{tikzpicture}[->,>=stealth,shorten >=1pt,auto,node distance=5cm,
  thick,main node/.style={circle,fill=blue!20,draw,minimum size=25pt,font=\sffamily\Large\bfseries},source node/.style={circle,fill=green!20,draw,minimum size=25pt,font=\sffamily\Large\bfseries},dest node/.style={circle,fill=red!20,draw,minimum size=25pt,font=\sffamily\Large\bfseries}]

\end{tikzpicture}

\begin{figure}[ht]
\centering
\begin{tikzpicture}[thick,scale=0.6, every node/.style={transform shape}]
\def \n {5}
\def \radius {4cm}
\def \radiuss {6.5cm}
\def \margin {6} 

  \node[draw, circle,minimum size = 0.6cm](Ob)  at (-1,-3.6) {\small $\bar{\mathsf{O}}$};
  \node[draw, circle,minimum size = 0.6cm](Db)  at (10,-3.6) {\small $\bar{\mathsf{D}}$};
  \draw[->, >=latex,out = -90, in = -90] (Ob) to (Db) ;
  \node[fill=white]  at (4.5,-7.2)  {$\bar{e}$};
     
\foreach \s in {1,...,\n}
{
  \node[draw, circle,minimum size = 0.6cm](O\s)  at (3,-1.2*\s) {\small $\mathsf{O}_\s$};
  \node[draw, circle,minimum size = 0.6cm](D\s)  at (6,-1.2*\s) {\small $\mathsf{D}_\s$};
  \draw[->, >=latex] (O\s) -- (D\s) node[midway,fill=white,scale=0.9] {$e_\s$};
  \draw[->, dashed,>=latex] (O\s) -- (Ob) ;
  \draw[->, dashed,>=latex] (Db) -- (D\s) ;
}
\end{tikzpicture}

\caption{\label{fi:bypass} The bypass network $\mathcal{B}_{5}$. Dashed links have zero cost.}
\end{figure}

We claim that for each player $i=1,\ldots,k$ the bypass $\bar{s}_{i}$ is a strictly dominant strategy. 
Indeed, in every strategy profile there are at most $2k$ players on $\bar{e}$, 
and thus, for all $\boldsymbol{s}'_{-i}\in \mathcal{S}_{-i}$,
\begin{subequations}
\label{eq:cost-bypass}
\begin{align}
C_{i}(\bar{s}_{i},\boldsymbol{s}'_{-i})
&\leq q\cdot\frac{1}{1+2kq}\cdot(1+(2k-1)q)\\
&< q=C_{i}(s_{i},\boldsymbol{s}'_{-i}).
\end{align}
\end{subequations}
Hence, in the unique \ac{BNE} all players use $\bar{s}_{i}$, whereas in the optimal profile $\boldsymbol{s}^{\ast}$ players $i=1,\ldots,k$ use their exclusive route $s_{i}^{\ast}=s_{i}$ and players
$i=k+1,\ldots,2k$ use their only available strategy $s_{i}^{\ast}=\bar{s}_{i}$. This yields the lower bound
\[
\OPoA(\mathcal{C}_{\mathsf{aff}},q)\geq\frac{2kq\cdot\dfrac{1}{1+2kq}\cdot(1+(2k-1)q)}{kq\cdot\dfrac{1}{1+2kq}\cdot(1+(k-1)q)+kq}=\frac{4kq+2-2q}{3kq+2-q}.
\]
This quantity increases towards $4/3$ as $k$ grows to $\infty$. In particular, it follows that for $q\in (0,1/4]$, the bound of Theorem~\ref{th:upper} is tight.
\end{example}

\begin{example}
\label{ex:mid}
Consider a pair of integers $m>k\geq 1$ and set $n=m+k$. We build a graph $G_{k,m}$ consisting of a roundabout with $n$ edges of the form $(A_{i},B_{i})$, with linear costs $h_{i}(x)=\gamma x$, where 
\begin{equation}
\label{eq:alpha}
\gamma=\frac{q}{m(1-q+q m)- k(1+q m)},
\end{equation}
connected by zero-cost links $(B_{i},A_{i+1})$ (modulo $n$). Notice that $\gamma>0$.
Additionally there are $n$ exit edges $(B_{i},F_{i})$ with costs $g_{i}(x)=x$. 
Figure~\ref{fi:Roundabout2} illustrates the roundabout network $G_{2,4}$.

\begin{figure}[ht]
\centering
\begin{tikzpicture}[thick,scale=0.6, every node/.style={transform shape}]

\def \n {6}
\def \radius {4cm}
\def \radiuss {6.5cm}

\def \margin {6} 
\foreach \s in {1,...,\n}
{
  \node[draw, circle,minimum size = 0.6cm](H\s)  at ({90-360/\n * (\s - 1)}:\radius) {\small $A_\s$};
  \node[draw, circle](G\s)  at ({90-360/\n * (\s - 1/3)}:\radius) {\small $B_\s$};
  \node[draw, circle](E\s)  at ({90-360/\n * (\s - 1/3)}:\radiuss) {\small $F_\s$};
  \draw[->, >=latex] ({90-360/\n * (\s - 1)-\margin}:\radius) arc ({90-360/\n * (\s - 1)-\margin}:{90-360/\n * (\s-1/3)+\margin}:\radius) node[midway,fill=white,scale=0.9] {$h_\s$};
  \draw[->, >=latex] (G\s) -- (E\s) node[midway,fill=white,scale=0.9] {$g_\s$};
  \draw[-, dashed,>=latex] ({90-360/\n * (\s - 1/3)-\margin}:\radius) arc ({90-360/\n * (\s -1/3)-\margin}:{90-360/\n * (\s)+\margin}:\radius);
}

  \node[draw, circle](O1)  at (0,0) {\small $\mathsf{O}_{1}$};
     \draw[->, dashed,>=latex] (O1) -- (H1);
     \draw[->, dashed,>=latex] (O1) -- (H3);
  \node[draw, circle](D1)  at (5.5,6.5) {\small $\mathsf{D}_{1}$};
     \draw[->, dashed,>=latex,out = 22, in = 170] (E6) to (D1);
     \draw[->, dashed,>=latex,out = 80, in=293] (E2) to (D1);

\end{tikzpicture}

\caption{\label{fi:Roundabout2}The roundabout network $G_{2,4}$. 
For clarity only the origin and destination for player $i=1$ are shown. 
The corresponding strategies are
$s_{1}^{\ast}=\{h_{1},h_{2},g_{2}\}$ and $s_{1}=\{h_{3},h_{4},h_{5},h_{6},g_{6}\}$.
Dashed links have zero cost. 
}

\end{figure}

Consider players $i=1,\ldots,n$ with $p_{i}= q>0$.
Players have origin nodes $\mathsf{O}_{i}$, each of which has two outgoing links connecting to the roundabout at the nodes $A_{i}$ and $A_{i+k}$ (modulo $n$).
Similarly, players have destination nodes $\mathsf{D}_{i}$, each of which can be reached from the exit nodes $F_{i+k-1}$ and $F_{i+k+m-1}$ (modulo $n$). 
Each player $i$ has two undominated strategies that consist of entering the roundabout through one of the two available entrances and proceeding clockwise to the closest exit leading to $\mathsf{D}_{i}$: (1)
the \emph{short route} $s_{i}^{\ast}=\braces{h_{i},\ldots,h_{i+k-1},g_{i+k-1}}$, which uses $k$ resources of type $h_{j}$ and only one $g_{j}$, and (2) the \emph{long route} $s_{i}=\braces{h_{i+k},\ldots,h_{i+k+m-1},g_{i+k+m-1}}$, which uses $m$ resources of type $h_{j}$ and only one $g_{j}$. 

If all players choose the long route $s_{i}$, then each $h_{j}$ has a load of $m$ players and each $g_{j}$ a load of $1$,
so that every player experiences the same cost
\[
q[m\gamma (1-q+q m)+1].
\]
Shifting individually to the short route $s_{i}^{\ast}$ implies the cost
\[
q[k\gamma(1+q m)+1+q],
\]
so that, by the choice of $\gamma$, all players using $s_{i}$ constitutes an equilibrium.
The social cost of this equilibrium is
\begin{equation*}
C(\boldsymbol{s})=n q[m\gamma (1-q+q m)+1].
\end{equation*}
Now, the feasible routing where all players use the short route $s_{i}^{\ast}$ gives an upper 
bound for the optimal social cost. 
In this case the loads are $k$ on each $h_{j}$ and again $1$ on each $g_{j}$, so that
\[
C(\boldsymbol{s}^{\ast})\leq n q[k\gamma (1-q+q k)+1],
\]
which yields the following lower bound for the \ac{PoA}
\begin{equation}\label{eq:Apoab2}
\OPoA(\mathcal{C}_{\mathsf{aff}},q)\geq \frac{(1+q) m (1-q+q m)- k(1+q m)}{q k(1-q+q k)+m(1-q+q m)- k(1+q m)}\,.
\end{equation}

Take $z=1+q+\sqrt{q(2+q)}$ and $m=\floor{z k}$. 
Then $m>k$ for $k$ large enough. In fact,
\[
\frac{m}{k}=\frac{\floor{z k}}{k}\to z,\quad\text{as } k\to\infty.
\]
With this choice of $m$ both the numerator and denominator in \eqref{eq:Apoab2} grow quadratically 
with $k$, so that dividing by $k^{2}$ and letting $k\to\infty$ we get the asymptotic lower bound
\begin{equation}\label{eq:Apoab4}
\OPoA(\mathcal{C}_{\mathsf{aff}},q)\geq \frac{(1+q)z^{2}-z}{q+z^{2}-z}=\frac{1+q+\sqrt{q(2+q)}}{1-q+\sqrt{q(2+q)}}\,.
\end{equation}
In particular, it follows that for $q\in [\bar{q}_{0},\bar{q}_{1}]$, the bound of Theorem~\ref{th:upper} is tight.
\end{example}

\begin{example}
\label{ex:high}
Consider the network congestion game of Figure~\ref{fi:triangulo2}. The game contains $3$ players, 6 costly resources $\{h_{1},h_{2},h_{3},g_{1},g_{2},g_{3}\}$,
and 15 connecting links (the dashed links).
Assume that $p_{i}= q>0$ for all $i\in\mathcal{N}$.
The cost functions are $c_{e}(x)= q \cdot x$ for $e\in \{h_{1},h_{2},h_{3}\}$ and $c_{e}(x)=x$ for $e\in \{g_{1},g_{2},g_{3}\}$,
whereas the dashed links have zero cost. 
Ignoring the dashed links, each player $i$ has two pure strategies 
$\{h_{i},g_{i}\}$ and $\{h_{i-1},h_{i+1},g_{i+1}\}$ (all indices are modulo 3).

\begin{figure}[ht]
\centering
\begin{tikzpicture}
\def \nn {3}
\def \radiusg {2.5cm}
\def \radius {5cm}
\def \radiusn {3.3cm}
\def \radiusnn {1.6cm}
\def \radiusnnn {0.8cm}
\def \margin {4} 
\def \marginn {5} 
\def \marginnn {8} 
\foreach \s in {1,...,\nn}
{
\node[draw, circle,thick,scale=0.7](D\s) at ({30-360/\nn * \s}:\radiusg) {$\mathsf{D}_\s$};
\node[draw, circle,thick,scale=0.8](I\s) at ({90-360/\nn *\s}:\radius) {};
}

\draw[->,dashed,thick,red] (I1) to (D1);
\draw[->,dashed,thick] (I1) to (D3);
\draw[->,dashed,thick] (I2) to (D2);
\draw[->,dashed,thick,blue] (I2) to (D1);
\draw[->,dashed,thick] (I3) to (D3);
\draw[->,dashed,thick] (I3) to (D2);

\foreach \s in {1,...,\nn}  \node[draw, circle,thick,scale=0.8](G\s) at ({90-360/\nn*\s}:\radiusn) {};
\draw[->,thick,red] (G1) -- (I1) node[midway,fill=white,scale=0.65] {$g_{1}$};
\draw[->,thick,blue] (G2) -- (I2) node[midway,fill=white,scale=0.65] {$g_{2}$};
\draw[->,thick] (G3) -- (I3) node[midway,fill=white,scale=0.65] {$g_{3}$};

\foreach \s in {1,...,\nn}  \node[draw, circle,thick,scale=0.8](H\s) at ({90-360/\nn*\s}:\radiusnn) {};
\draw[->,thick,red] (H1) -- (G1) node[midway,fill=white,scale=0.65] {$h_{1}$};
\draw[->,thick,blue] (H2) -- (G2) node[midway,fill=white,scale=0.65] {$h_{2}$};
\draw[->,thick,blue] (H3) -- (G3) node[midway,fill=white,scale=0.65] {$h_{3}$};

\foreach \s in {1,...,\nn}  \node[draw, circle,thick,scale=0.7](O\s) at ({150-360/\nn*\s}:\radiusnnn) {$\mathsf{O}_\s$};
\draw[->,dashed,thick,red] (O1) to (H1);
\draw[->,dashed,thick,blue] (O1) to (H3);
\draw[->,dashed,thick] (O2) to (H2);
\draw[->,dashed,thick] (O2) to (H1);
\draw[->,dashed,thick] (O3) to (H3);
\draw[->,dashed,thick] (O3) to (H2);

\draw[->,dashed,bend right,thick] (G1) to (H3);
\draw[->,dashed,bend right,thick] (G2) to (H1);
\draw[->,dashed,bend right,thick,blue] (G3) to (H2);
\end{tikzpicture}
\caption{The triangle network. The pure strategies $\{h_{i},g_{i}\}$ and $\{h_{i-1},h_{i+1},g_{i+1}\}$ for player $i=1$ are highlighted in red and blue respectively. Dashed links have zero cost.}
\label{fi:triangulo2} 
\end{figure}
A strategy profile $\boldsymbol{s}$ is a \ac{BNE} if $s_{i}=\{h_{i-1},h_{i+1},g_{i+1}\}$ for all $i\in \mathcal{N}$, because
\[
2q(q+1)+1\leq q(2q+1)+(q+1).
\]
The corresponding expected total costs are
\[
C(\boldsymbol{s})=3(q^{2}\cdot 4q+ 2q (1 - q)\cdot q)+3q.
\]
Second, the strategy profile $\boldsymbol{s}^{\ast}$ in which $s_{i}=\{h_{i},g_{i}\}$ yields an expected total cost of
\[
C(\boldsymbol{s}^{\ast})=3(q^{2}+q).
\]
Therefore,
\[
\OPoA(\mathcal{C}_{\mathsf{aff}},q)\geq\frac{3q(1+2q+2q^{2})}{3q(1+q)}=1+q+\frac{q^{2}}{1+q}.
\]
In particular, it follows that for $q\in [\bar{q}_{1},1]$, the bound of Theorem~\ref{th:upper} is tight.
\end{example}

\section{Congestion Games with Linear Costs are Tight for $\PPoA(\mathcal{C}_{\mathsf{aff}},q)$}
\label{se:proaff}

The following example, which is a variation of the congestion game in \cite{ChrKou:STOC2005}, will show that the bound of Theorem~\ref{th:Tight-PPoA-Affine} is tight. Let $q\in(0,1]$ and select $\ell$ such that \eqref{eq:range} holds. 
We will construct a sequence of \aclp{BCG} with purely linear costs and $n$ homogeneous players with  probabilities $p_{i}\equiv q$, such that, as $n\to\infty$, the \acl{PPoA} approaches $\xi_{\ell}(q)=\Xi(q)$.

The resource set is composed of $m$ disjoint buckets $\mathcal{E}_{1},\ldots,\mathcal{E}_m$
where each $\mathcal{E}_k$ contains $\binom{n}{\ell}+\binom{n}{\ell+1}$ resources (see Figure~\ref{fi:illus} below). 
Specifically, for every subset $I\subseteq\{1,\ldots n\}$ of cardinality $\abs*{I}=\ell$ we include in $\mathcal{E}_k$ a resource $a_{k}^{I}$ with linear cost $\alpha_{1} x$, and  for each $J\subseteq\braces*{1,\ldots n}$ with $\abs*{J}=\ell+1$
a resource $b_{k}^{J}$ with cost $\alpha_{2} x$. 
Each player $i\in\braces*{1,\ldots,n}$ has $m+1$  strategies: either choose a single bucket $\mathcal{E}_k$ with all the resources in it, or select a player-specific strategy $s_{i}$ that contains all the resources $a_{k}^{I}$ and $b_{k}^{J}$ (across all buckets $\mathcal{E}_k$) whose label sets $I$ and $J$ include player $i$.

\begin{figure}[ht]
\centering
\begin{tikzpicture}
\node at (-1.5,0) {$\alpha_{1} x$};
\node at (-1.5,-1) {$\alpha_{2} x$};
\node[circle,draw=black,fill=blue!20, minimum size=5pt]  at (0,0) {$1$};
\node[circle,draw=black, minimum size=5pt]  at (1,0) {$2$};
\node[circle,draw=black, minimum size=5pt]  at (2,0) {$3$};
\node[circle,draw=black,fill=blue!20, minimum size=4pt]  at (0,-1) {$1,2$};
\node[circle,draw=black,fill=blue!20, minimum size=4pt]  at (1,-1) {$1,3$};
\node[circle,draw=black, minimum size=4pt]  at (2,-1) {$2,3$};
\node[circle,draw=black, minimum size=100pt]  at (1,-0.5) {};
\node at (1,1.5) {$\mathcal{E}_{1}$};
\node[circle,draw=black, fill=blue!20, minimum size=5pt]  at (4,0) {$1$};
\node[circle,draw=black, minimum size=5pt]  at (5,0) {$2$};
\node[circle,draw=black, minimum size=5pt]  at (6,0) {$3$};
\node[circle,draw=black, fill=blue!20, minimum size=4pt]  at (4,-1) {$1,2$};
\node[circle,draw=black, fill=blue!20, minimum size=4pt]  at (5,-1) {$1,3$};
\node[circle,draw=black, minimum size=4pt]  at (6,-1) {$2,3$};
\node[circle,draw=black, minimum size=100pt]  at (5,-0.5) {};
\node at (5,1.5) {$\mathcal{E}_{2}$};
\end{tikzpicture}
\caption{Example with $n=3$ players, $m=2$ buckets, and $\ell=1$. The $m+1$ strategies of player $i=1$ are either $\mathcal{E}_{1}$ or $\mathcal{E}_{2}$, or the player specific strategy $s_{1}$ shown in blue.}
\label{fi:illus}
\end{figure}

We will fix $\alpha_{1},\alpha_{2}$ so that the profile where each player $i$ selects her player-specific strategy $s_{i}$
turns out to be a (pure) Nash equilibrium. For this profile, the expected cost for every player $i$ is 
\begin{align*}
C_{i}(s_{i},\boldsymbol{s}_{-i})=m\,q\,\parens*{\binom{n-1}{\ell-1}(1+(\ell-1) q)\alpha_{1}+\binom{n-1}{\ell}(1+\ell q)\alpha_{2}},
\end{align*}
whereas a unilateral deviation to any of the alternative strategies $\mathcal{E}_k$ produces the cost 
\begin{align*}
C_{i}(\mathcal{E}_{k},\boldsymbol{s}_{-i})
&=
q\,\left(\binom{n-1}{\ell-1}(1+(\ell-1) q)\right.\\
&\quad\left.+\binom{n-1}{\ell}(1+\ell q)\right)\alpha_{1} \\
&\quad+q\,\left(\binom{n-1}{\ell}(1+\ell q)\right.\\
&\quad\left.+\binom{n-1}{\ell+1}(1+(\ell+1) q)\right)\alpha_{2}.
\end{align*}
The equilibrium conditions impose $C_{i}(s_{i},\boldsymbol{s}_{-i})=C_{i}(\mathcal{E}_{k},\boldsymbol{s}_{-i})$, which simplifies to
\begin{equation*}
\frac{(m(\ell+1)-n)(1+\ell q) + (\ell+1-n)q}{\ell+1}\,\alpha_{2} 
= \frac{(n-m\ell )(1+\ell q)+(m-1)\ell q}{n-\ell}\,\alpha_{1}
\end{equation*}
and can be achieved by setting 
\begin{align}
\label{eq:alpha1}
\alpha_{1}
&= \frac{(m(\ell+1)-n)(1+\ell q)+(\ell+1-n)q}{\ell+1},\\
\label{eq:alpha2}
\alpha_{2}
&= \frac{(n - m\ell)(1+\ell q)+(m-1)\ell q}{n -\ell}.
\end{align}
For the costs to be nondecreasing, the slopes $\alpha_{1},\alpha_{2}$ must be nonnegative, which translates into
\begin{align}
\label{eq:cota1}
\alpha_{1}\geq 0 & \iff \frac{m}{n}\geq \frac{(1+\ell q)+q \parens*{1-\frac{\ell+1}{n}}}{(\ell+1) (1+\ell q)},\\
\label{eq:cota2}
\alpha_{2}\geq 0 & \iff \frac{m}{n}\leq \frac{1+\ell q \parens*{1-\frac{1}{n}}}{\ell (1+\ell q-q)}.
\end{align}
To compute the social cost for this equilibrium we observe that the load of the resources $a_{j}^{I}$ and $b_{j}^{J}$ are distributed as $\Binomial(\ell,q)$ and $\Binomial(\ell+1,q)$, 
respectively. 
Letting  $\rho_{1}=\ell q(1+(\ell-1) q)$ and $\rho_{2}=(\ell+1)q(1+\ell q)$ denote  their corresponding second moments, we get
\begin{equation}
\label{eq:SCNE0}
\max_{\boldsymbol{s}\in\mathsf{NE}(\Gamma^{\boldsymbol{p}})}C(\boldsymbol{s}) \geq m\parens*{\binom{n}{\ell}\alpha_{1}\rho_{1}+\binom{n}{\ell+1}\alpha_{2}\rho_{2}}.
\end{equation}

Now, the prophet observes the demand $N\sim\Binomial\parens*{n,q}$ and, among all the possible rules,  can choose to distribute the players 
as uniformly as possible over the strategies $\mathcal{E}_{j}$: when the number of players present in the game is
$N=m\,\floor*{N/m}+j$ with $0\leq j<m$, 
put a load $\ceil*{N/m}$ on $j$ buckets and a load $\floor*{N/m}$ on the remaining $m-j$ buckets, which entails the social cost 
\begin{equation*}
\parens*{j\ceil*{\frac{N}{m}}^2+(m-j)\,\floor*{\frac{N}{m}}^2}\parens*{\binom{n}{\ell}\alpha_{1}+\binom{n}{\ell+1}\alpha_{2}}.    
\end{equation*}
Introducing the function  $L:\mathbb{R}\to[0,1/4]$ defined as 
\begin{equation}
\label{eq:L} 
L(x)=(x-\floor*{x})(\ceil*{x}-x), 
\end{equation}
and defining $M=N/m$ so that $j=m(M-\floor*{M})$,
we can simplify the expression
\begin{align*}
\parens*{j\ceil*{{N}/{m}}^2+(m-j)\,\floor*{{N}/{m}}^2}
&=\parens*{m\,\floor*{M}^2+j\,\parens*{\ceil*{M}^2-\floor*{M}^2}}\\
&=m\,\parens*{\floor*{M}^2+\parens*{M-\floor*{M}}\,\parens*{\ceil*{M}^2-\floor*{M}^2}}\\
&=m\parens*{M^2+L(M)}
\end{align*}
where this last identity is readily checked by considering the cases
$M=k$ and $k<M<k+1$ with $k\in\mathbb{N}$.
Hence, with $N\sim\Binomial(n,q)$, the expected social cost for the prophet is at most
\begin{align*}
C_{\mathsf{pr}}&\leq m\Expect\bracks*{(N/m)^2
+L(N/m)}\parens*{\binom{n}{\ell}\alpha_{1}+\binom{n}{\ell+1}\alpha_{2}},
\end{align*}
which combined with \eqref{eq:SCNE0} implies
\begin{equation}
\label{eq:PoA0}
\PPoA\geq R_{m,n}\eqdef
\frac{1}{\Expect\bracks*{(N/m)^2+L(N/m)}}\cdot
\frac{\binom{n}{\ell}\alpha_{1}\rho_{1}+\binom{n}{\ell+1}\alpha_{2}\rho_{2}}
{\parens*{\binom{n}{\ell}\alpha_{1}+\binom{n}{\ell+1}\alpha_{2}}}.
\end{equation}

Now, allowing $m$ and $n$ to increase to infinity subject to \eqref{eq:cota1}-\eqref{eq:cota2}, the quotients $m/n$ can approximate any element of the interval
\begin{equation*}
\bracks*{\frac{1+\ell q+q}{(\ell+1) (1+\ell q)}\,,\frac{1+\ell q}{\ell (1+\ell q-q)}}.
\end{equation*}
We note that $q$  lies in this interval if and only if it satisfies \eqref{eq:range}, which holds because of our choice of $\ell$. 
So, let us consider a sequence of instances with $m,n$ tending to $\infty$ with $m/n \to q$. 
Then
\begin{equation*}
\frac{N}{m}=\frac{n}{m}\frac{N}{n}\xrightarrow{\text{a.s.}}\frac{1}{q}\,q=1
\end{equation*}
and, because $L(\argdot)$ is continuous and bounded with $L(1)=0$, the portmanteau theorem implies that the term 
$\Expect\bracks*{(N/m)^2+L(N/m)}$ in the denominator
of $R_{m,n}$ converges to 1.  
The remaining terms in the quotient $R_{m,n}$ can be simplified as
\begin{equation*}
\tilde R_{m,n}\eqdef \frac{\binom{n}{\ell}\alpha_{1}\rho_{1}+\binom{n}{\ell+1}\alpha_{2}\rho_{2}}{\binom{n}{\ell}\alpha_{1}+\binom{n}{\ell+1}\alpha_{2}}=
\frac{(\ell+1)\alpha_{1}\rho_{1}+(n-\ell)\alpha_{2}\rho_{2}}{(\ell+1)\alpha_{1}+(n-\ell)\alpha_{2}}.
\end{equation*}
Dividing  numerator and denominator by $n$, and noting that \eqref{eq:alpha1}--\eqref{eq:alpha2} yield
 \begin{align*}
{(\ell+1)\alpha_{1}}/{n}&\to \ell(\ell+1)q^2-1,\\
{(n-\ell)\alpha_{2}}/{n}&\to 1-\ell(\ell-1)q^2,
\end{align*}
it follows that
\begin{equation*}
\tilde R_{m,n}\to
\frac{(\ell(\ell+1)q^2-1)\rho_{1}+(1-\ell(\ell-1)q^2)\rho_{2}}{2\ell q^2}.
\end{equation*}
Substituting the values of $\rho_{1}$ and $\rho_{2}$, and simplifying the resulting expression, we conclude
\begin{equation*}
\PPoA(\mathcal{C}_{\mathsf{aff}},q)\geq \lim R_{m,n}
= \lim\tilde R_{m,n}
=\frac{\ell(\ell+1)q^2+2\ell q+1}{2\ell q}=\xi_{\ell}(q)=\Xi(q).
\end{equation*}
This, combined with Proposition~\ref{pr:lm-linear}, completes the proof.

\section{Congestion Games with Polynomial Costs}
\label{se:propoly}

In this section we prove Propositions~\ref{pr:proph} and \ref{pr:polycosts}.

\proof{Proof of Proposition~\ref{pr:proph}.}
By considering only the monomial $c_{d}(x)=x^{d}\in\mathcal{P}_{d}$ and letting $\mathcal{P}$
denote the set of all pairs 
$(k,m)\in\mathbb{N}^{2}$ with $k\geq 1$,
we get the following lower bound 
\begin{equation*}
\PPoA(\mathcal{P}_{d},q)\geq \gamma(\{c_{d}\},q)
=\inf_{\mu\in [0,1)}\sup_{(k,m)\in\mathcal{P}}\frac{k\,c_{d}^{q}(m+1)-\mu\, m\, q\, c_{d}^{q}(m)}{k^{d+1}\,(1-\mu)\,q}.
\end{equation*}
Taking $q=1/n$ and choosing $k=1$ and $m=n$, we can further minorize
\begin{equation}
\label{eq:gamma-inf}
\gamma(\{c_{d}\},1/n)
\geq\inf_{\mu\in [0,1)}\frac{c_{d}^{1/n}(n+1)-\mu\,c_{d}^{1/n}(n)}{(1-\mu)/n}.
\end{equation}
Because $c_{d}^{1/n}(n+1)\geq c_{d}^{1/n}(n)$,
the infimum in \eqref{eq:gamma-inf} is attained at $\mu=0$.
Therefore, 
\begin{equation*}
\gamma(\{c_{d}\},1/n)
\geq n\,c_{d}^{1/n}(n+1)=\Expect[(1+Y_{n})^{d}],
\end{equation*}
with $Y_{n}\sim\Binomial(n,1/n)$. 
Letting $n\to\infty$, we have that $Y_{n}$ converges weakly towards a  random variable $Y\sim \Poisson(1)$. 
Therefore,
\begin{equation*}
\PPoA(\mathcal{P}_{d},0^+)\geq
\lim_{n\to\infty}\Expect[(1+Y_{n})^{d}]=\Expect[(1+Y)^{d}].
\end{equation*}
The latter can be computed as
\begin{equation*}
\Expect[(1+Y)^{d}]=\sum_{k=0}^{\infty}(k+1)^{d}\,\frac{\expo^{-1}}{k!}=\sum_{k=0}^\infty(k+1)^{d+1}\,\frac{\expo^{-1}}{(k+1)!}=\Expect[Y^{d+1}],
\end{equation*}
which is known to be  the $(d+1)$-th Bell number $B_{d+1}$ \citep[see][]{Dob:1877,Tou:AM1939}.
\Halmos
\endproof

To prove Proposition~\ref{pr:polycosts}, we will exploit the following property.

\begin{lemma}
\label{le:BinExp}
Let $h:\mathbb{N}^2\to\mathbb{R}$ such that $h(i,j)+h(j,i)=0$ for all $i,j\in\mathbb{N}$, and 
$h(i,j)>0$ if $i>j$. 
Let $Y\sim\Binomial(n,p)$ and $Z\sim\Binomial(m,p)$ be two 
independent Bernoulli variables. 
Then,  $\Expect[h(Y,Z)]=0$ if $n=m$ and 
$\Expect[h(Y,Z)]>0$ if $n>m$. 
\end{lemma}

\proof{Proof.}
The case $n=m$ follows because $\Expect[h(Y,Z)]=\Expect[h(Z,Y)]$ and $\Expect[h(Y,Z)+h(Z,Y)]=0$. 
Suppose next that $n>m$ and let  
$p_{i,n}=\binom{n}{i}p^i(1-p)^{n-i}$ denote the Binomial probabilities. 
Then,
\begin{align*}
\Expect[h(Y,Z)]
&=\sum_{i=0}^n\sum_{j=0}^m h(i,j)\,p_{i,n}\,p_{j,m}\\
&>\sum_{i=0}^m\sum_{j=0}^m h(i,j)\,p_{i,n}\,p_{j,m}\\
&=\sum_{0\leq j<i\leq m}[ h(i,j)\,p_{i,n}\,p_{j,m}+h(j,i)\,p_{j,n}\,p_{i,m}]\\
&=\sum_{0\leq j<i\leq m} h(i,j)\,[p_{i,n}\,p_{j,m}-\,p_{j,n}\,p_{i,m}],
\end{align*}
where the inequality is a consequence of the assumption $h(i,j)>0$ for $i>j$ and the last equality follows from $h(j,i)=-h(i,j)$. 
The conclusion $\Expect[h(Y,Z)]>0$ follows by using again the fact that $h(i,j)>0$ combined with $p_{i,n}\,p_{j,m}>p_{j,n}\,p_{i,m}$, where the latter follows itself from the inequality $\binom{n}{i}\binom{m}{j} > \binom{n}{j}\binom{m}{i}$ for all $i>j$ and $n>m$.    
\Halmos
\endproof

\proof{Proof of Proposition~\ref{pr:polycosts}.}
It suffices to show that any $(\lambda,\mu)$-smoothness parameter which is valid for 
$c_d^{q}(\argdot)$ is also valid for all $c_{j}^{q}(\argdot)$
with $j\leq d$.
Now, for any fixed degree $d$ and probability $q$,  the smallest feasible $\lambda$ 
as a function of $\mu\in [0,1)$ is
\begin{equation*}
\lambda_{d}^{q}(\mu)=\sup_{(k,m)\in\mathcal{S}}\Phi_{k,m}^{q}(d,\mu),\quad\text{where}\quad \Phi_{k,m}^{q}(d,\mu)\eqdef
\frac{ k\,c_{d}^{q}(1+m) -\mu\, m \, c_{d}^{q}(m)}{k\, c_{d}^{q}(k)}
\end{equation*}
so that it suffices to prove that $\lambda_{d}^{q}(\mu)$ increases with $d$. 

For $m=0$ we have $\Phi_{k,0}^{q}(d,\mu)=c_{d}^{q}(1)/c_{d}^{q}(k)$ whose maximum is 1 (attained for $k=1$), whereas for $k>m\geq 1$ we have $\Phi_{k,m}^{q}(d,\mu)\leq c_{d}^{q}(1+m)/c_{d}^{q}(k)\leq 1$. 
Therefore, in the supremum that gives $\lambda_{d}^{q}(\mu)$ it suffices to consider $1\leq k\leq m$ plus the special case $(k,m)=(1,0)$.
Moreover, for $\mu$ fixed the supremum can be further restricted to those pairs $(k,m)$ such that $\Phi_{k,m}^{q}(d,\mu)\geq 0$.
Altogether, to establish the monotonicity of $\lambda_{d}^{q}(\mu)$ with respect to $d$, it suffices to show that for $1\leq k\leq m$ with $\Phi_{k,m}^{q}(d,\mu)\geq 0$
the quotient
$\Phi_{k,m}^{q}(d,\mu)$ increases with $d$. 
To prove this, we will show  that $\Phi_{k,m}^{q}(d,\mu)\geq 0$ implies $\partial_{d} [\Phi_{k,m}^{q}(d,\mu)]\geq 0$.

Let us fix $1\leq k\leq m$ with $\Phi_{k,m}^{q}(d,\mu)\geq 0$. Take independent  variables $Y\sim\Binomial(m,q)$, $Z\sim\Binomial(m-1,q)$, and $X\sim\Binomial(k-1,q)$, so that 
\begin{equation*}
\Phi_{k,m}^{q}(d,\mu)=\frac{ k\,\Expect[(1+Y)^{d}] -\mu\, m \, \Expect[(1+Z)^{d}]}{k\, \Expect[(1+X)^{d}] }.    
\end{equation*}
Differentiating with respect to $d$, and denoting $f(i)=(1+i)^{d}$ and $g(i)=(1+i)^{d}\ln(1+i)$, it follows that $\partial_{d}[ \Phi_{k,m}^{q}(d,\mu)]\geq 0$ is equivalent to
\begin{equation*}
\parens*{k\,\Expect[g(Y)] -\mu\, m \, \Expect[g(Z)]}\Expect[f(X)]
\geq \parens*{k\,\Expect[f(Y)] -\mu\, m \, \Expect[f(Z)]}\Expect[g(X)].
\end{equation*}
Thus, using the independence, we deduce that $\partial_{d}[\Phi_{k,m}^{q}(d,\mu)]\geq 0$ if and only if
\begin{equation}
\label{eq:Binkm} 
k\,\Expect[g(Y)f(X)-f(Y)g(X)]
\geq  \mu\, m \, \Expect[g(Z)f(X)-f(Z)g(X)].
\end{equation}

Now, for $k=m$ this inequality follows directly by applying Lemma~\ref{le:BinExp} with 
\begin{equation*}
h(i,j)=g(i)f(j)-f(i)g(j)=(1+i)^{d}(1+j)^{d}\ln\parens*{\frac{1+i}{1+j}},
\end{equation*}
which shows that the right hand side of  \eqref{eq:Binkm} is $0$ whereas the expression on the left is strictly positive.
Similarly, for $k<m$ the right hand side of  \eqref{eq:Binkm} is strictly positive, and the inequality can be rewritten as
\begin{equation*}
\frac{\mu m}{k}\leq  \frac{\Expect[g(Y)f(X)-f(Y)g(X)]}{\Expect[g(Z)f(X)-f(Z)g(X)]}.
\end{equation*}
Given that the assumption $\Phi_{k,m}^{q}(d,\mu)\geq 0$ translates into 
\begin{equation*}
\frac{\mu m}{k}\leq  \frac{\Expect[f(Y)]}{\Expect[f(Z)]},
\end{equation*}
it remains to show that 
\begin{equation*}
\frac{\Expect[f(Y)]}{\Expect[f(Z)]}\leq \frac{\Expect[g(Y)f(X)-f(Y)g(X)]}{\Expect[g(Z)f(X)-f(Z)g(X)]}.
\end{equation*}
This is equivalent to
\begin{equation*}
\Expect[f(Y)]\Expect[g(Z)f(X)-f(Z)g(X)]\leq \Expect[g(Y)f(X)-f(Y)g(X)]\Expect[f(Z)]
\end{equation*}
which further simplifies to 
\begin{equation*}
\Expect[g(Y)f(Z)-f(Y)g(Z)]\Expect[f(X)]\geq 0.
\end{equation*}
The latter follows again from Lemma~\ref{le:BinExp} which gives $\Expect[h(Y,Z)]>0$.
Thus $\Phi_{k,m}^{q}(d,\mu)\geq 0$ implies $\partial_d [\Phi_{k,m}^{q}(d,\mu)]\geq 0$, as was to be proved.
\Halmos
\endproof

\section{List of Symbols}
\label{se:list-of-symbols}

\begin{longtable}{p{.13\textwidth} p{.82\textwidth}}
$a_{e}$ & slope of the affine cost function, defined in \eqref{eq:affine-costs} \\

$b_{e}$ & constant of the affine cost function, defined in \eqref{eq:affine-costs} \\

$B_{d}$ & $d$-th Bell number, defined in Proposition~\ref{pr:proph}\\

$\mathcal{B}_{k}$ & bypass network, shown in Figure~\ref{fi:bypass} \\

$c_{e}$ & cost of using resource $e$, introduced in \eqref{eq:Ci} \\

$C$ & expected social cost, defined in \eqref{eq:SC}, \eqref{eq:c-soc-stoch}, and \eqref{eq:ESC} \\

$C_{i}$ & cost function of player $i$, defined in \eqref{eq:Ci} and \eqref{eq:c-i-stoch} \\

$C_{\mathsf{ord}}$ & ordinary social optimum expected cost, defined in \eqref{eq:ordinary-social-optimum} \\

$C_{\mathsf{pr}}$ & prophet social optimum expected cost, defined in \eqref{eq:PESC} \\

$\mathcal{C}$ & class of cost functions \\

$\mathcal{C}^{q}$ & class of cost functions derived by the Bernoulli game, defined in Definition~\ref{de:Cp} \\

$\mathcal{C}_{\mathsf{aff}}$ & class of affine cost functions, defined in \eqref{eq:affine-costs} \\

$d$ & degree of  \ac{BPR} functions\\

$\mathsf{D}$ & destination \\

$e$ & resource \\
$\mathcal{E}_{k}$ & resource bucket, defined in the proof of Theorem~\ref{th:Tight-PPoA-Affine} \\
$f$ & difference between maximizer and its closest integer \\

$\mathcal{F}$ & finite subcover of $[0,M]$ \\

$\mathcal{G}$ & class of games \\

$G$ & graph \\

$h_{1}$ & number of buckets of type $1$ \\

$h_{2}$ & number of buckets of type $2$ \\

$i$ & player \\

$\mathcal{I}$ & subset of players \\

$L(x)$ & $(x-\floor*{x})(\ceil{x}-x)$, defined in \eqref{eq:L} \\

$\widehat{m}$ & unconstrained maximizer, defined in \eqref{eq:mhat} \\

$\overline{m}$ & closest integer \\
$M$ & constant strictly smaller than $\gamma_{\mathsf{pr}}(\mathcal{C},q)$ \\
$n$ & number of players \\

$n_{e}$ & number of players who use resource $e$, defined in \eqref{eq:N-e} \\

$\mathcal{N}$ & set of players \\

$N_{e}$ & random number of players who use resource $e$, defined in \eqref{eq:N-e-stoch} \\

$N^{-i}_{e}$ & random number of players different from $i$ who use resource $e$, defined in \eqref{eq:N-e-stoch} \\

$N_{e}^{Y}(\boldsymbol{s})$ & $\sum_{j\in\mathcal{N}}Y_{j}\mathds{1}_{\{e\in s_{j}\}}$, defined in Corollary~\ref{co:util2} \\

$\mathsf{NE}^{\mathsf{coa}}$ & set of coarse correlated equilibria \\

$\mathsf{NE}^{\mathsf{cor}}$ & set of correlated equilibria \\
 
$\mathsf{NE}$ & set of Bayes-Nash pure equilibria \\

$\mathsf{NE}^{\mathsf{mix}}$ & set of  Bayes-Nash mixed equilibria \\

$\OPoA$ & ordinary \acl{PoA}, defined in \eqref{eq:PoAnp} \\

$\mathsf{O}$ & origin \\

$p_{i}$ & probability that player $i$ is active \\

$\boldsymbol{p}$ & vector of probabilities of being active \\

$\mathcal{P}_{d}$ & class of polynomials of degree $d$ with nonnegative coefficients\\

$\PoA(\Gamma)$ & \acl{PoA} of game $\Gamma$, defined in \eqref{eq:PoA-game} \\

$\PoA(\mathcal{C})$ & \acl{PoA} of class $\mathcal{C}$, defined in \eqref{eq:PoA-class} \\

$\PPoA$ & prophet \acl{PoA}, defined in \eqref{eq:PoAp} \\

$\CoaPPoA$ & prophet \acl{PoA} for coarse correlated equilibria, defined in \eqref{eq:PoAp-coarse}\\

$\CorPPoA$ & prophet \acl{PoA} for correlated equilibria, defined in \eqref{eq:PoAp-corr}\\

$\MPPoA$ & prophet \acl{PoA} for mixed equilibria, defined in \eqref{eq:PoAp-mixed} \\

$q$ & upper bound for $p_{i}$, $i\in\mathcal{N}$ \\

$\bar{q}_{0}$ & $=1/4$, defined in Theorem~\ref{th:upper} \\

$\bar{q}_{1}$ & $\approx 0.3774$,  unique real root of $8 q^{3}+4q^{2}=1$, defined in Theorem~\ref{th:upper} \\

$\tilde{q}$ & $\zeta/\nu\approx q$ \\

$Q$ & defined in \eqref{eq:Q} \\
$Q^{c}(x)$ & $(x-\floor{x})\,\ceil{x}\,c(\ceil{x})+(1-x+\floor{x})\,\floor{x} \,c(\floor{x}) $, defined in \eqref{eq:Qc} \\
$Q_{q}(k,m)$ & defined in \eqref{eq:dd0} \\

$r_{i}$ & $p_{i}/q$, defined in Corollary~\ref{co:util2} \\

$R$ & $\sum_{i\in\mathcal{N}}\Expect_{s_{i}^{\ast}\sim\sigma_{i}^{\ast}}
\bracks*{C_{i}(s_{i}^{\ast},\boldsymbol{s}_{-i})}$, 
defined in \eqref{eq:initial} \\

$\boldsymbol{s}$ & strategy profile \\

$\hat{\boldsymbol{s}}$ & equilibrium strategy profile, defined in \eqref{eq:Nash} \\

$\boldsymbol{s}^{\ast}$ & ordinary optimum strategy profile \\

$\mathcal{S}$ & set of strategy profiles \\

$s_{i}$ & strategy of player $i$ \\

$\mathcal{S}_{i}$ & strategy set of player $i$ \\

$\boldsymbol{s}^{\mathcal{I}}$ & optimal strategy profile when the realized player set is $\mathcal{I}$, defined in \eqref{eq:optimal-prophet-action} \\ 

$\mathcal{T}$ & $\braces{(c,k,m) \colon c\in\mathcal{C}\setminus\{c_{0}\}, k \in \mathbb{N}_{+},m\in\mathbb{N} }$, defined in \eqref{eq:triples} \\

$W_{i}$ & indicator of player $i$ being active \\

$x_{e}$ & number of players who choose resource $e$ \\

$X$ & random variable $\sim\Binomial(k-1,q)$, defined in \eqref{eq:cq} \\

$X_{e}$ & random load on resource $e$ \\

$y$ & $\mu/(1-\mu)$, defined in \eqref{eq:y} \\

$Y$ & $\sum_{i=1}^mY_{i}$, defined in Lemma~\ref{le:CBH} \\ 

$Y_{i}$ & random variable $\sim \Bernoulli(r_{i})$, defined in Lemma~\ref{le:CBH} \\

$z$ & $(1+y_{1,q})/2y_{1,q}$, defined in \eqref{eq:z} \\

$Z_{i}$ & random variable $\sim \Bernoulli(q)$, defined in Lemma~\ref{le:CBH} \\

$\alpha$ & subgradient of $\psi_{q}$, defined in \eqref{eq:subgrad} \\

$\beta_{c,k,m}(\omega)$ & $
\frac{\frac{1}{q}\,c^{q}(1\!+\!m)}{c(k)}\omega+\frac{m \, c^{q}(m)}{k\, c(k)}(1-\omega)$, defined in \eqref{eq:affine-functions}\\

$\gamma(\mathcal{C})$ & bound for the \acl{PoA} for the class $\mathcal{C}$, defined in \eqref{eq:PoA_Bnd} \\

$\gamma_{\mathsf{pr}}(\mathcal{C},q)$ & bound for the \acl{PPoA}, defined in \eqref{eq:BndPr} \\

$\Gamma$ & game \\

$\Gamma^{\boldsymbol{p}}$ & \acl{BCG} \\

$\zeta$ & integer such that $\tilde{q}=\zeta/\nu\approx q$, defined in \eqref{eq:rational-approx}\\

$\eta$ & constant in $[0,1]$ defined in \eqref{eq:peq} \\
$\theta$ & integer such that $\tilde\eta=\theta/\tau\approx\eta $, defined in \eqref{eq:rational-approx} \\

$\kappa$ & $n\, \nu\, k$ \\

$\tilde\kappa$ & $\kappa+mj$ \\
$\lambda$ & parameter of $(\lambda,\mu)$-smoothness, defined in \eqref{eq:def-lambda-mu} \\
$\mu$ & parameter of $(\lambda,\mu)$-smoothness, defined in \eqref{eq:def-lambda-mu} \\

$\nu$ & integer such that $\tilde{q}=\zeta/\nu\approx q$, defined in \eqref{eq:rational-approx}\\

$\xi_{\ell}(q)$ & $\frac{\ell(\ell+1)q^2+2\ell q+1}{2\ell q}$, defined in \eqref{eq:Xi-xi} \\
$\Xi(q)$ & $\inf_{\ell\geq 1}\xi_{\ell}(q)$, defined in \eqref{eq:Xi-xi} \\

$\tau$ & integer such that $\tilde\eta=\theta/\tau\approx\eta $, defined in \eqref{eq:rational-approx} \\

$\Phi$ & potential, defined in \eqref{eq:potential} \\

$\psi_{q}$ & defined in \eqref{eq:vrphi} \\

$\psi^{k,m}_{q}$ & defined in \eqref{eq:vrphikm} \\

$\psi_{\infty}$ & defined in \eqref{eq:varphi-infty} \\

$\Psi(\omega)$ & defined in Lemma~\ref{le:boundalt} \\

$\Psi_{\mathcal{C}',n}(\argdot)$ & defined in \eqref{eq:Psi-finite} \\

$\omega$ & $1/(1-\mu)$ \\

$\stirling{d}{j}$ & Stirling number of the second kind\\
\end{longtable}

\end{APPENDICES}

\end{document}